\newtheorem{theorem}{Theorem}
\newtheorem{lemma}{Lemma}
\newtheorem{corollary}{Corollary}
\newtheorem{remark}{Remark}
\newcommand{\tabcaption}{\def\@captype{table}\caption}
\title{Rack-Aware MSR Codes with Linear Field Size and Smaller Sub-Packetization for Tolerating Multiple Erasures}
\date{}
\author{\IEEEauthorblockN{Hengming Zhao, Dianhua Wu,  Minquan Cheng}
\thanks{H. Zhao  is with the Key Lab of Education Blockchain and Intelligent Technology, Ministry of Education, and also with the Guangxi Key Lab of Multi-source Information Mining $\&$ Security, Guangxi Normal University
	Guilin, 541004, China, and also with School of Mathematics and Statistics, Nanning Normal University, Nanning 530100, China (e-mail: hengmingzh@163.com).}
\thanks{D. Wu is with Guangxi Key Lab of Multi-source Information Mining $\&$ Security, Guangxi Normal University, Guilin 541004, China; D. Wu is also with The Center for Applied Mathematics of Guangxi (Guangxi Normal University), Guilin 541006, China (e-mail: dhwu@gxnu.edu.cn). }
\thanks{M. Cheng is with the Key Lab of Education Blockchain and Intelligent Technology, Ministry of Education, and also with the Guangxi Key Lab of Multi-source Information Mining $\&$ Security, Guangxi Normal University, 541004 Guilin, China  (e-mail: chengqinshi@hotmail.com).}
}
\begin{document}
		\maketitle
\begin{abstract}In an $(n,k,d)$ rack-aware storage model, the system consists of $n$ nodes uniformly distributed across $\bar{n}$ successive racks, such that each rack contains $u$ nodes of equal capacity and the reconstructive degree satisfies $k=\bar{k}u+v$ where $1\leq\bar{k}\leq\lfloor k/u\rfloor$ and $0\leq v\leq u-1$. Suppose there are $h\geq1$ failed nodes in a rack (called the host rack). Then together with its surviving nodes, the host rack downloads recovery data from $\bar{d}$ helper racks and repairs its failed nodes. In this paper, we focus on studying the rack-aware minimum storage regenerating (MSR) codes for repairing $h$ failed nodes within the same rack. By using the coupled-layer construction with the alignment technique, we construct the first class of rack-aware MSR codes for all $\bar{k}+1\leq\bar{d}\leq\bar{n}-1$ which achieve the small sub-packetization $l=\bar{s}^{\lceil\bar{n}/\bar{s}\rceil}$ where the field size $q$ increases linearly with $n$ and $\bar{s}=\bar{d}-\bar{k}+1$. In addition, these codes achieve the optimal repair bandwidth for $1\leq h\leq u-v$, and the asymptotically  optimal repair bandwidth for $u-v+1\leq h\leq u$. In particular, they achieve the optimal access when $h=u-v$. It is worth noting that the existing rack-aware MSR codes which achieve the same sub-packetization $l=\bar{s}^{\lceil\bar{n}/\bar{s}\rceil}$ are only known for the special case of $\bar{d}=\bar{n}-1$, $h=1$, and the field size is much larger than ours. Then, based on our first construction we further develop another class of explicit rack-aware MSR codes with even smaller sub-packetization $l=\bar{s}^{\lceil\bar{n}/(\bar{s}+1)\rceil}$ for all admissible values of $\bar{d}$. 
\end{abstract}
\begin{IEEEkeywords}
Distributed storage, MDS array codes, rack-aware MSR codes, sub-packetization
\end{IEEEkeywords} 	 
\section{Introduction}

Distributed storage systems (DSSs) are commonly implemented by major companies such as Facebook and Google, using multiple independent but unreliable devices (termed nodes) for data storage. In an $(n, k, d)$ distributed storage system, an original file is encoded across $n$ nodes such that it can be recovered from any $k$ out of $n$ nodes. This property, called the $(n, k)$ maximum distance separable (MDS) property, ensures optimal fault tolerance, where $k$ is known as the reconstructive degree.
In the event of a single node failure, exact repair can be achieved by utilizing any $d$ helper nodes selected from the surviving $n-1$ nodes. The storage capacity denotes the per-node maximum data volume, while repair bandwidth quantifies the total data transferred by helper nodes during failure recovery.  These metrics characterize storage redundancy and repair efficiency respectively. The  work of Dimakis et al. \cite{DGWWR} derived the optimal storage-repair bandwidth trade-off via cut-set bounds, introducing minimum storage regenerating (MSR) codes  and minimum bandwidth regenerating (MBR) codes as the two extreme points on this trade-off curve. For details of the MBR codes, please see \cite{Rashmi2011, Ernvall2014, Zhang2020} and the references therein.

The MSR codes with minimum (optimal) repair bandwidth are widely studied such as \cite{Rashmi2011,LLT,LTT,LWHY,RSE,TYLH,TWB,VBV,YB1,YB2,HLSH,C-Barg,HLH,ZZ,WC}. However, in practical large-scale storage systems, multiple-node failures are more common than single-node failures. Moreover, the lazy repair strategy, which delays recovery operations until accumulating $h$ ($h\leq n-k$) node failures, is implemented in systems \cite{Cadambe2013}. This strategy reduces repair costs by avoiding immediate reconstruction after each single failure. 
In multi-node failure scenarios, two primary repair models have emerged in the literature, i.e., the centralized repair \cite{Cadambe2013} and the cooperative repair \cite{HXWZL}. In the centralized model, the data center repairs all failed nodes. In the cooperative repair model, the recovery of $h$ failed nodes involves $h$ new nodes executing a two-phase process, i.e, downloading data from any $d$ helper nodes and mutual data exchanging among the $h$ new nodes to complete recovery. There are many studies on the cooperative repair model such as \cite{SH,Li2014,LCT,YB,Ye,ZZW, Zhang2025, LWCT} and references therein.

In homogeneous distributed storage systems, every node and all communication between them are considered equal. However, modern data centers often adopt hierarchical topologies by organizing nodes into racks. Specifically, in an $(n,k,d)$ rack-aware storage model, the system consists of $n$ nodes uniformly distributed across $\bar{n}$ consecutive racks, with each rack containing $u$ nodes of identical capacity and the reconstructive degree  $k=\bar{k}u+v$ where $1\leq\bar{k}\leq\lfloor k/u\rfloor$ and $0\leq v\leq u-1$. We refer to a rack with failed nodes as a host rack. Assume that the host rack has $h$ failed nodes.  Then, the host rack utilizes the data downloaded from $\bar{d}$ helper racks and its internal $u-h$ helper nodes (referred to as local nodes) to repair its $h$ failed nodes. Clearly, the number of helper nodes satisfies $d=\bar{d}u+u-h$.
As rack sizes continue to grow, conventional networking equipment becomes inadequate, requiring specialized switches and routers to maintain sufficient bandwidth \cite{VFF}. For a rack-aware distributed storage model, the repair bandwidth is defined as the total amount of inter-rack data transmission required for failures recovery, while intra-rack communication is typically assumed to have zero bandwidth overhead. The MDS array codes that achieve the optimal repair bandwidth in the rack-aware storage model are referred to as rack-aware MSR codes.

Recently, Hou et al. in \cite{HLSH} presented the optimal trade-off between storage  and repair bandwidth, and proposed constructions for rack-aware MSR codes. However, their constructions require certain constraints on the parameters and rely on sufficiently large finite fields. Chen and Barg \cite{C-Barg} first proposed explicit constructions of rack-aware MSR codes with sub-packetization $l=\bar{s}^{\bar{n}}$ for single-node failure recovery where $\bar{s}=\bar{d}-\bar{k}+1$. Additionally, they proposed a construction achieving the optimal access when  $v=u-1$. Hou et al. \cite{HLH} introduced a coding framework capable of transforming MSR codes into rack-aware MSR codes, the resulting constructions are only guaranteed to exist over sufficiently large finite fields.
Zhou and Zhang \cite{ZZ} provided a construction of rack-aware MSR codes for all admissible parameters with small sub-packetization $l=\bar{s}^{\lceil\bar{n}/(u-v)\rceil}$ and reduced field size. This construction is actually an optimal access construction when  $v=u-1$.  Chen \cite{Chen2022} constructed a family of optimal access rack-aware MSR codes with sub-packetization $l=\bar{s}^{\bar{n}}(u-v)^{\bar{n}}$ for any $0\leq v\leq u$. Later, Wang and Chen \cite{WC} further reduce the sub-packetization to  $l=\bar{s}^{\bar{n}}$. 

The aforementioned rack-aware MSR codes are limited to single-node failure recovery. However, in rack-aware distributed storage systems, multiple node failures occur more frequently than single node failures. For repairing multiple node failures, Zhang and Zhou \cite{Zhou2022} considered  a relaxed repair model for failed nodes in the same rack. They enhanced erasure tolerance at the expense of storage efficiency and their proposed minimum storage codes sacrifice the MDS property. 
Wang et al. \cite{WZLT} constructed a class of rack-aware MSR codes with sub-packetization $l=\bar{s}^{\bar{n}}$ for multiple failures within the same rack, achieving the optimal access when $h=u-v$. Subsequently, in \cite{Wang2025}, they presented another class of rack-aware MSR codes that also feature optimal access and the same sub-packetization $l=\bar{s}^{\bar{n}}$, but designed for the scenario of multiple sequential failures within a rack. As far as we know, there are few studies on optimal access rack-aware MSR codes for repairing multiple failed nodes. The main existing rack-aware MSR code are listed in Table \ref{tab-Schemes}. 
\begin{table}[http!]
	\renewcommand{\arraystretch}{2}
	\setlength\tabcolsep{3.5pt} 
	\centering
	\caption{The existing codes and new codes in this paper where $\bar{s}=\bar{d}-\bar{k}+1$, $k=\bar{k}u+v$, $h$ and $\textit{O}_{\bar{s}}(n)$ represent the largest erasure tolerance and a linear function with $n$ respectively. 
		\label{tab-Schemes}
	} 
	\begin{tabular}{|c|c|c|c|c|c|c|c|c|}	\hline
		&  sub-packetization  $l$ & field size $q$ &optimal access &  $\bar{d}$ & $h$ \\ 	\hline
		\cite{C-Barg} &	 $\bar{s}^{\bar{n}}$ & $u|(q-1)$,\ $q\geq n+\bar{s}-1$ &  optimal if $v=u-1$& $\bar{k}+1\leq\bar{d}\leq \bar{n}-1$  & 1\\ \hline
		\cite{HLH} &	 $\bar{s}^{\lceil {\bar{n}}/{\bar{s}} \rceil}$ & $q> kl\sum_{i=1}^{min\{\bar{k},\bar{n}\}}{n-\bar{n}\choose{k-i}}{\bar{n}\choose i}$ &  no & $\bar{d}=\bar{n}-1$  & 1\\
		\hline
		\cite{ZZ} &	 $\bar{s}^{\lceil {\bar{n}}/{(u-v)} \rceil}$ & $u|(q-1)$, $q>n$ &  optimal if $v=u-1$ & $\bar{k}+1\leq\bar{d}\leq \bar{n}-1$  & 1\\
		\hline
		\cite{WZLT} &	 $\bar{s}^{\bar{n}}$ & $u|(q-1)$, $q>n$ &  optimal  & $\bar{k}+1\leq\bar{d}\leq \bar{n}-1$  & $u-v$\\
		\hline
		\cite{WZLT} &	 $\bar{s}^{\bar{n}}$ & $u|(q-1)$, $q>n$ &  no  & $\bar{k}+1\leq\bar{d}\leq \bar{n}-1$  & $1\leq h<u-v$\\
		\hline
		Theorem \ref{th1} &	$\bar{s}^{\lceil{\bar{n}}/{\bar{s}}\rceil}$  (best-known)  & $u|(q-1)$, $q\geq O_{\bar{s}}(n)$&  optimal  & $\bar{k}+1\leq\bar{d}\leq \bar{n}-1$  & $u-v$\\
		\hline
		Theorem \ref{th1} &	 $\bar{s}^{\lceil{\bar{n}}/{\bar{s}}\rceil}$ & $u|(q-1)$, $q\geq O_{\bar{s}}(n)$ &  no  & $\bar{k}+1\leq\bar{d}\leq \bar{n}-1$  & $1\leq h<u-v$\\
		\hline
		Theorem \ref{th2} &	 $\bar{s}^{\lceil{\bar{n}}/{(\bar{s}+1)}\rceil}$ (best-known) & $u|(q-1)$, $q\geq O_{\bar{s}}(n)$ &  no  & $\bar{k}+1\leq\bar{d}\leq \bar{n}-1$  & $1\leq h\leq u-v$ \\
		\hline
	\end{tabular}
\end{table}
\subsection{Contributions}
In this paper, we focus on the rack-aware MSR codes for repairing $h\geq 1$ failed nodes within the same rack. The work in \cite{WZLT} constructed a family of such codes with a sub-packetization level of $\bar{s}^{\bar{n}}$ via permutation matrices. In their approach, the number of racks is identity-mapped to the exponent of the sub-packetization level. However, since every row of each parity-check submatrix contains only one non-zero element, this method makes it difficult to further reduce the sub-packetization level. To significantly reduce the sub-packetization level, we partition the racks into equal-sized groups and then, using an alignment technique to extend coupled-layer construction, identity-map the number of these rack groups to the exponent of the sub-packetization level to construct the parity-check matrix. As a result, the proof of the MDS property for our codes no longer reduces to showing the invertibility of a  Vandermonde matrix; indeed, it becomes more complex. Nevertheless, we can prove this property using the Combinatorial Nullstellensatz. 
For the repair proccess, as the host rack downloads coded symbols from each helper rack, utilizing this data requires transforming the rack-aware MSR code $\mathcal{C}$ into an intimate code $\bar{\mathcal{C}}(w)$ for $w \in [h]$, imposing the requirement that  $\bar{\mathcal{C}}(w)$ is MDS. Consequently, the parity-check matrix of $\mathcal{C}$ must be carefully designed to ensure that both $\mathcal{C}$ itself and every $\bar{\mathcal{C}}(w)$ satisfy the MDS property. Although the expression forms of our rack-aware MSR codes are not very different from the codes in \cite{LWHY}, our approach to repair is fundamentally different. We construct two classes of rack-aware MSR codes with smaller sub-packetization level and linear field size which are listed in Table \ref{tab-Schemes}.  Specifically the main results can be summarized as follows. 
\begin{itemize}
\item We construct our first family of rack-aware MSR codes, i.e., Theorem \ref{th1}, which have the optimal repair bandwidth when $1\leq h\leq u-v$ and asymptotically optimal repair bandwidth when $u-v+1\leq h\leq u$. In addition,  the field size is linear in $n$, the sub-packetization is exponential in $\lceil\bar{n}/\bar{s}\rceil$, and our array codes have the optimal access when  $h=u-v$. 
	
\item Inspired by our first construction, we obtain our second family of rack-aware MSR codes, i.e., Theorem \ref{th2},  which also have the  optimal repair bandwidth for $1\leq h\leq u-v$ and asymptotically optimal repair bandwidth for $u-v+1\leq h\leq u$. In addition,  the field size is linear in $n$ and the sub-packetization is exponential in $\lceil\bar{n}/(\bar{s}+1)\rceil$. 
\end{itemize}

By Table \ref{tab-Schemes}, when $h=1$, the code in Theorem \ref{th1} requires the same sub-packetization level as that in \cite{HLH} but operates over a smaller field size; when $h>1$, the code in Theorem \ref{th1} achieves a significantly smaller sub-packetization level of $\bar{s}^{\lceil\bar{n}/\bar{s}\rceil}$ compared to the $ \bar{s}^{\bar{n}}$ required in \cite{WZLT} while maintaining optimal access when $h=u-v$; the code in Theorem \ref{th2} attains the smallest known sub-packetization $ \bar{s}^{\lceil \bar{n}/(\bar{s}+1) \rceil} $ among existing rack-aware MSR constructions.

%
%

\subsection{Organization and notations} 

The rest of this paper is arranged as follows. Section \ref{sect-preliminaries} introduces the system model for multiple failures and its related  fundamental terminology and concepts. Main results are presented in Section \ref{sect-main}. Proofs of main results are shown in  Sections \ref{Sect3} and \ref{Sect4}. Finally, Conclusion is included in Section \ref{sect-conclusion}.
  
In this paper, we use boldface capital letters, boldface lowercase letters, and calligraphic letters to denote matrices, vectors, and sets, respectively.
In addition, the following notations are used unless otherwise stated.
\begin{itemize}
	\item  $\mathbb{F}_q$ is a finite field with order $q$,  $\xi$ is a primitive element of $\mathbb{F}_q$,  $\theta$ is an element of $\mathbb{F}_q$ with multiplicative order $u$ such that $u|(q-1)$.
	\item ${\bf I}_m$ is the $m\times m$ identical matrix over $\mathbb{F}_q$.
	\item $|\cdot|$ denotes the cardinality of a set and $\top$ denotes the transpose operator.
	\item For positive integers $a$ and $m$, $[m]$ denotes the set $\{0,1,\ldots,m-1\}$, and $a+[m]=\{a+i:i\in[m]\}$.
	\item For any integers $a$ and $b$ with $a<b$, $[a,b)$ and $[a,b]$ denote the  sets $\{a,a+1,\ldots,b-1\}$  and
	$\left\{ a,a+1,\ldots,b\right\}$, respectively. 
	\item Given an $m$-vector ${\bf x}=(x_0,x_1,\ldots,x_{m-1})$, for any positive integer $a$, an element $\vartheta\in\mathbb{F}_q$ and any subset of $\mathcal{A}\subseteq [m]$, let $\vartheta{\bf x}_{\mathcal{A}}^a:=(\vartheta x_i^a)_{i\in\mathcal{A}}$. For any nonnegative integer set $\mathcal{B}\subseteq[u]$, let $\vartheta^{\mathcal{B}}:=\{\vartheta^{i}: i\in\mathcal{B}\}$ and $\vartheta^{\mathcal{B}} x_{\mathcal{A}}^a=\{\vartheta^{i}x_{\mathcal{A}}^a: i\in \mathcal{B}\}$.
	\item For any element $x\in \mathbb{F}_q$ and any positive integer $t$, we 	
	define a column vector of length $t$ as $L^{(t)}(x)=(1,x,\ldots,x^{t-1})^{\top}$. 
	\item Given an integer $i\in[s^n]$ where $n$ is a positive integer, with $i=\sum_{z=0}^{n-1}i_zs^z$ for integer $i_z\in[s]$, we refer to $i=(i_0,i_1,\ldots,i_{n-1})$ as the $s$-ary representation of $i$
	
\end{itemize}

\section{Preliminaries} 
\label{sect-preliminaries}

In this section, we first introduce the system model, then review fundamental concepts and present relevant results.

\subsection{The system model } 
\label{subsect-system} 
Consider an $(n, k, l)$ array code $\mathcal{C}$ over a finite field $\mathbb{F}_q$. To be specific, each codeword in $\mathcal{C}$ is of the form  
$\mathbf{C} = (\mathbf{c}_0, \mathbf{c}_1, \ldots, \mathbf{c}_{n-1})$,  
where $\mathbf{c}_i=(c_{i,0}, c_{i,1}, \ldots, c_{i,l-1})^\top \in \mathbb{F}_q^l$ for $i\in[n]$.  We assume $\mathcal{C}$ is a
linear subspace over $\mathbb{F}_q$. The array code $\mathcal{C}$ is MDS if each codeword in $\mathcal{C}$ can be reconstructed from any $k$ of its coordinates. In a distributed storage system, assume that there is an original data which is independently and identically
uniformly distributed in $\mathbb{F}_q^{kl}$. We divide the original data into $k$ blocks, encoded into a codeword $\mathbf{C}\in \mathcal{C}$, and stored in $n$ nodes. Given two positive integers $\bar{n}$ and $u$ satisfying that $n=\bar{n}u$, we divide $n$ nodes $[n]$ into consecutive $\bar{n}$ groups (referred to as racks) each of which has $u$ nodes. In this case, for each integer $i'\in[\bar{n}]$ the $i'$-th rack consists of  $u$ nodes $({\bf c}_{i'u},{\bf c}_{i'u+1},\ldots,{\bf c}_{i'u+u-1})$. Here, we make no distinction between the coordinates of the codeword and their physical representations as storage nodes, uniformly referring to both as nodes.
%
%
Each rack is equipped with a relayer node that has access to the contents of the other nodes within the same rack.  We focus on studying the case of $u\in[1,k]$. Otherwise, a single node failure could be trivially repaired by the  surviving nodes within the same rack. Moreover, we also assume that $u\leq n-k$  to guarantee that the code has the ability to be repaired if full-rack fails\cite{WZLT}. Let $k=\bar{k}u+v$ for an integer $v\in [u]$, we have $r=n-k=\bar{n}u-(\bar{k}u+v)=(\bar{n}-\bar{k})u-v=\bar{r}u-v$ where $\bar{r}=\bar{n}-\bar{k}$. 

In practice, the host rack usually contain multiple failed nodes. Suppose that there are $h\geq 1$ failed nodes in the host rack. Let $i'$ be the index of the host rack and $\mathcal{I}=\{g_0,g_1,\ldots,g_{h-1}\}$ be the index set of the failed nodes, and
${\mathcal R}=\{j_0,j_1,\ldots,j_{\bar{d}-1}\}\subseteq[\bar{n}]\setminus\{i'\}$ be the index set of the $\bar{d}$ help racks. To recover the $h$ failed nodes,
the host rack downloads $\alpha$ coded symbols from each helper rack $j\in{\mathcal R}$. Then, the host rack uses the $\bar{d}\alpha$ symbols along with the symbols from its $u-h$ local nodes to repair the $h$ failed nodes. Clearly, the number of helper nodes $d=\bar{d}u+u-h$. Sine intra-rack communication is  assumed to have zero bandwidth overhead in the rack-aware storage model, the repair bandwidth of recovering the $h$ failed nodes is defined as $\beta(\bar{d},h)=\bar{d}\alpha$. Note that $h\leq \min\{u,r\}$ and $\bar{d}u+u-h\geq k$, we have $h\leq \min\{u,(\bar{d}-\bar{k}+1)u-v\}$\cite{WC}. Let $\bar{s}=\bar{d}-\bar{k}+1$. The authors in \cite{C-Barg} derived the following lower bound of the repair bandwidth for the $h$ failed nodes,
\begin{align}\label{eq1-1}
	\beta(\bar{d},h)\geq \dfrac{\bar{d}hl}{\bar{s}}.
\end{align}
Moreover, when $\bar{k}\geq1$ they showed that the equality in \eqref{eq1-1} holds if and only if the host rack downloads ${hl}/\bar{s}$ symbols from each helper rack.
An MDS array code achieving the lower bound in \eqref{eq1-1} is called a rack-aware MSR code, which provides optimal repair bandwidth. The scenario where $u=1$ corresponds to the homogeneous model. At this point $\bar{d}=d$, $\bar{k}=k$, and $h=1$, then inequality \eqref{eq1-1} can be written as  
\begin{align}\label{eq1-1-1}
	\beta(d,1)\geq \dfrac{dl}{s},
\end{align}
which was initially derived by Dimaki et al. in \cite{DGWWR} where $s=d-k+1$. 

Besides bandwidth, the repair efficiency and communication complexity are also affected by the number of symbols accessed on helper racks to generate the data downloaded by the host rack. Then, it is desirable to construct codes that provide optimal repair bandwidth and low access. A lower bound on the number of symbols accessed for single-node repair was derived in \cite{C-Barg};  this bound was later improved in \cite{Li2021} for the special case where $\bar{d}=\bar{n}-1$. Subsequently, the work in \cite{WC} extended these results to the multi-node repair scenario.
\begin{lemma}(\cite{WC})\label{oa}\rm
Let $\mathcal C$ be an $(n,k,l)$ rack-aware MSR codes over $\mathbb{F}_q$ for any $h\leq \min\{u,\bar{s}u-v\}$ within a single rack from any $\bar{d}\geq \bar{k}$ helper racks, each of which provides the same number of data for repair. For any $\bar{d}$ helper racks, the number of symbols accessed on the helper racks satisfies 
\begin{align} \label{eq1-2}
  \gamma\geq\dfrac{\bar{d}hul}{\bar{s}(u-v)}.
\end{align}
If $v>0$, the equality in \eqref{eq1-2} holds if and only if the number of symbols accessed on the node $i=ju+g$ satisfies $\alpha_i=hl/(\bar{s}(u-v))$ for all $j\in{\mathcal R}$ and $g\in[u]$. 
\end{lemma}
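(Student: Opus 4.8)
\emph{Proof sketch.} The plan is an information-theoretic cut-set argument, refined so that it keeps track of the per-node access cost, followed by a short averaging over the $u$ nodes of a helper rack. Model the source file as a uniform random vector over $\mathbb{F}_q^{kl}$, write $\mathbf{c}_i$ for the (random) content of node $i$, and measure all entropies in units of $\mathbb{F}_q$-symbols; since $\mathcal C$ is MDS, $H(\mathbf{c}_{\mathcal S})=|\mathcal S|\,l$ for every index set $\mathcal S$ with $|\mathcal S|\le k$. Fix the host rack $i'$, the failed set $\mathcal I$ with $|\mathcal I|=h$, the local set $L=\{i'u+g:g\in[u]\setminus\mathcal I\}$ with $|L|=u-h$, and the helper racks $\mathcal R$. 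For each $j\in\mathcal R$ and $g\in[u]$ let $\tilde{\mathbf c}_{ju+g}$ be the $\alpha_{ju+g}$ coordinates of $\mathbf c_{ju+g}$ that the relayer of rack $j$ reads; the transmitted data $D_j$ is a deterministic function of $\{\tilde{\mathbf c}_{ju+g}:g\in[u]\}$, and since $\mathcal C$ is a rack-aware MSR code the characterisation of equality in \eqref{eq1-1} gives $H(D_j)\le hl/\bar s$ for every $j$ (this uses $\bar k\ge 1$). The repair property gives $H(\mathbf c_{\mathcal I}\mid \mathbf c_L,\{D_j\}_{j\in\mathcal R})=0$.

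\textbf{Key claim.} For every helper rack $j\in\mathcal R$ and every $G\subseteq[u]$ with $|G|=u-v$ we have $\sum_{g\in G}\alpha_{ju+g}\ge hl/\bar s$. To prove it, choose any $\bar k-1$ helper racks $\mathcal T\subseteq\mathcal R\setminus\{j\}$ (possible since $\bar d\ge\bar k$), and set $T=\bigl(\bigcup_{j'\in\mathcal T}\{j'u+g:g\in[u]\}\bigr)\cup\{ju+g:g\in[u]\setminus G\}$, so that $|T|=(\bar k-1)u+v$ and $T,L,\mathcal I$ are pairwise disjoint node sets with $|T\cup L\cup\mathcal I|=(\bar k-1)u+v+(u-h)+h=k$. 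Hence by the MDS property $H(\mathbf c_{\mathcal I}\mid\mathbf c_L,\mathbf c_T)=hl$. On the other hand, using the repair property and the chain rule,
\begin{align*}
hl=H(\mathbf c_{\mathcal I}\mid\mathbf c_L,\mathbf c_T)\le H\bigl(\{D_{j'}\}_{j'\in\mathcal R}\mid\mathbf c_T\bigr)\le\sum_{j'\in\mathcal R}H(D_{j'}\mid\mathbf c_T).
\end{align*}
For $j'\in\mathcal T$ the term vanishes because $D_{j'}$ is a function of $\mathbf c_T$; for the remaining $\bar d-(\bar k-1)-1=\bar s-1$ racks $j'\in\mathcal R\setminus(\mathcal T\cup\{j\})$ we bound $H(D_{j'}\mid\mathbf c_T)\le H(D_{j'})\le hl/\bar s$; and for $j'=j$, since $\mathbf c_T$ contains $\mathbf c_{ju+g}$ (hence $\tilde{\mathbf c}_{ju+g}$) for every $g\in[u]\setminus G$, we get $H(D_j\mid\mathbf c_T)=H(\{\tilde{\mathbf c}_{ju+g}\}_{g\in G}\mid\mathbf c_T)\le\sum_{g\in G}\alpha_{ju+g}$. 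Substituting and rearranging yields the claim.

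\textbf{Averaging and conclusion.} Fix $j\in\mathcal R$ and sum the key claim over all $\binom{u}{u-v}$ subsets $G$ of size $u-v$; each index $ju+g$ lies in $\binom{u-1}{u-v-1}$ of them, so $\binom{u-1}{u-v-1}\sum_{g\in[u]}\alpha_{ju+g}\ge\binom{u}{u-v}\cdot\frac{hl}{\bar s}$, that is $\sum_{g\in[u]}\alpha_{ju+g}\ge\frac{u}{u-v}\cdot\frac{hl}{\bar s}$. Summing over the $\bar d$ helper racks gives $\gamma=\sum_{j\in\mathcal R}\sum_{g\in[u]}\alpha_{ju+g}\ge\frac{\bar d h u l}{\bar s(u-v)}$, which is \eqref{eq1-2}. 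For the equality statement with $v>0$ (so $1\le u-v\le u-1$, and the $(u-v)$-subsets genuinely vary): equality in \eqref{eq1-2} forces equality in every step above, so for each $j$ every $(u-v)$-subset sum $\sum_{g\in G}\alpha_{ju+g}$ equals exactly $hl/\bar s$; comparing two such subsets differing in a single index shows all $\alpha_{ju+g}$, $g\in[u]$, are equal, and since their $(u-v)$-fold sum is $hl/\bar s$ each equals $hl/(\bar s(u-v))$, for every $j\in\mathcal R$. The converse is immediate.

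\textbf{Main obstacle.} The substantive step is the information-theoretic key claim: isolating, inside the residual entropy $H(D_j\mid\mathbf c_T)$, precisely the contribution of the $u-v$ ``un-absorbed'' nodes of rack $j$, and checking that an ``absorbing'' set $T$ of size $(\bar k-1)u+v$ can always be chosen inside the helper racks so that $T\cup L\cup\mathcal I$ is a valid size-$k$ subset; the rest is bookkeeping. One should also flag the standard assumption that the repair scheme has a fixed access pattern, so that the $\tilde{\mathbf c}_{ju+g}$ are well defined.
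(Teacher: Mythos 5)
Your proposal is correct. Note that the paper itself offers no proof of this statement: Lemma \ref{oa} is quoted verbatim from \cite{WC}, so there is nothing in the present manuscript to compare against line by line. Your argument is a sound, self-contained reconstruction of the standard cut-set-plus-averaging proof used for access lower bounds in this literature (it is the multi-erasure analogue of the single-failure access bound of Chen--Barg, which is what \cite{WC} generalizes). The key claim is proved correctly: the set $T\cup L\cup\mathcal I$ has size exactly $k$, the repair condition kills the residual entropy of $\mathbf c_{\mathcal I}$, the $\bar k-1$ absorbed racks contribute nothing, the $\bar s-1$ unabsorbed non-distinguished racks contribute at most $hl/\bar s$ each (using the hypothesis that each helper rack supplies the same amount together with the MSR equality $\bar d\alpha=\bar dhl/\bar s$), and the distinguished rack contributes at most $\sum_{g\in G}\alpha_{ju+g}$. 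The double-counting ratio $\binom{u}{u-v}/\binom{u-1}{u-v-1}=u/(u-v)$ and the equality analysis (two $(u-v)$-subsets differing in one index, which exist precisely because $v>0$) are both right. Two small points worth tightening: the step $H(D_j\mid\mathbf c_T)=H(\{\tilde{\mathbf c}_{ju+g}\}_{g\in G}\mid\mathbf c_T)$ should be an inequality $\le$, since $D_j$ is only a function of the accessed symbols; and your parenthetical that $\bar k\ge1$ is needed is consistent with the paper's setting, where $u\le k$ forces $\bar k\ge1$.
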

\begin{remark}[\cite{DGWWR}]\label{rm1}
If $u=1$, then $\bar{d}=d$, $\bar{k}=k$, and $h=1$, we have $\gamma\geq{dl}/s$,  where $s=d-k+1$. This lower bound is the same as that of \eqref{eq1-1-1}.
\end{remark}

An $(n,k,l)$ rack-aware MSR code achieving the lower bound in \eqref{eq1-2} is termed optimal access.

\subsection{Fundamental terminology and relevant results}
In this paper, when we mention that $\mathbf{B}$ is an $m\times n$ block matrix, we always suppose that $\mathbf{B}$ is partitioned uniformly, i.e., the size of every block entry of $\mathbf{B}$ is identical. In addition, we use the following matrix operator $\boxtimes$  and blow-up map. For a matrix $\mathbf{A}$ and an $m\times n$ block matrix  
\begin{align*} 
	\mathbf{B}={\left[\begin{array}{llll}
			\mathbf{B}_{0,0} & \ldots & \mathbf{B}_{0,n-1} \\			
			\ \ \vdots & \ddots &\ \ \vdots \\
			\mathbf{B}_{m-1,0} & \ldots & \mathbf{B}_{m-1,n-1} \\
		\end{array}
		\right]},
\end{align*}  
\noindent we define 
\begin{align} \label{eq1-3}
	\mathbf{A}\boxtimes \mathbf{B}={\left[\begin{array}{llll}
			\mathbf{A}\otimes \mathbf{B}_{0,0} & \ldots & \mathbf{A}\otimes \mathbf{B}_{0,n-1} \\			
			\ \ \ \vdots & \ddots &\ \ \ \vdots \\
			\mathbf{A}\otimes \mathbf{B}_{m-1,0} & \ldots & \mathbf{A}\otimes \mathbf{B}_{m-1,n-1} \\
		\end{array}
		\right]},
\end{align}  
\noindent where $\otimes$ is the Kronecker product. It should be noted that the result of $\mathbf{A}\boxtimes \mathbf{B}$  relies on the way in which the rows and columns of $\mathbf{B}$ are partitioned. Moreover, we will make clear the partition whenever we employ this notation. When each block entry of $\mathbf{B}$ is a scalar over $\mathbb{F}_q$, we have $\mathbf{A}\boxtimes \mathbf{B}=\mathbf{B}\otimes \mathbf{A}$.

Let  $\tilde{n}$ be a positive integer. For any integer $a\in[\tilde{n}]$, we blow up an $\bar{s}\times \bar{s}$ block matrix 
\begin{align*} 
	\mathbf{U}={\left[\begin{array}{llll}
			\mathbf{U}_{0,0} & \ldots & \mathbf{U}_{0,\bar{s}-1} \\			
			\ \ \vdots & \ddots &\ \ \vdots \\
			\mathbf{U}_{\bar{s}-1,0} & \ldots & \mathbf{U}_{\bar{s}-1,\bar{s}-1} \\
		\end{array}
		\right]} 
\end{align*}  
\noindent to get an $\bar{s}^{\tilde{n}}\times \bar{s}^{\tilde{n}}$ block matrix via
\begin{align} \label{eq1}
	\varPsi_{\tilde{n},a}(\mathbf{U})&={\bf I}_{\bar{s}^{\tilde{n}-a-1}}\otimes({\bf I}_{\bar{s}^a}\boxtimes \mathbf{U})\\
	\nonumber	&={\bf I}_{\bar{s}^{\tilde{n}-a-1}}\otimes{\left[\begin{array}{llll}
			{\bf I}_{\bar{s}^a}\otimes \mathbf{U}_{0,0} & \ldots & {\bf I}_{\bar{s}^a}\otimes \mathbf{U}_{0,\bar{s}-1} \\			
			\ \ \vdots & \ddots &\ \ \vdots \\
			{\bf I}_{\bar{s}^a}\otimes \mathbf{U}_{\bar{s}-1,0} & \ldots & {\bf I}_{\bar{s}^a}\otimes \mathbf{U}_{\bar{s}-1,\bar{s}-1} \\
		\end{array}
		\right]}.
\end{align}

Let $l=\bar{s}^{\tilde{n}}$. A matrix $\mathbf{U}$ and its blown-up block matrix $\varPsi_{\tilde{n},a}(\mathbf{U})$  have the following close relationship.
\begin{lemma}(\cite{Zhang2025}) \label{lm1}
	For any $i,j\in[l]$, the block entry of $\varPsi_{\tilde{n},a}(\mathbf{U})$ at the $i$-th block row and $j$-th block column is
	\[
	\varPsi_{\tilde{n},a}(\mathbf{U})(i,j)= 
	\begin{cases}
		\mathbf{U}(i_a,j_a), & {\rm if } \ i_z=j_z \ \forall z\in[\tilde{n}]\setminus\{a\},\\
		0, & \rm{otherwise, } 
	\end{cases}
	\]
	where $\mathbf{U}(i_a, j_a)$ is the block entry of the matrix $\mathbf{U}$ which lies in the $i_a$-th block row and the $j_a$-th block column.
\end{lemma}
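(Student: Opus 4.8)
The plan is to prove the identity by directly unfolding the definition \eqref{eq1} of $\varPsi_{\tilde{n},a}(\mathbf{U})$ and tracking block indices through their $\bar{s}$-ary (mixed-radix) representations. Since all block matrices here are uniformly partitioned, $\varPsi_{\tilde{n},a}(\mathbf{U})$ is an $\bar{s}^{\tilde{n}}\times\bar{s}^{\tilde{n}}$ block matrix whose blocks all share the common size of the $\mathbf{U}_{p,q}$, so it is meaningful to speak of its $(i,j)$-th block entry for $i,j\in[l]$ with $l=\bar{s}^{\tilde{n}}$.

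First I would examine the inner factor $\mathbf{M}:={\bf I}_{\bar{s}^a}\boxtimes\mathbf{U}$. By the definition of $\boxtimes$, $\mathbf{M}$ is an $\bar{s}\times\bar{s}$ block matrix whose $(p,q)$-th block is ${\bf I}_{\bar{s}^a}\otimes\mathbf{U}_{p,q}$ for $p,q\in[\bar{s}]$; refining each such block into $\bar{s}^a\times\bar{s}^a$ sub-blocks of the size of $\mathbf{U}_{p,q}$, its $(c,d)$-th sub-block is $\delta_{c,d}\,\mathbf{U}_{p,q}$ for $c,d\in[\bar{s}^a]$. Viewing $\mathbf{M}$ as an $\bar{s}^{a+1}\times\bar{s}^{a+1}$ block matrix and writing each of its block indices as $p\bar{s}^a+c$ with $p\in[\bar{s}]$, $c\in[\bar{s}^a]$ (so that $p$ is the high-order part, occupying $\bar{s}$-ary digit position $a$, while $c$ carries digit positions $0,\dots,a-1$), this says that the $(p\bar{s}^a+c,\,q\bar{s}^a+d)$-th block of $\mathbf{M}$ equals $\delta_{c,d}\,\mathbf{U}(p,q)$.

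Next I would bring in the outer factor: $\varPsi_{\tilde{n},a}(\mathbf{U})={\bf I}_{\bar{s}^{\tilde{n}-a-1}}\otimes\mathbf{M}$ is block-diagonal with $\bar{s}^{\tilde{n}-a-1}$ copies of $\mathbf{M}$ on the diagonal. Writing a block index $i\in[l]$ as $i=e\bar{s}^{a+1}+p\bar{s}^a+c$ with $e\in[\bar{s}^{\tilde{n}-a-1}]$, $p\in[\bar{s}]$, $c\in[\bar{s}^a]$, and likewise $j=e'\bar{s}^{a+1}+q\bar{s}^a+d$, the $(i,j)$-th block of $\varPsi_{\tilde{n},a}(\mathbf{U})$ is zero unless $e=e'$ (same diagonal copy of $\mathbf{M}$), and then equals the $(p\bar{s}^a+c,\,q\bar{s}^a+d)$-th block of $\mathbf{M}$, namely $\delta_{c,d}\,\mathbf{U}(p,q)$. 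Finally I would translate these index conditions back into the $\bar{s}$-ary digits of $i$ and $j$: here $e$ records digits $a+1,\dots,\tilde{n}-1$, the scalar $p$ is digit $a$, and $c$ records digits $0,\dots,a-1$. Hence $e=e'$ together with $c=d$ is exactly the condition $i_z=j_z$ for all $z\in[\tilde{n}]\setminus\{a\}$, and in that case $p=i_a$, $q=j_a$, so the block entry is $\mathbf{U}(i_a,j_a)$; otherwise it is $0$, which is the claimed formula.

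The only genuine difficulty is bookkeeping: one must fix once and for all which tensor factor supplies the high-order digits of an index (the left factor in $\otimes$, and the outer block structure in $\boxtimes$) and verify that this is consistent with the convention $i=\sum_{z}i_z\bar{s}^z$ adopted in the paper; with the correspondence ``digits $0,\dots,a-1\leftrightarrow c$, digit $a\leftrightarrow p$, digits $a+1,\dots,\tilde{n}-1\leftrightarrow e$'' in hand, the formula is immediate. Alternatively, the middle step can be organized as a short induction on $\tilde{n}-a-1$ via the identity $\varPsi_{\tilde{n},a}(\mathbf{U})={\bf I}_{\bar{s}}\otimes\varPsi_{\tilde{n}-1,a}(\mathbf{U})$ (with base case $\varPsi_{a+1,a}(\mathbf{U})={\bf I}_{\bar{s}^a}\boxtimes\mathbf{U}$), but the direct index computation above appears cleanest.
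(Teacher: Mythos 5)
Your argument is correct: the decomposition $i=e\bar{s}^{a+1}+p\bar{s}^a+c$, the identification of $\delta_{c,d}\,\mathbf{U}(p,q)$ as the sub-block of ${\bf I}_{\bar{s}^a}\boxtimes\mathbf{U}$, and the block-diagonal effect of the outer Kronecker factor together give exactly the stated formula, and your digit bookkeeping is consistent with the paper's convention $i=\sum_z i_z\bar{s}^z$. Note that the paper itself offers no proof of this lemma --- it is quoted from \cite{Zhang2025} --- so there is nothing to compare against here; your direct index computation is the standard verification and is sound as written.
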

From Lemma 4 in \cite{Zhang2025}, it is not difficult to obtain the following result.
\begin{lemma}\label{lm1-1}
	If $\mathbf{U}_0$ and $\mathbf{U}_1$ are two $\bar{s}\times\bar{s}$ block matrices, then $\varPsi_{\tilde{n},a}(\mathbf{U}_0)\varPsi_{\tilde{n},a}(\mathbf{U}_1)=\varPsi_{\tilde{n},a}(\mathbf{U}_0\mathbf{U}_1)$, where $\mathbf{U}_0\mathbf{U}_1$ is a matrix product.
\end{lemma}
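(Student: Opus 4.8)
The plan is to verify the identity one block entry at a time, leaning on the explicit description of blown-up matrices provided by Lemma \ref{lm1}. Set $l=\bar{s}^{\tilde{n}}$ and observe that $\varPsi_{\tilde{n},a}(\mathbf{U}_0)$ and $\varPsi_{\tilde{n},a}(\mathbf{U}_1)$ are both $l\times l$ block matrices with a uniform partition, so their product is again an $l\times l$ block matrix whose $(i,j)$-th block equals $\sum_{m\in[l]}\varPsi_{\tilde{n},a}(\mathbf{U}_0)(i,m)\,\varPsi_{\tilde{n},a}(\mathbf{U}_1)(m,j)$. Here I use that the inner block dimensions match, which is exactly the standing hypothesis that $\mathbf{U}_0\mathbf{U}_1$ is a legitimate matrix product.

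Next I would apply Lemma \ref{lm1} to each factor. The block $\varPsi_{\tilde{n},a}(\mathbf{U}_0)(i,m)$ is nonzero only when $i_z=m_z$ for all $z\in[\tilde{n}]\setminus\{a\}$, and $\varPsi_{\tilde{n},a}(\mathbf{U}_1)(m,j)$ is nonzero only when $m_z=j_z$ for all $z\in[\tilde{n}]\setminus\{a\}$, so a summand survives only if $i_z=m_z=j_z$ for every $z\neq a$. If $i_z\neq j_z$ for some $z\neq a$, all summands vanish and the $(i,j)$-th block of the product is $0$, which is the ``otherwise'' case of the formula for $\varPsi_{\tilde{n},a}(\mathbf{U}_0\mathbf{U}_1)$ in Lemma \ref{lm1}. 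Otherwise $i_z=j_z$ for all $z\neq a$, the only contributing $m$ are those agreeing with $i$ (equivalently $j$) off the $a$-th digit with $m_a$ ranging freely over $[\bar{s}]$, and---using that the digit expansion $m\leftrightarrow(m_0,\ldots,m_{\tilde{n}-1})$ is a bijection---the sum collapses to $\sum_{t\in[\bar{s}]}\mathbf{U}_0(i_a,t)\,\mathbf{U}_1(t,j_a)=(\mathbf{U}_0\mathbf{U}_1)(i_a,j_a)$. By Lemma \ref{lm1} applied to $\mathbf{U}_0\mathbf{U}_1$, this is precisely $\varPsi_{\tilde{n},a}(\mathbf{U}_0\mathbf{U}_1)(i,j)$, so the two matrices agree block by block.

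I do not anticipate a real obstacle: the statement is a bookkeeping exercise, and the authors themselves present it as an immediate consequence of Lemma 4 of \cite{Zhang2025}. The only delicate point is correctly collapsing the sum over all $m\in[l]$ to a sum over the single active digit $m_a$ while tracking which digit index $a$ is being perturbed. As an alternative that sidesteps Lemma \ref{lm1} entirely, one can argue algebraically: by \eqref{eq1}, $\varPsi_{\tilde{n},a}(\mathbf{U})={\bf I}_{\bar{s}^{\tilde{n}-a-1}}\otimes({\bf I}_{\bar{s}^a}\boxtimes\mathbf{U})$; applying the mixed-product rule $(\mathbf{A}\otimes\mathbf{B})(\mathbf{C}\otimes\mathbf{D})=(\mathbf{A}\mathbf{C})\otimes(\mathbf{B}\mathbf{D})$ to each block shows $({\bf I}_{\bar{s}^a}\boxtimes\mathbf{U}_0)({\bf I}_{\bar{s}^a}\boxtimes\mathbf{U}_1)={\bf I}_{\bar{s}^a}\boxtimes(\mathbf{U}_0\mathbf{U}_1)$, and tensoring with ${\bf I}_{\bar{s}^{\tilde{n}-a-1}}$ yields the claim. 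Either route completes the proof.
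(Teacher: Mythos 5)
Your blockwise verification via Lemma \ref{lm1} is correct and is exactly the route the paper gestures at when it says the result follows ``from Lemma 4 in \cite{Zhang2025}'' (the paper itself omits the details); the collapse of the sum over $m\in[l]$ to the single free digit $m_a$ is handled properly. The alternative Kronecker/mixed-product argument is also valid, but no comparison is needed since your main argument already matches the paper's intended proof.
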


Now, let us introduce kernel maps, which are useful for designing parity-check submatrices. This construction method was first introduced in \cite{YB2}. For any positive integers $b\in[\bar{s}+1]$ and $t$, define 
\begin{align*} 
	\varphi_b^{(t)}:\mathbb{F}_q^{\bar{s}}\to\mathbb{F}_q^{\bar{s}t\times \bar{s}}
\end{align*} 
which maps $\mathbf{x}_{[\bar{s}]}$ to the $\bar{s}t\times\bar{s}$ matrix $\varphi_b^{(t)}(\mathbf{x}_{[\bar{s}]})$ over $\mathbb{F}_q$. We regard $\varphi_b^{(t)}(\mathbf{x}_{[\bar{s}]})$ as a block matrix of size $\bar{s}\times\bar{s}$, where the block entries are column vectors of length $t$. For $i,j,b\in[\bar{s}]$, the block entry locating at the
$i$-th block row and the $j$-th block column is
\begin{align}\label{eq2}
	\varphi_b^{(t)}(\mathbf{x}_{[\bar{s}]})(i,j)= 
	\begin{cases}
		L^{(t)}(x_j) & \text{if } i=j, \\
		-L^{(t)}(x_j) & \text{if } i=b, i\neq j,\\
		0 & \text{otherwise}.
	\end{cases}
\end{align}
While for $b=\bar{s}$ and $i,j\in[\bar{s}]$, $\varphi_{\bar{s}}^{(t)}(\mathbf{x}_{[\bar{s}]})(i,j)$ can be defined as
\begin{align}\label{eq2-1-1}
	\varphi_{\bar{s}}^{(t)}(\mathbf{x}_{[\bar{s}]})(i,j)= 
	\begin{cases}
		L^{(t)}(x_j) & \text{if } i=j, \\	
		0 & \text{otherwise}.
	\end{cases}
\end{align}
Note that $\varphi_b^{(t)}(\mathbf{x}_{[\bar{s}]})$ is an $\bar{s}\times\bar{s}$ block matrix.  When no confusion will be caused, we abbreviate matrix $[\varphi_b^{(t)}(\mathbf{x}_{[\bar{s}]}) \ \varphi_b^{(t)}(\mathbf{x}_{\bar{s}+[\bar{s}]})$ $ \ldots \varphi_b^{(t)}(\mathbf{x}_{(t'-1)\bar{s}+[\bar{s}]})]$
as $\boldsymbol{\varphi}_b^{(t)}(\mathbf{x}_{[\bar{s}t']})$, where $t'$ is a positive integer.

For a non-empty subset ${\mathcal B}=\{b_0,b_1,\ldots,b_{t-1}\}\subseteq [\bar{s}+1]$ with $b_0<b_1<\ldots<b_{t-1}$, define
\begin{align}\label{eq2-1}
	{\mathcal G}_{\mathcal B}=\{{\mathcal G}_{b_0},{\mathcal G}_{b_1},\ldots,{\mathcal G}_{b_{t-1}}\},
\end{align}
where ${\mathcal G}_{b_j}=\{g_{b_j,0},g_{b_j,1},\ldots,g_{b_j,m_j-1}\}\subseteq [u]$ for $0\leq j\leq t-1$, $m_j\geq1$ and  $|{\mathcal G}_{\mathcal B}|=m_0+m_1+\ldots+m_{t-1}=\delta$. For an $\bar{s}t$-length vector $\mathbf{x}_{[\bar{s}t]}$, we define 
\begin{align*}
	\theta^{{\mathcal G}_{\mathcal B}}\odot \mathbf{x}_{[\bar{s}t]}=(\theta^{{\mathcal G}_{b_0}} \mathbf{x}_{[\bar{s}]},\theta^{{\mathcal G}_{b_1}} \mathbf{x}_{\bar{s}+[\bar{s}]},\ldots,\theta^{{\mathcal G}_{b_{t-1}}} \mathbf{x}_{(t-1)\bar{s}+[\bar{s}]}).
\end{align*}
Moreover, for any positive integer $m\geq \delta$, we define the $m\bar{s}\times \delta\bar{s}$ matrix
\begin{align}\label{eq4}
	\nonumber &	\boldsymbol{\varphi}_{\mathcal B}^{(m)}(\theta^{{\mathcal G}_{\mathcal B}}\odot \mathbf{x}_{[\bar{s}t]})\\
	\nonumber =&[	\varphi_{b_0}^{(m)}(\theta^{{\mathcal G}_{b_0}}\mathbf{x}_{[\bar{s}]}) \ \ \varphi_{b_1}^{(m)}(\theta^{{\mathcal G}_{b_1}}\mathbf{x}_{\bar{s}+[\bar{s}]})\ldots 	\varphi_{b_{t-1}}^{(m)}(\theta^{{\mathcal G}_{b_{t-1}}}\mathbf{x}_{(t-1)\bar{s}+[\bar{s}]})] \\
	=&[\varphi_{b_0}^{(m)}(\theta^{g_{b_0,0}}\mathbf{x}_{[\bar{s}]})\ldots \varphi_{b_0}^{(m)}(\theta^{g_{b_0,m_0-1}}\mathbf{x}_{[\bar{s}]}) \ldots \varphi_{b_{t-1}}^{(m)}(\theta^{g_{b_{t-1},0}}\mathbf{x}_{(t-1)\bar{s}+[\bar{s}]})\ldots \varphi_{b_{t-1}}^{(m)}(\theta^{g_{b_{t-1},m_{t-1}-1}}\mathbf{x}_{(t-1)\bar{s}+[\bar{s}]})],
\end{align}
and  the $ml\times \delta l$ matrix
\begin{align}\label{eq5}
	\nonumber	\boldsymbol{\varPsi}_{\tilde{n},a}(\boldsymbol{\varphi}_{\mathcal B}^{(m)}(\theta^{{\mathcal G}_{\mathcal B}}\odot \mathbf{x}_{[\bar{s}t]}))
	\nonumber=&[\varPsi_{\tilde{n},a}(\varphi_{b_0}^{(m)}(\theta^{g_{b_0,0}}\mathbf{x}_{[\bar{s}]}))\ldots \varPsi_{\tilde{n},a}(\varphi_{b_0}^{(m)}(\theta^{g_{b_0,m_0-1}}\mathbf{x}_{[\bar{s}]}))	\ldots
	\varPsi_{\tilde{n},a}(\varphi_{b_{t-1}}^{(m)}(\theta^{g_{b_{t-1},0}}\mathbf{x}_{(t-1)\bar{s}+[\bar{s}]}))\\ & \ldots \varPsi_{\tilde{n},a}(\varphi_{b_{t-1}}^{(m)}(\theta^{g_{b_{t-1},m_{t-1}-1}}\mathbf{x}_{(t-1)\bar{s}+[\bar{s}]}))].
\end{align}
If ${\mathcal G}_{b_j}=\{0\}$ for all $j\in[t]$, we have $\theta^{{\mathcal G}_{\mathcal B}}\odot \mathbf{x}_{[\bar{s}t]}=\mathbf{x}_{[\bar{s}t]}$ and $m_0+m_1+\ldots+m_{t-1}=t$, then
\begin{align}\label{eq501}
	\boldsymbol{\varphi}_{\mathcal B}^{(m)}(\mathbf{x}_{[\bar{s}t]})
	=[\varphi_{b_0}^{(m)}(\mathbf{x}_{[\bar{s}]})\ \ \varphi_{b_1}^{(m)}(\mathbf{x}_{\bar{s}+[\bar{s}]})\ldots \varphi_{b_{t-1}}^{(m)}(\mathbf{x}_{(t-1)\bar{s}+[\bar{s}]})],
\end{align}
and 
\begin{align}\label{eq502}
	\boldsymbol{\varPsi}_{\tilde{n},a}(\boldsymbol{\varphi}_{\mathcal B}^{(m)}( \mathbf{x}_{[\bar{s}t]}))
	=[\varPsi_{\tilde{n},a}(\varphi_{b_0}^{(m)}(\mathbf{x}_{[\bar{s}]})) \ \varPsi_{\tilde{n},a}(\varphi_{b_1}^{(m)}(\mathbf{x}_{\bar{s}+[\bar{s}]}))\ldots \varPsi_{\tilde{n},a}(\varphi_{b_{t-1}}^{(m)}(\mathbf{x}_{(t-1)\bar{s}+[\bar{s}]}))].
\end{align}
If $m=\delta$, then $\boldsymbol{\varphi}_{\mathcal B}^{(m)}(\theta^{{\mathcal G}_{\mathcal B}}\odot \mathbf{x}_{[\bar{s}t]})$ is a square matrix, and we write $\boldsymbol{\varphi}_{\mathcal B}^{(m)}(\theta^{{\mathcal G}_{\mathcal B}}\odot  \mathbf{x}_{[\bar{s}t]})$ as $\boldsymbol{\varphi}_{\mathcal B}(\theta^{{\mathcal G}_{\mathcal B}}\odot  \mathbf{x}_{[\bar{s}t]})$.

\begin{lemma}\label{lm2}
	Following the notations introduced above, the matrix $\boldsymbol{\varphi}_{\mathcal B}(\theta^{{\mathcal G}_{\mathcal B}}\odot \mathbf{x}_{[\bar{s}t]})$ is invertible if and only if $\boldsymbol{\varPsi}_{\tilde{n},a}(\boldsymbol{\varphi}_{\mathcal B}(\theta^{{\mathcal G}_{\mathcal B}}\odot \mathbf{x}_{[\bar{s}t]}))$ is invertible.
\end{lemma}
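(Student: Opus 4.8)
The plan is to reduce the claim to the block-diagonal structure of the blow-up map and thereby show that invertibility of the "small" matrix $\boldsymbol{\varphi}_{\mathcal B}(\theta^{{\mathcal G}_{\mathcal B}}\odot \mathbf{x}_{[\bar{s}t]})$ is equivalent to that of its blown-up version $\boldsymbol{\varPsi}_{\tilde{n},a}(\boldsymbol{\varphi}_{\mathcal B}(\theta^{{\mathcal G}_{\mathcal B}}\odot \mathbf{x}_{[\bar{s}t]}))$. First I would recall from \eqref{eq5} that the blown-up matrix is obtained column-block by column-block by applying $\varPsi_{\tilde{n},a}(\cdot)$ to each of the $\delta$ constituent blocks $\varphi_{b_j}^{(m)}(\theta^{g_{b_j,\cdot}}\mathbf{x}_{\cdot})$ of $\boldsymbol{\varphi}_{\mathcal B}(\theta^{{\mathcal G}_{\mathcal B}}\odot \mathbf{x}_{[\bar{s}t]})$, so when $m=\delta$ the whole thing is a $\delta l\times\delta l$ matrix. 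The key observation, exactly as in Corollary 1 of \cite{LWHY}, is that $\varPsi_{\tilde{n},a}(\mathbf{U})$ depends on the index $i$ (resp. $j$) only through its $a$-th $\bar{s}$-ary digit $i_a$ (resp. $j_a$) and is zero whenever the other digits disagree, by Lemma \ref{lm1}. Hence, after a suitable permutation of the $l=\bar{s}^{\tilde n}$ block rows and block columns that groups together all indices sharing the same digit vector outside position $a$, the matrix $\boldsymbol{\varPsi}_{\tilde{n},a}(\boldsymbol{\varphi}_{\mathcal B}(\theta^{{\mathcal G}_{\mathcal B}}\odot \mathbf{x}_{[\bar{s}t]}))$ becomes block-diagonal with $\bar{s}^{\tilde n-1}$ identical diagonal blocks, each equal to $\boldsymbol{\varphi}_{\mathcal B}(\theta^{{\mathcal G}_{\mathcal B}}\odot \mathbf{x}_{[\bar{s}t]})$ itself.

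The second step is to make this permutation argument precise while respecting the block structure. I would fix a bijection $\pi:[\bar{s}^{\tilde n}]\to[\bar{s}]\times[\bar{s}^{\tilde n-1}]$ sending $i\mapsto(i_a,\,i')$ where $i'$ encodes the digits $(i_z)_{z\neq a}$, and let $\mathbf P$ be the corresponding permutation matrix acting on the $l$ block rows (and $\mathbf P$ again on the block columns, since $\boldsymbol{\varphi}_{\mathcal B}$ is square). By Lemma \ref{lm1}, applied block-by-block to each of the $\delta$ factors in \eqref{eq5}, one gets that $(\mathbf P\otimes\mathbf I)\,\boldsymbol{\varPsi}_{\tilde{n},a}(\boldsymbol{\varphi}_{\mathcal B}(\theta^{{\mathcal G}_{\mathcal B}}\odot \mathbf{x}_{[\bar{s}t]}))\,(\mathbf P^{\top}\otimes\mathbf I)=\mathbf I_{\bar{s}^{\tilde n-1}}\otimes \boldsymbol{\varphi}_{\mathcal B}(\theta^{{\mathcal G}_{\mathcal B}}\odot \mathbf{x}_{[\bar{s}t]})$ (the identity acting on the block-internal coordinates). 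Since conjugation by a permutation matrix and tensoring with $\mathbf I$ preserve invertibility, the determinant of the left-hand side is, up to sign, the $\bar{s}^{\tilde n-1}$-th power of $\det \boldsymbol{\varphi}_{\mathcal B}(\theta^{{\mathcal G}_{\mathcal B}}\odot \mathbf{x}_{[\bar{s}t]})$; in particular one is nonzero iff the other is, which is exactly the asserted equivalence.

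I expect the main obstacle to be purely bookkeeping: carefully tracking how the block partition of $\boldsymbol{\varphi}_{\mathcal B}(\theta^{{\mathcal G}_{\mathcal B}}\odot \mathbf{x}_{[\bar{s}t]})$ (which is an $\bar{s}\times\bar{s}$ array of $t$-vectors, concatenated over the $\delta$ factors) interacts with the $\bar{s}^{\tilde n}\times\bar{s}^{\tilde n}$ block structure produced by $\varPsi_{\tilde{n},a}$, and verifying that the single permutation $\pi$ simultaneously block-diagonalizes all $\delta$ factors with \emph{the same} reordering — this is what lets the diagonal block be the full matrix $\boldsymbol{\varphi}_{\mathcal B}$ rather than something factor-dependent. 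Because each factor $\varphi_{b_j}^{(m)}$ is itself defined entrywise via \eqref{eq2}–\eqref{eq2-1-1} and $\varPsi_{\tilde{n},a}$ acts identically on each, the reordering is indeed uniform, so once the indexing conventions are pinned down the argument goes through exactly as in \cite{LWHY}; I would simply cite that proof for the routine verification and emphasize only the block-diagonalization identity above.
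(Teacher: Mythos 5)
Your proof is correct and takes essentially the same route as the argument the paper defers to (Corollary~1 of \cite{LWHY}): using Lemma~\ref{lm1} to see that the blown-up matrix is supported only on block positions whose digits agree outside position $a$, and hence is, after reordering rows and columns, $\bar{s}^{\tilde{n}-1}$ disjoint copies of $\boldsymbol{\varphi}_{\mathcal B}(\theta^{{\mathcal G}_{\mathcal B}}\odot \mathbf{x}_{[\bar{s}t]})$, so the determinant of one is a power of the determinant of the other up to sign. The only slight imprecision is that, because the columns of $\boldsymbol{\varPsi}_{\tilde{n},a}(\cdot)$ are organized as $\delta$ concatenated factors of $l$ block columns each while the rows are $l$ blocks of $\delta$ rows, the required row and column reorderings are two different permutations rather than a single conjugation $\mathbf{P}(\cdot)\mathbf{P}^{\top}$; this does not affect the conclusion, since arbitrary row and column permutations preserve invertibility.
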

\begin{proof}
Our conclusion can be derived from Corollary 1 of \cite{LWHY} by replacing $s$ with $\bar{s}$, $\bar{n}$ with $\tilde{n}$, and $\mathbf{x}_{[\bar{s}t]}$ with $\theta^{\mathcal{G}_{\mathcal{B}}} \odot \mathbf{x}_{[\bar{s}t]}$.	
\end{proof}

\begin{lemma}\label{lm3} 	
	Following the notations introduced above, if $\theta^{{\mathcal G}_{\mathcal B}}\odot \mathbf{x}_{[\bar{s}t]}$ be $\bar{s}\delta$ distinct elements over $\mathbb{F}_q$ such that $\boldsymbol{\varPsi}_{\tilde{n},a}(\boldsymbol{\varphi}_{\mathcal B}^{(\delta)}(\theta^{{\mathcal G}_{\mathcal B}}\odot \mathbf{x}_{[\bar{s}t]}))$ is invertible, then for any integer $m>\delta$, there exists an $ml\times ml$ matrix $\mathbf{V}$ such that
	
	{\rm(1)} \begin{center}$\mathbf{V}\boldsymbol{\varPsi}_{\tilde{n},a}(\boldsymbol{\varphi}_{\mathcal B}^{(m)}(\theta^{{\mathcal G}_{\mathcal B}}\odot \mathbf{x}_{[\bar{s}t]}))={\left[\begin{array}{llll}
				\boldsymbol{\varPsi}_{\tilde{n},a}(\boldsymbol{\varphi}_{\mathcal B}^{(\delta)}(\theta^{{\mathcal G}_{\mathcal B}}\odot \mathbf{x}_{[\bar{s}t]}))  \\
				
				\ \ \ \ \ \ \ \ \ \ \	0  \\
			\end{array}
			\right]} $,
	\end{center}
	where $\boldsymbol{0}$ represents the $(m-\delta)l\times \delta l$ all zero matrix.
	
	{\rm(2)} For any $e\in [\tilde{n}]\setminus\{a\}$, $h\in[\bar{s}+1]$, and $\bar{s}$ elements $\mathbf{y}_{[\bar{s}]}$ which have no common elements with $\theta^{{\mathcal G}_{\mathcal B}}\odot \mathbf{x}_{[\bar{s}t]}$, 
	\begin{center}
		$\mathbf{V}\varPsi_{\tilde{n},e}(\varphi_h^{(m)}(\mathbf{y}_{[\bar{s}]}))={\left[\begin{array}{llll}
				\varPsi_{\tilde{n},e}(\varphi_h^{(\delta)}(\mathbf{y}_{[\bar{s}]}))  \\					
				\widehat{\varPsi}_{\tilde{n},e}(\varphi_h^{(m-\delta)}(\mathbf{y}_{[\bar{s}]}))  
			\end{array}
			\right]}$,
	\end{center}
	where $\widehat{\varPsi}_{\tilde{n},e}(\varphi_h^{(m-\delta)}(\mathbf{y}_{[\bar{s}]}))$ is an  $(m-\delta)l\times l$ matrix which is column equivalent to  $\varPsi_{\tilde{n},e}(\varphi_h^{(m-\delta)}(\mathbf{y}_{[\bar{s}]}))$.
	
	{\rm(3)} For any $x\notin \theta^{{\mathcal G}_{\mathcal B}}\odot \mathbf{x}_{[\bar{s}t]}$,
	\begin{center}
		$\mathbf{V}({\bf I}_l\otimes L^{(m)}(x))={\left[\begin{array}{llll}
				{\bf I}_l\otimes L^{(\delta)}(x) \\					
				({\bf I}_l\otimes L^{(m-\delta)}(x))\mathbf{Q}
			\end{array}
			\right]}$,
	\end{center}
	where $\mathbf{Q}$ is an $l\times l$ invertible matrix.
\end{lemma}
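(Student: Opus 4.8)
\emph{Step 1 (set-up and a nesting identity).} The plan is to build $\mathbf V$ \emph{explicitly} from one small matrix‑valued polynomial, and then read off items (1)--(3) from the tensor structure of the blow‑up map. First, since the first $\delta$ entries of $L^{(m)}(z)=(1,z,\dots,z^{m-1})^{\top}$ form $L^{(\delta)}(z)$, equations \eqref{eq2}, \eqref{eq2-1-1} together with Lemma~\ref{lm1} show that the ``first $\delta$ layers'' of each of $\boldsymbol{\varPsi}_{\tilde n,a}(\boldsymbol{\varphi}_{\mathcal B}^{(m)}(\theta^{\mathcal G_{\mathcal B}}\odot\mathbf x_{[\bar st]}))$, $\varPsi_{\tilde n,e}(\varphi_h^{(m)}(\mathbf y_{[\bar s]}))$ and ${\bf I}_l\otimes L^{(m)}(x)$ coincide with the corresponding $(\delta)$-object. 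Throughout I would identify $\mathbb F_q^{ml}\cong\mathbb F_q^{m}\otimes\mathbb F_q^{l}$ (reorder the $ml$ rows layer‑by‑layer), which only conjugates the target $\mathbf V$ by a fixed permutation, and then take $\mathbf V$ with top $\delta$ block‑rows equal to $[\,{\bf I}_{\delta l}\ \ \mathbf 0\,]$; the top $\delta l$ rows of every product in (1)--(3) are then automatically the claimed $(\delta)$-objects, and only the bottom $(m-\delta)$ block‑rows of $\mathbf V$ remain to be designed.

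\emph{Step 2 (reduction to an atomic polynomial).} Write $\{w_0,\dots,w_{\bar s\delta-1}\}$ for the $\bar s\delta$ distinct entries of $\theta^{\mathcal G_{\mathcal B}}\odot\mathbf x_{[\bar st]}$. By \eqref{eq2}--\eqref{eq5} and Lemma~\ref{lm1}, each column of $\boldsymbol{\varPsi}_{\tilde n,a}(\boldsymbol{\varphi}_{\mathcal B}^{(m)}(\cdot))$ equals $L^{(m)}(w)\otimes\mathbf v$ for some $w=w_p$ and some $\mathbf v\in\mathbb F_q^{l}$ lying in the span of the differences $e_{i'}-e_{i''}$ with $i',i''$ agreeing off coordinate $a$ (or $\mathbf v=e_{i'}$ in the diagonal case \eqref{eq2-1-1}). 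I would then look for an $l\times l$ matrix polynomial $\mathbf P(x)=\sum_{i=0}^{\delta}x^i\mathbf A_i$ of degree $\le\delta$ with $\mathbf P(w)\mathbf v=\mathbf 0$ for every such pair $(w,\mathbf v)$, and let the bottom of $\mathbf V$ be the block‑Toeplitz array whose $(\delta+p',p'')$-block is $\mathbf A_{p''-p'}$ ($\mathbf A_i:=\mathbf 0$ for $i\notin[0,\delta]$). A one‑line computation shows the bottom block‑row $\delta+p'$ of $\mathbf V\,\boldsymbol{\varPsi}_{\tilde n,a}(\boldsymbol{\varphi}_{\mathcal B}^{(m)}(\cdot))$ has columns $w^{p'}\mathbf P(w)\mathbf v=\mathbf 0$, giving (1), and $\mathbf V$ is invertible iff $\mathbf A_\delta$ is.

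\emph{Step 3 (producing $\mathbf P$ — the crux).} I would force $\mathbf A_i=\varPsi_{\tilde n,a}(\mathbf a_i)$ for a scalar $\bar s\times\bar s$ polynomial $\mathbf p(x)=\sum_{i=0}^{\delta}x^i\mathbf a_i$, so $\mathbf P=\varPsi_{\tilde n,a}(\mathbf p)$. Because $\mathbf v$ is supported within one $a$-fibre, Lemma~\ref{lm1} collapses the conditions $\mathbf P(w)\mathbf v=\mathbf 0$ to $\mathbf p(w)(e_c-e_b)=\mathbf 0$ (resp.\ $\mathbf p(w)e_b=\mathbf 0$), one per column of $\boldsymbol{\varphi}_{\mathcal B}^{(\delta)}(\theta^{\mathcal G_{\mathcal B}}\odot\mathbf x_{[\bar st]})$, i.e.\ to $[\mathbf a_0\ \cdots\ \mathbf a_\delta]\,\boldsymbol{\varphi}_{\mathcal B}^{(\delta+1)}(\theta^{\mathcal G_{\mathcal B}}\odot\mathbf x_{[\bar st]})=\mathbf 0$. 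By Lemma~\ref{lm2} the hypothesis is equivalent to $\boldsymbol{\varphi}_{\mathcal B}^{(\delta)}(\cdot)$ being invertible, hence $\boldsymbol{\varphi}_{\mathcal B}^{(\delta+1)}(\cdot)$ has full column rank and its left kernel has dimension $\bar s$; a basis of it gives a candidate $\mathbf p$. The remaining point — and the main obstacle — is to show one can choose $\mathbf p$ with $\det\mathbf p(x)\not\equiv0$; granting this, since $\mathbf p(w_p)$ is singular at the $\bar s\delta$ \emph{distinct} values $w_p$ and $\deg\det\mathbf p\le\bar s\delta$, we get $\det\mathbf p(x)=c\prod_p(x-w_p)$ with $c\ne0$, so $\mathbf a_\delta$ is invertible (so $\mathbf V$ is invertible) and $\mathbf p(x)$, hence $\mathbf P(x)=\varPsi_{\tilde n,a}(\mathbf p(x))$, is invertible at every $x\notin\theta^{\mathcal G_{\mathcal B}}\odot\mathbf x_{[\bar st]}$. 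Ruling out $\det\mathbf p\equiv0$ exploits the special shape of the kernel maps \eqref{eq2}--\eqref{eq2-1-1} (diagonal $L^{(t)}$-blocks plus a single off‑diagonal block‑column) and is precisely the step that parallels the proof of Corollary~1 in \cite{LWHY}; the case $b=\bar s$ in \eqref{eq2-1-1} is handled identically with the off‑diagonal term absent.

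\emph{Step 4 (items (2) and (3)).} With $\mathbf V$ fixed, for $e\ne a$ each column of $\varPsi_{\tilde n,e}(\varphi_h^{(m)}(\mathbf y_{[\bar s]}))$ equals $L^{(m)}(y_c)\otimes\mathbf u$, where $c$ is the $e$-coordinate of the column index and $\mathbf u$ is supported within one $e$-fibre; its first $\delta$ layers recover the corresponding column of $\varPsi_{\tilde n,e}(\varphi_h^{(\delta)}(\mathbf y_{[\bar s]}))$, which by Step~1 is exactly the top block of $\mathbf V\,\varPsi_{\tilde n,e}(\varphi_h^{(m)}(\mathbf y_{[\bar s]}))$, while its bottom block has columns $L^{(m-\delta)}(y_c)\otimes\big(\varPsi_{\tilde n,a}(\mathbf p(y_c))\,\mathbf u\big)$. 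Since $a\ne e$, $\varPsi_{\tilde n,a}(\mathbf p(y_c))$ acts only on coordinate $a$, hence maps the $e$-fibre subspace $\operatorname{span}\{\mathbf u:\ e\text{-coordinate}=c\}$ into itself, and — as $y_c$ is not a special value — does so invertibly; so fibre by fibre these bottom columns span the same subspace as the columns $L^{(m-\delta)}(y_c)\otimes\mathbf u$ of $\varPsi_{\tilde n,e}(\varphi_h^{(m-\delta)}(\mathbf y_{[\bar s]}))$, which is the column‑equivalence in (2). Item (3) is the degenerate instance $\mathbf u=e_{i'}$ with all values equal to $x\notin\theta^{\mathcal G_{\mathcal B}}\odot\mathbf x_{[\bar st]}$: the bottom block then equals $({\bf I}_l\otimes L^{(m-\delta)}(x))\,\varPsi_{\tilde n,a}(\mathbf p(x))$, so $\mathbf Q=\varPsi_{\tilde n,a}(\mathbf p(x))$, invertible by Step~3. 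Thus the only genuinely hard work is the determinant non‑degeneracy of Step~3; the rest is index bookkeeping (layer‑ versus block‑ordering, and the $b=\bar s$ degenerate case).
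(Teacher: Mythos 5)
Your overall strategy is the right one and is essentially the proof the paper has in mind when it defers to Lemma 7 of \cite{LWHY}: take $\mathbf V$ with top $\delta$ block-rows equal to $[\,{\bf I}_{\delta l}\ \mathbf 0\,]$ (in the layer ordering) and bottom block-rows given by the block-Toeplitz array of the coefficients of an annihilating matrix polynomial $\mathbf P(x)=\varPsi_{\tilde n,a}(\mathbf p(x))$ of degree $\delta$; then (1) follows from $\mathbf p(w)\mathbf v=0$ at the special pairs, (3) from invertibility of $\mathbf p(x)$ off the special set, and (2) from the fact that $\varPsi_{\tilde n,a}(\mathbf p(y_c))$ acts only on coordinate $a\neq e$ and hence induces an invertible, block-diagonal-over-$a$-fibres column transformation of $\varPsi_{\tilde n,e}(\varphi_h^{(m-\delta)}(\mathbf y_{[\bar s]}))$. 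Your reductions in Steps 1, 2 and 4, including the identification $\det\mathbf V=(\det\mathbf a_\delta)^{m-\delta}$ and the handling of the case $b=\bar s$, are all sound.

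The one place where the argument is not actually closed is the point you yourself flag as ``the main obstacle'': showing that $\mathbf p$ can be chosen with $\det\mathbf p(x)\not\equiv 0$ (equivalently, with $\mathbf a_\delta$ invertible). As written, you take an arbitrary basis of the $\bar s$-dimensional left kernel of $\boldsymbol{\varphi}_{\mathcal B}^{(\delta+1)}(\theta^{\mathcal G_{\mathcal B}}\odot\mathbf x_{[\bar s t]})$ and then appeal to the ``special shape'' of the kernel maps and to \cite{LWHY} without an argument; since every subsequent invertibility claim (of $\mathbf V$, of $\mathbf Q$ in (3), and of the column-equivalence matrix in (2)) rests on this, the proof is incomplete at its critical step. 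The gap is easy to fill, and filling it removes any need to inspect the shape of \eqref{eq2}--\eqref{eq2-1-1}: instead of an arbitrary kernel basis, prescribe the leading coefficient $\mathbf a_\delta={\bf I}_{\bar s}$ and solve
\begin{align*}
[\mathbf a_0\ \cdots\ \mathbf a_{\delta-1}]\,\boldsymbol{\varphi}_{\mathcal B}^{(\delta)}(\theta^{\mathcal G_{\mathcal B}}\odot\mathbf x_{[\bar s t]})\;=\;-\,[\,w_0^{\delta}\mathbf v_0\ \cdots\ w_{\bar s\delta-1}^{\delta}\mathbf v_{\bar s\delta-1}\,],
\end{align*}
which has a (unique) solution because the hypothesis together with Lemma \ref{lm2} makes the square matrix $\boldsymbol{\varphi}_{\mathcal B}^{(\delta)}(\cdot)$ invertible. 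Then $\det\mathbf p(x)$ is monic of degree exactly $\bar s\delta$, vanishes at the $\bar s\delta$ distinct values $w_p$, and hence equals $\prod_p(x-w_p)$; this simultaneously yields $\det\mathbf V\neq 0$ and the invertibility of $\mathbf p(x)$, hence of $\mathbf Q=\varPsi_{\tilde n,a}(\mathbf p(x))$, for every $x\notin\theta^{\mathcal G_{\mathcal B}}\odot\mathbf x_{[\bar s t]}$. With this substitution your Step 3 closes and the remainder of the proposal goes through.
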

\begin{proof}
In Lemma 3 of \cite{LWHY}, by replacing $s$ with $\bar{s}$, $\bar{n}$ with $\tilde{n}$,  and $\mathbf{x}_{[\bar{s}t]}$ with $\theta^{{\mathcal G}_{\mathcal B}}\odot \mathbf{x}_{[\bar{s}t]}$, the conclusion can be obtained since $\theta^{{\mathcal G}_{\mathcal B}}\odot \mathbf{x}_{[\bar{s}t]}$ are $\bar{s}\delta$ distinct elements over $\mathbb{F}_q$ such that $\boldsymbol{\varPsi}_{\tilde{n},a}(\boldsymbol{\varphi}_{\mathcal B}^{(\delta)}(\theta^{{\mathcal G}_{\mathcal B}}\odot \mathbf{x}_{[\bar{s}t]}))$ is invertible.
\end{proof}
Based on the notations in \cite{LWHY}, we introduce the following definitions and notations, which can be used in the repair of multiple node failures under rack-aware storage model. For any $b\in[\bar{s}]$,
we employ $\mathbf{e}_b$ to represent the $b$-th row vector of ${\bf I}_{\bar{s}}$. In other terms, we
have ${\bf I}_{\bar{s}}=(\mathbf{e}_0^{\top},\mathbf{e}_1^{\top},\ldots,\mathbf{e}_{\bar{s}-1}^{\top})^{\top}$. Note that $l=\bar{s}^{\tilde{n}}$, we set $\bar{l}=\bar{s}^{\tilde{n}-1}$.
So, for any $a\in[\tilde{n}]$ and $b\in[\bar{s}]$, we define an $\bar{l}\times l$
matrix
\begin{align}\label{eq6}
	\mathbf{R}_{a,b}={\bf I}_{\bar{s}^{\tilde{n}-a-1}}\otimes \mathbf{e}_b\otimes {\bf I}_{\bar{s}^a}.
\end{align}
Then, \eqref{eq6} can be rewritten as
\begin{align*}
	\mathbf{R}_{a,b}={\bf I}_{\bar{s}^{\tilde{n}-a-1}}\otimes ({\bf I}_{\bar{s}^a}\boxtimes \mathbf{e}_b),
\end{align*}
if $\mathbf{e}_b$ can be viewed as a $1\times s$ block matrix. It is not difficult to check that 
\begin{align*}
	\sum_{z\in[\bar{s}]}\mathbf{R}_{a,z}^{\top}\mathbf{R}_{a,z}={\bf I}_{l}.
\end{align*}
If $\mathbf{A}$ is an $l\times l$ matrix, one can verify that $\mathbf{R}_{a,b}\mathbf{A}$ is the submatrix which is obtained by extracting rows from $\mathbf{A}$ provided that the row index $i$ satisfies $i_a=b$. In a similar way, $\mathbf{A}\mathbf{R}_{a,b}^{\top}$ is the submatrix that is formed by extracting columns from $\mathbf{A}$, with the column index $j$ satisfying $j_a=b$.

Furthermore, we employ $\mathbf{1}$ to represent the all 1 row vector of length
$\bar{s}$ over $\mathbb{F}_q$. For $a\in[\tilde{n}]$, set
\begin{align}\label{eq6-1}
	\mathbf{R}_{a,\bar{s}}={\bf I}_{\bar{s}^{\tilde{n}-a-1}}\otimes {\bf 1}\otimes {\bf I}_{\bar{s}^a},
\end{align}
we can obtain $\mathbf{R}_{a,\bar{s}}=\sum_{z\in[\bar{s}]}\mathbf{R}_{a,z}$.

For $a\in[\tilde{n}-1]$, $b\in[\bar{s}+1]$ and a positive integer $t$, we define the following $\bar{l}t\times \bar{l}$ matrix 
\begin{align}\label{df-varPsi}
	\bar{\varPsi}_{\tilde{n},a}(\varphi_b^{(t)}(\mathbf{x}_{[\bar{s}]}))={\bf I}_{\bar{s}^{\tilde{n}-a-2}}\otimes({\bf I}_{\bar{s}^a}\boxtimes \varphi_b^{(t)}(\mathbf{x}_{[\bar{s}]})).
\end{align}
Likewise, we regard $\varphi_b^{(t)}(\mathbf{x}_{[\bar{s}]})$ as an $\bar{s}\times \bar{s}$ block matrix,  with its block entries being column vectors of length  $t$. 

The following two lemmas will be applied in the repair process of our rack-aware MSR codes. 
\begin{lemma}\label{lm4}
	For any $a\in[\tilde{n}]$, $b,z\in[\bar{s}]$, and any positive integer
	$t$, we can get the following results.\\
	{\rm (1)} \begin{center}
		$(\mathbf{R}_{a,b}\otimes {\bf I}_t)\varPsi_{\tilde{n},a}(\varphi_b^{(t)}(\mathbf{x}_{[\bar{s}]}))\mathbf{R}_{a,z}^{\top}=
		\begin{cases}
			{\bf I}_{\bar{l}}\otimes L^{(t)}(x_b) & {\rm if\ } z=b, \\	
			-{\bf I}_{\bar{l}}\otimes L^{(t)}(x_z) & {\rm if\ } z\neq b.
		\end{cases}$
	\end{center}
	{\rm (2)} For any $h\in[\bar{s}+1]\setminus\{b\}$,
	\begin{center}
		$(\mathbf{R}_{a,b}\otimes {\bf I}_t)\varPsi_{\tilde{n},a}(\varphi_h^{(t)}(\mathbf{x}_{[\bar{s}]}))\mathbf{R}_{a,z}^{\top}=
		\begin{cases}
			{\bf I}_{\bar{l}}\otimes L^{(t)}(x_b) & {\rm if\ } z=b, \\	
			0 & {\rm if\ } z\neq b.
		\end{cases}$
	\end{center}
	{\rm (3)} For any $e\in[\tilde{n}]\setminus\{a\}$, $h\in[\bar{s}+1]$,
	\begin{center}
		$(\mathbf{R}_{a,b}\otimes {\bf I}_t)\varPsi_{\tilde{n},e}(\varphi_h^{(t)}(\mathbf{x}_{[\bar{s}]}))\mathbf{R}_{a,z}^{\top}=
		\begin{cases}
			\bar{\varPsi}_{\tilde{n},\bar{e}}(\varphi_h^{(t)}(\mathbf{x}_{[\bar{s}]}))	 & {\rm if\ } z=b, \\	
			0 & {\rm if\ } z\neq b,
		\end{cases}$
	\end{center}
	where 
	\begin{center}
		$\bar{e}=
		\begin{cases}
			e	 & {\rm if\ } e<a, \\	
			e-1 & {\rm if\ } e>a.
		\end{cases}$
	\end{center}
\end{lemma}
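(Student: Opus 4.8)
The plan is to establish all three identities by computing block entries directly, exactly along the lines of the proofs of Lemmas~4 and~5 in \cite{LWHY}. Throughout, regard $\varphi_b^{(t)}(\mathbf{x}_{[\bar{s}]})$ (resp.\ $\varphi_h^{(t)}(\mathbf{x}_{[\bar{s}]})$) as an $\bar{s}\times\bar{s}$ block matrix whose block entries are column vectors of length $t$, so that $\varPsi_{\tilde{n},a}(\varphi_b^{(t)}(\mathbf{x}_{[\bar{s}]}))$ is an $l\times l$ block matrix with $t\times 1$ block entries. By Lemma~\ref{lm1}, its block at block row $i$ and block column $j$ equals $\varphi_b^{(t)}(\mathbf{x}_{[\bar{s}]})(i_a,j_a)$ when $i_z=j_z$ for all $z\in[\tilde{n}]\setminus\{a\}$, and is $0$ otherwise. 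I will also use the two observations already recorded in the text: left-multiplying an $l$-block-row matrix by $\mathbf{R}_{a,b}\otimes{\bf I}_t$ extracts the block rows $i$ with $i_a=b$ (carrying the length-$t$ fibre along intact), and right-multiplying by $\mathbf{R}_{a,z}^{\top}$ extracts the block columns $j$ with $j_a=z$; moreover, since $\mathbf{R}_{a,b}={\bf I}_{\bar{s}^{\tilde{n}-a-1}}\otimes\mathbf{e}_b\otimes{\bf I}_{\bar{s}^a}$, these extractions preserve the natural ordering of the remaining coordinates, so no stray permutation appears in the conclusions.

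For parts (1) and (2), which concern the case $e=a$: after the two extractions what survives is a $\bar{l}$-block-row by $\bar{l}$-block-column matrix (with $t\times 1$ block entries) indexed by the values of the coordinates in $[\tilde{n}]\setminus\{a\}$, whose $(i',j')$ block equals $\varphi_{b'}^{(t)}(\mathbf{x}_{[\bar{s}]})(b,z)$ if $i'=j'$ and $0$ otherwise, where $b'=b$ in part~(1) and $b'=h$ in part~(2). Hence this matrix is ${\bf I}_{\bar{l}}\otimes\varphi_{b'}^{(t)}(\mathbf{x}_{[\bar{s}]})(b,z)$, and it only remains to read off $\varphi_{b'}^{(t)}(\mathbf{x}_{[\bar{s}]})(b,z)$ from \eqref{eq2} and \eqref{eq2-1-1}. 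For part~(1) this entry is $L^{(t)}(x_b)$ when $z=b$ and $-L^{(t)}(x_z)$ when $z\neq b$, since row $b$ is precisely the distinguished row of $\varphi_b^{(t)}$. For part~(2), with $h\in[\bar{s}+1]\setminus\{b\}$, the distinguished row of $\varphi_h^{(t)}$ is not the $b$-th row (this covers both $h<\bar{s}$ via \eqref{eq2} and $h=\bar{s}$ via \eqref{eq2-1-1}), so the entry is $L^{(t)}(x_b)$ when $z=b$ and $0$ when $z\neq b$, which is the claimed formula.

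For part (3), the case $e\neq a$: here Lemma~\ref{lm1} gives that the $(i,j)$ block of $\varPsi_{\tilde{n},e}(\varphi_h^{(t)}(\mathbf{x}_{[\bar{s}]}))$ is $\varphi_h^{(t)}(\mathbf{x}_{[\bar{s}]})(i_e,j_e)$ exactly when $i$ and $j$ agree on all coordinates in $[\tilde{n}]\setminus\{e\}$. Since $a\neq e$, coordinate $a$ is among those forced to agree, so after extracting $i_a=b$ and $j_a=z$ every surviving block forces $b=z$; in particular the whole matrix is zero when $z\neq b$. When $z=b$, the surviving index pattern is: coordinate $a$ pinned to $b$, coordinate $e$ free on the row and column sides independently, and all other coordinates forced equal. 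This is exactly the index pattern of $\bar{\varPsi}_{\tilde{n},\bar{e}}(\varphi_h^{(t)}(\mathbf{x}_{[\bar{s}]}))$ on the $\tilde{n}-1$ coordinates obtained by deleting coordinate $a$, where $\bar{e}=e$ if $e<a$ and $\bar{e}=e-1$ if $e>a$ (deleting coordinate $a$ shifts down by one the positions of the coordinates lying above $a$); comparing with the definition \eqref{df-varPsi} of $\bar{\varPsi}$ completes the proof.

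The main obstacle I anticipate is purely bookkeeping: verifying that the tensor-factor ordering produced by the three successive operations $\mathbf{R}_{a,b}\otimes{\bf I}_t$, $\varPsi_{\tilde{n},e}$, and $\mathbf{R}_{a,z}^{\top}$ agrees on the nose with the explicit ${\bf I}\otimes\cdots\otimes{\bf I}$ definitions of $\mathbf{R}_{a,b}$, $\varPsi_{\tilde{n},a}$ and $\bar{\varPsi}_{\tilde{n},\bar{e}}$, and in particular justifying the index shift $\bar{e}=e$ or $e-1$ in part~(3). The clean way to do this is to keep everything in the explicit Kronecker form and apply the mixed-product property $({\bf A}\otimes{\bf B})({\bf C}\otimes{\bf D})={\bf A}{\bf C}\otimes{\bf B}{\bf D}$ repeatedly, exactly as in the proofs of Lemmas~4 and~5 of \cite{LWHY}; this routine verification is why the details can be safely omitted.
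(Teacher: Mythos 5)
Your proof is correct and follows essentially the route the paper intends: the paper omits the argument, deferring to Lemmas 4 and 5 of \cite{LWHY}, and your block-entry computation via Lemma \ref{lm1} together with the row/column-extraction interpretation of $\mathbf{R}_{a,b}$ and $\mathbf{R}_{a,z}^{\top}$ is exactly that standard argument, including the correct handling of the index shift $\bar{e}$ in part (3).
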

\begin{proof}
	In Lemma 4 of \cite{LWHY}, by replacing $s$ with $\bar{s}$, $\bar{n}$ with $\tilde{n}$, the conclusion can be obtained.
\end{proof}
\begin{lemma}\label{lm5}
	For any $a\in [\tilde{n}]$, $b,z\in[\bar{s}]$, and any positive integer
	$t$, we can get the following results.\\
	{\rm (1)} \begin{center}
		$(\mathbf{R}_{a,\bar{s}}\otimes {\bf I}_t)\varPsi_{\tilde{n},a}(\varphi_{\bar{s}}^{(t)}(\mathbf{x}_{[\bar{s}]}))\mathbf{R}_{a,z}^{\top}=
		{\bf I}_{\bar{l}}\otimes L^{(t)}(x_z).$
	\end{center}
	{\rm (2)} For any $b\in[\bar{s}]$,
	\begin{center}
		$(\mathbf{R}_{a,\bar{s}}\otimes {\bf I}_t)\varPsi_{\tilde{n},a}(\varphi_b^{(t)}(\mathbf{x}_{[\bar{s}]}))\mathbf{R}_{a,z}^{\top}=
		\begin{cases}
			{\bf I}_{\bar{l}}\otimes L^{(t)}(x_b) & {\rm if\ } z=b, \\	
			0 & {\rm if\ } z\neq b.
		\end{cases}$
	\end{center}
	{\rm (3)} For any $e\in[\tilde{n}]\setminus\{a\}$, $h\in[\bar{s}+1]$,
	\begin{center}
		$(\mathbf{R}_{a,\bar{s}}\otimes {\bf I}_t)\varPsi_{\tilde{n},e}(\varphi_h^{(t)}(\mathbf{x}_{[\bar{s}]}))\mathbf{R}_{a,z}^{\top}=
		\bar{\varPsi}_{\tilde{n},\bar{e}}(\varphi_h^{(t)}(\mathbf{x}_{[\bar{s}]})), $
	\end{center}
	where $\bar{e}$ is defined the same as that in Lemma \ref{lm4}.
\end{lemma}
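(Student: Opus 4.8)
\emph{Proof idea.} The plan is to derive all three identities as direct consequences of Lemma~\ref{lm4} via the relation $\mathbf{R}_{a,\bar{s}}=\sum_{z\in[\bar{s}]}\mathbf{R}_{a,z}$, which after tensoring with ${\bf I}_t$ reads $\mathbf{R}_{a,\bar{s}}\otimes{\bf I}_t=\sum_{w\in[\bar{s}]}(\mathbf{R}_{a,w}\otimes{\bf I}_t)$. Substituting this into the left-hand side of each part of Lemma~\ref{lm5} expresses it as a sum over $w\in[\bar{s}]$ of terms of the form $(\mathbf{R}_{a,w}\otimes{\bf I}_t)\,\varPsi_{\tilde{n},a}(\varphi_\bullet^{(t)}(\mathbf{x}_{[\bar{s}]}))\,\mathbf{R}_{a,z}^{\top}$ — with $\varPsi_{\tilde{n},e}$, $e\neq a$, replacing $\varPsi_{\tilde{n},a}$ for part~(3) — each of which is evaluated directly by the matching item of Lemma~\ref{lm4}, where the role of the ``$b$'' in Lemma~\ref{lm4} is played by the summation index $w$. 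It then remains only to add up these $\bar{s}$ contributions and to note that at most one value of $w$ yields a nonzero matrix.

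For part~(1) I would apply Lemma~\ref{lm4}(2) with $h=\bar{s}$ to each summand; this is legitimate because $w\in[\bar{s}]$ forces $w\neq\bar{s}$. It gives that the $w$-th summand equals ${\bf I}_{\bar{l}}\otimes L^{(t)}(x_w)$ if $z=w$ and $0$ otherwise, so the sum reduces to the single term $w=z$, namely ${\bf I}_{\bar{l}}\otimes L^{(t)}(x_z)$. Part~(3) is handled the same way: since $e\neq a$, Lemma~\ref{lm4}(3) gives that the $w$-th summand is $\bar{\varPsi}_{\tilde{n},\bar{e}}(\varphi_h^{(t)}(\mathbf{x}_{[\bar{s}]}))$ if $z=w$ and $0$ otherwise, and because this matrix does not depend on $w$ the sum collapses to a single copy of $\bar{\varPsi}_{\tilde{n},\bar{e}}(\varphi_h^{(t)}(\mathbf{x}_{[\bar{s}]}))$, as claimed.

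Part~(2) needs slightly more care, since the summand with $w=b$ is governed by Lemma~\ref{lm4}(1) (whose off-diagonal value carries a minus sign), while every summand with $w\neq b$ is governed by Lemma~\ref{lm4}(2) with $h=b$ (valid as $b\neq w$). If $z=b$, the $w=b$ summand equals ${\bf I}_{\bar{l}}\otimes L^{(t)}(x_b)$ and each $w\neq b$ summand vanishes because $z\neq w$, so the total is ${\bf I}_{\bar{l}}\otimes L^{(t)}(x_b)$. If $z\neq b$, the $w=b$ summand contributes $-{\bf I}_{\bar{l}}\otimes L^{(t)}(x_z)$, the unique summand with $w=z$ (which is then $\neq b$) contributes $+{\bf I}_{\bar{l}}\otimes L^{(t)}(x_z)$, and all other summands are $0$; the two nonzero terms cancel and the total is $0$. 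This matches the asserted value in both cases.

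There is no real obstacle here — the argument is bookkeeping over $\bar{s}$ summands — and the only point that warrants attention is the cancellation in part~(2), i.e.\ recognising that the signed off-diagonal block produced by $\varphi_b^{(t)}$ in the $w=b$ summand is exactly annihilated by the diagonal block produced in the $w=z$ summand. Should one wish to avoid invoking Lemma~\ref{lm4}, an equivalent route is to expand $\varPsi_{\tilde{n},a}(\varphi_\bullet^{(t)}(\mathbf{x}_{[\bar{s}]}))$ blockwise using Lemma~\ref{lm1} and the defining formulas \eqref{eq2}, \eqref{eq2-1-1}, and \eqref{eq6}; this is self-contained but merely re-proves Lemma~\ref{lm4} en route, so it is longer.
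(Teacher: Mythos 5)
Your proof is correct. The paper itself omits the proof of Lemma \ref{lm5}, deferring to the analogous Lemma 5 in \cite{LWHY}, so your reduction of Lemma \ref{lm5} to Lemma \ref{lm4} via $\mathbf{R}_{a,\bar{s}}\otimes{\bf I}_t=\sum_{w\in[\bar{s}]}(\mathbf{R}_{a,w}\otimes{\bf I}_t)$ supplies a clean, self-contained argument where the paper gives none. All three parts check out: in (1) and (3) only the summand $w=z$ survives; in (2) you correctly identify that for $z\neq b$ the off-diagonal block $-{\bf I}_{\bar{l}}\otimes L^{(t)}(x_z)$ from the $w=b$ summand (Lemma \ref{lm4}(1)) cancels against the diagonal block $+{\bf I}_{\bar{l}}\otimes L^{(t)}(x_z)$ from the $w=z$ summand (Lemma \ref{lm4}(2)), which is exactly the column-sum structure of $\varphi_b^{(t)}$ in \eqref{eq2} that makes the lemma work. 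No gaps.
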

\begin{proof}
From the relation $\mathbf{R}_{a,\bar{s}}=\sum_{b\in[\bar{s}]}\mathbf{R}_{a,b}$, for any $e\in[\tilde{n}]$ and $h\in[\bar{s}+1]$, we obtain
\begin{align*}
	(\mathbf{R}_{a,\bar{s}}\otimes {\bf I}_t)\varPsi_{\tilde{n},e}(\varphi_h^{(t)}(\mathbf{x}_{[\bar{s}]}))\mathbf{R}_{a,z}^{\top}=\sum_{b\in[\bar{s}]}(\mathbf{R}_{a,b}\otimes {\bf I}_t)\varPsi_{\tilde{n},e}(\varphi_h^{(t)}(\mathbf{x}_{[\bar{s}]}))\mathbf{R}_{a,z}^{\top}.
\end{align*}
Hence the conclusion follows directly from  the above Lemma.
\end{proof}

For convenience, the parameters and notations for the array codes are listed in Table \ref{tab2}.

\begin{table}[http!]
	\renewcommand{\arraystretch}{1.2}
	\setlength\tabcolsep{3pt} 
	\centering
	\caption{ Code Parameters and Notation
		\label{tab2}
	} 
	\begin{tabular}{|c|c|c|c|c|c|c|c|c|}	
		\hline
		Notation & Meaning & Notation &	 Meaning \\ 
		\hline
		$n$ & code length & $k$ &	 code dimension/the number of systematic nodes \\ 		
		\hline
		$d$ &	the number of helper nodes  & $h$ &	the number of helper nodes\\
		\hline
		$r$  &	the number of parity nodes $n-k$ & $l$ &  the sub-packetization level of  $\bar{s}^{\tilde{n}}$\\
		\hline
		$\bar{n}$ & the number of racks in the code & $u$ & the size of a rack\\	
		\hline
		$v$ & an integer within $[u]$ & $\bar{k}$ & the number $(k-v)/u$ of systematic racks \\		
		\hline	
		$\bar{r}$ & the number $\bar{n}-\bar{k}$ of  parity racks  &	$\bar{d}$ & the number of helper racks\\		
		\hline	
		$\bar{s}$ & $\bar{d}-\bar{k}+1$ &	$\bar{h}$ &  the number of host racks\\		
		\hline	
		$\beta$ & repair bandwidth & $\gamma$&	the number of symbols accessed on the helper racks\\
		\hline	
			$\bar{l}$  & $\bar{s}^{\tilde{n}-1}$ & $\boxtimes$  & block Kronecker product in \eqref{eq1-3}  \\
			\hline	
		$\mathcal{C}$& the rack-aware $(n,k,l)$ MSR code & $\bar{\mathcal{C}}(w)$	& the intermediate $(\bar{n},\bar{k},l)$ MSR code\\
		\hline
		$\mathbf{H}_{iu+g}$&the parity check sub-matrix of $\mathbf{c}_{iu+g}$ in the code $\mathcal{C}$&$\bar{\mathbf{H}}_{i}$&the parity check sub-matrix of $\bar{\mathbf{c}}_{i}(w)$ in the code $\bar{\mathcal{C}}(w)$\\
		\hline
		$a\in[\tilde{n}]$&  rack group index& $b\in[\bar{s}+1]$& in rack group index \\
		\hline	
		 $\varphi_b^{(t)}(\cdot)$  & the kernel map in \eqref{eq2} or \eqref{eq2-1-1}& $\varPsi_{\tilde{n},a}(\cdot)$&blow-up transformation in \eqref{eq1} \\
		\hline	
		$\mathbf{e}_i$  & the $i$-th row vector of ${\bf I}_{\bar{s}}$ & $\mathbf{1}$ & the all 1 row vector of length
		$\bar{s}$\\
		\hline	
		$\mathbf{e}_i'$&the $i$-th row vector of ${\bf I}_{\bar{r}}$ & $\mathbf{R}_{a,b}$& the $\bar{l}\times l$ repair matrix in \eqref{eq6}\\
		\hline	
		$\mathbf{R}_{a,\bar{s}}$&the $\bar{l}\times l$ repair matrix in \eqref{eq6-1}& ${\mathcal H}$& the index set for the helper racks\\
		\hline
	\end{tabular}
\end{table}

\section{The main results}
\label{sect-main}
In this section, by employing the alignment technique to extend the coupled-layer construction in \cite{LWHY}, we propose two classes of rack-aware MSR codes with smaller sub-packetization level and linear field size, each of which contains a rack-aware MSR code and an asymptotically rack-aware MSR code.

\begin{theorem}\label{th1} 
For any positive integers $\bar{n}$, $\bar{k}$, $\bar{d}$, $h$ and $q$  satisfying  $\bar{d}\in[\bar{k}+1,\bar{n}-1]$ and $q\geq n\bar{s}+u(\Omega(\bar{s},u)+(\bar{s}-1)2^{\bar{s}-2})$ where $\Omega(\bar{s},u)$ is defined in \eqref{eq3-6-1-1},  there exist 
\begin{itemize}
\item an $(n,k,l=\bar{s}^{\lceil \bar{n}/\bar{s} \rceil})$ rack-aware MSR code over $\mathbb{F}_q$ that achieves optimal repair bandwidth for the repair of any $h\in [1, u-v]$ failed nodes within the same rack, and in particular, achieves optimal access when $h=u-v$.  
\item  an $(n,k,l=\bar{s}^{\lceil \bar{n}/\bar{s} \rceil})$ rack-aware MSR code over $\mathbb{F}_q$ that  achieves asymptotically optimal repair bandwidth for the repair of any $h\in[u-v+1,u]$ failed nodes within the same rack.
\end{itemize}
\end{theorem}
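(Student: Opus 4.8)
The plan is to describe the code by an explicit parity-check matrix assembled from the blow-up maps $\varPsi_{\tilde n,a}$ and the kernel maps $\varphi_b^{(t)}$, and then to verify the MDS property and the repair schemes as separate steps, following the coupled-layer blueprint of \cite{LWHY} but attaching the rack structure through an element $\theta$ of multiplicative order $u$ (the alignment device). Writing $\tilde n=\lceil\bar n/\bar s\rceil$, I would split the $\bar n$ racks into $\tilde n$ layers of at most $\bar s$ racks each: a rack $i'$ lying in layer $a(i')$ receives a position $b(i')\in[\bar s]$, and a node $i=i'u+g$ with $g\in[u]$ is assigned, in each of the $r=\bar r u-v$ parity block rows, a block of one of the forms $\varPsi_{\tilde n,a(i')}(\varphi_{b(i')}^{(\cdot)}(\theta^{g}\mathbf x^{(i')}_{[\bar s]}))$, $\varPsi_{\tilde n,a(i')}(\varphi_{\bar s}^{(\cdot)}(\theta^{g}\mathbf x^{(i')}_{[\bar s]}))$, or $\mathbf I_l\otimes L^{(\cdot)}(\theta^{g}\lambda_{i'})$. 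The blocks are arranged so that (i) the $u$ columns of a single rack differ only through the scalar $\theta^{g}$, (ii) left-multiplying the parity equations by $\mathbf R_{a,b}\otimes\mathbf I$ triggers Lemmas~\ref{lm4} and~\ref{lm5}, and (iii) the $\bar r-\bar s=\bar n-\bar d-1$ surplus rack parities sit in the diagonal $\varphi_{\bar s}^{(\cdot)}$ part, so that they can later be folded away by the matrix $\mathbf V$ of Lemma~\ref{lm3}.

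For the MDS property I would show that, for every choice of $r$ erased nodes, the corresponding $r\times r$ block submatrix of $\mathbf H$ is invertible. By Lemmas~\ref{lm1-1} and~\ref{lm2} this reduces to the invertibility of a bounded-size matrix of the shape $\boldsymbol\varphi_{\mathcal B}(\theta^{\mathcal G_{\mathcal B}}\odot\mathbf x_{[\bar s t]})$, whose determinant --- by the structure of \eqref{eq2}--\eqref{eq4} --- is a fixed nonzero polynomial in the evaluation data $\{\mathbf x^{(i')}_{[\bar s]},\lambda_{i'}\}$ as long as these and the powers of $\theta$ avoid a prescribed list of coincidences (racks sharing a layer, the $u$ cosets generated by $\theta$, and the $O((\bar s-1)2^{\bar s-2})$ index patterns internal to each kernel block). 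Counting those forbidden configurations is exactly what produces the quantity $\Omega(\bar s,u)$ together with the term $u(\bar s-1)2^{\bar s-2}$; a greedy choice of the evaluation data (equivalently a Schwartz-Zippel argument) then succeeds as soon as $q\ge n\bar s+u(\Omega(\bar s,u)+(\bar s-1)2^{\bar s-2})$, which is linear in $n$ for fixed $\bar s,u$.

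For the repair of $h\le u-v$ failures in a host rack $i'$ (layer $a$, position $b$) with failed set $\mathcal I=\{g_0,\dots,g_{h-1}\}$ from helper racks $\mathcal R$, the scheme is as follows: each helper rack $j$ sends the $\mathbf R_{a,\cdot}$-projection --- that is, $l/\bar s$ coordinates --- of a single $\theta$-aligned combination of the contents of its $u$ nodes, and the host rack multiplies the selected parity equations on the left by $\mathbf R_{a,b}\otimes\mathbf I$. By Lemmas~\ref{lm4} and~\ref{lm5} the blow-up in layer $a$ collapses: helper racks in layer $a$ contribute plain $L^{(\cdot)}$ blocks, helper racks in other layers contribute the reduced blocks $\bar{\varPsi}_{\tilde n,\bar e}(\cdot)$, and the matrix $\mathbf V$ of Lemma~\ref{lm3} deletes the $\bar r-\bar s$ surplus parity rows. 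Since the $u$ columns of each helper rack are the scalar multiples $\theta^{g}$ of one another, those columns fuse into the single vector the rack actually transmitted, and for the same reason the $u-h$ surviving local nodes subtract off their own contribution; one then checks that the remaining square system is nonsingular --- once more a Vandermonde-type condition covered by the same field-size bound --- which recovers $\mathbf c_{i'u+g_0},\dots,\mathbf c_{i'u+g_{h-1}}$ with inter-rack bandwidth $\bar d\cdot hl/\bar s$, meeting \eqref{eq1-1}. When $h=u-v$ the combination each helper rack transmits can be formed so that every one of its $u$ nodes reads only its own $l/\bar s$ selected coordinates, which gives access $\bar d u l/\bar s=\bar d h u l/(\bar s(u-v))$, matching Lemma~\ref{oa}; hence optimal access.

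For $u-v+1\le h\le u$ the $u-h$ local nodes no longer supply enough degrees of freedom to cancel the rack's own contribution, so I would adjust the scheme to download from a small number of additional coordinates per helper rack; since this extra amount is of lower order, showing that the ratio of the resulting bandwidth to $\bar d h l/\bar s$ tends to $1$ as the parameters grow yields the asymptotic optimality. I expect the main obstacle to be the alignment step inside the repair: choosing the $\theta^{g}$-multipliers, the layer positions and the kernel maps so that each helper rack genuinely needs to send only $hl/\bar s$ symbols \emph{and} the induced square recovery system stays invertible for every pattern of $h$ failures, every host rack and every admissible choice of $\bar d$ helper racks. This is where the rack structure and the coupled-layer tensor structure interact most delicately, and it is also what pins down the precise shape of the field-size bound.
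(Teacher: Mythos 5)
Your proposal follows the paper's general blueprint (coupled-layer construction plus a $\theta$-alignment of order $u$, MDS via Lemmas~\ref{lm2}--\ref{lm3} and a Schwartz--Zippel count, repair via the projections $\mathbf R_{a,b}$ and Lemmas~\ref{lm4}--\ref{lm5}), but the central mechanism that makes multi-failure repair work is missing, and this leaves a genuine gap. The paper assigns to node $(a\bar s+b)u+g$ the single block $\varPsi_{\tilde n,a}(\varphi_b^{(r)}(\theta^g\boldsymbol\lambda_{a\bar s^2+b\bar s+[\bar s]}))$ and then, for each $w\in[u]$, left-multiplies the parity equations by $\mathbf I_l\otimes\mathbf Q_w$ (the row-selector picking rows $\equiv w\pmod u$). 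Because $L^{(r)}(\theta^g x)$ restricted to those rows equals $\theta^{gw}x^w L^{(\bar r)}(x^u)$ (or $L^{(\bar r-1)}$ when $w\ge u-v$), this produces $u$ induced length-$\bar n$ MDS array codes $\bar{\mathcal C}(w)$ on the rack-level symbols $\bar{\mathbf c}_{a\bar s+b}(w)=\varPsi_{\tilde n,a}(\mathrm{diag}(\boldsymbol\lambda^w))\sum_g\theta^{gw}\mathbf c_{(a\bar s+b)u+g}$, of dimension $\bar k$ for $w\in[u-v]$ and $\bar k+1$ for $w\in[u-v,u)$. Repairing $h$ failures means repairing the single node $a\bar s+b$ in each of the $h$ codes $\bar{\mathcal C}(0),\dots,\bar{\mathcal C}(h-1)$ and then inverting an $h\times h$ Vandermonde system in $\theta^{g_iw}$ against the local survivors. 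Your scheme instead has each helper rack send ``a single $\theta$-aligned combination'' of $l/\bar s$ coordinates, which delivers only $\bar d\,l/\bar s$ symbols in total and cannot yield the claimed bandwidth $\bar d h l/\bar s$ for $h>1$; without the indexed family of combinations $\sum_g\theta^{gw}\mathbf R_{a,b}\mathbf c_{iu+g}$, $w\in[h]$, the factor $h$ has no source and the recovery system for $h>1$ is not actually defined.

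Two further points. First, your diagnosis of why optimality fails for $h\in[u-v+1,u]$ --- that the $u-h$ local nodes lack degrees of freedom --- is not the right reason: the local Vandermonde step works for any $h\le u$. The loss comes from the induced codes $\bar{\mathcal C}(w)$ with $w\in[u-v,u)$ having redundancy $\bar r-1$ rather than $\bar r$ (the selector $\mathbf Q_w$ has only $\bar r-1$ rows there, since $r=(\bar r-1)u+(u-v)$), so each such layer needs $\bar d+1$ helper racks, costing exactly $(h-u+v)l/\bar s$ extra and giving the ratio bound $1+1/\bar d$. Second, your description of the parity blocks mixes $\varphi_b$, $\varphi_{\bar s}$ and plain $\mathbf I_l\otimes L^{(\cdot)}$ forms within one node; in Theorem~\ref{th1} every node uses one uniform $\varphi_b^{(r)}$ block and the map $\varphi_{\bar s}$ only enters the second construction (Theorem~\ref{th2}). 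The MDS argument also needs the cross-layer induction of Lemma~\ref{lm8} (repeated application of Lemma~\ref{lm3} to reach block-triangular form), not just the single-layer reduction you describe; and it must be run twice, once for $\mathcal C$ with the sets $\mathcal G_{\mathcal B}$ (giving $\Omega(\bar s,u)$) and once for the induced codes $\bar{\mathcal C}(w)$ with evaluation points $\boldsymbol\lambda^u$ (giving the $(\bar s-1)2^{\bar s-2}$ term), which is why the field-size bound is a sum of the two.
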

In the proof of Theorem \ref{th1} we first construct the rack-aware MSR code under the assumption that $\bar{n}$ is divisible by $\bar{s}$. Based on this code, when $\bar{n}$ is not divisible by $\bar{s}$, the construction can be easily derived by truncating a slightly longer code from the divisible case. For the details, please see Section \ref{Sect3}. By modifying the construction of rack-aware MSR codes in Theorem \ref{th1}, we can further reduce the sub-packetization level of the rack-aware MSR codes while maintaining a linear field size. That is the following result whose proof is included in Section \ref{Sect4}.
\begin{theorem}\label{th2} 
	For any positive integers $\bar{n}$, $\bar{k}$, $\bar{d}$, $h$ and $q$ satisfying  $\bar{d}\in[\bar{k}+1,\bar{n}-1]$ and $q\geq n\bar{s}+u(\Omega(\bar{s}+1,u)+\bar{s}2^{\bar{s}-1})$, there exist 
	\begin{itemize}
		\item  an $(n,k,l=\bar{s}^{\lceil\bar{n}/(\bar{s}+1)\rceil})$ rack-aware MSR code over $\mathbb{F}_q$ that achieves optimal repair bandwidth when repairing any $h \in [1, u-v]$ failed nodes within the same rack;
		\item an $(n,k,l=\bar{s}^{\lceil\bar{n}/(\bar{s}+1)\rceil})$ rack-aware MSR code over $\mathbb{F}_q$ that  achieves asymptotically optimal repair bandwidth when repairing any $h \in [u-v+1, u]$ failed nodes within the same rack.
		
	\end{itemize}
\end{theorem}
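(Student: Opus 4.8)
\textbf{Proof proposal for Theorem \ref{th2}.}

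The plan is to reuse the machinery developed for Theorem \ref{th1} and to obtain the smaller sub-packetization $l=\bar{s}^{\lceil\bar{n}/(\bar{s}+1)\rceil}$ by grouping the $\bar{n}$ racks into blocks of size $\bar{s}+1$ rather than $\bar{s}$. Concretely, I would first assume $(\bar{s}+1)\mid\bar{n}$ and set $\tilde{n}=\bar{n}/(\bar{s}+1)$, so that $l=\bar{s}^{\tilde{n}}$. Each block of $\bar{s}+1$ consecutive racks is assigned the same ``coordinate'' $a\in[\tilde{n}]$ in the coupled-layer structure, but now within a block the $\bar{s}+1$ racks are distinguished by the index $b\in[\bar{s}+1]$, using the full family of kernel maps $\varphi_b^{(t)}$ for $b\in[\bar{s}+1]$ — including the degenerate map $\varphi_{\bar{s}}^{(t)}$ defined in \eqref{eq2-1-1}. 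The parity-check matrix is then built exactly as in the Theorem \ref{th1} construction: for each rack we place a block row of parity checks whose coupling across layers is governed by $\varPsi_{\tilde{n},a}(\varphi_b^{(m)}(\cdot))$, with the $u$ nodes inside a rack separated by distinct powers $\theta^{g}$ of the order-$u$ element $\theta$, and with the evaluation points $\mathbf{x}$ chosen from $\mathbb{F}_q$ so that all the relevant square kernel matrices $\boldsymbol{\varphi}_{\mathcal B}(\theta^{\mathcal G_{\mathcal B}}\odot\mathbf{x})$ are invertible. The field-size bound $q\geq n\bar{s}+u(\Omega(\bar{s}+1,u)+\bar{s}2^{\bar{s}-1})$ is exactly what one needs: the $n\bar{s}$ term supplies enough distinct evaluation coordinates for the MDS property, while the $u(\Omega(\bar{s}+1,u)+\bar{s}2^{\bar{s}-1})$ term (with $\bar{s}$ replaced by $\bar{s}+1$ in $\Omega$, and the repair-count term now $\bar{s}2^{\bar{s}-1}$ reflecting subsets of the size-$(\bar{s}+1)$ block) counts the additional non-degeneracy conditions forced by the extra index $b=\bar{s}$.

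The verification splits into the same three tasks as for Theorem \ref{th1}. First, the MDS property: any $r=\bar{n}u-k$ columns of the parity-check matrix must be independent, which reduces via Lemma \ref{lm2} and the block structure to invertibility of kernel matrices $\boldsymbol{\varphi}_{\mathcal B}(\cdot)$ of the appropriate shape; a Schwartz–Zippel / distinct-coordinate argument over $\mathbb{F}_q$ with the stated field size closes this. Second, optimal repair of $h\in[1,u-v]$ failed nodes in a host rack lying in block $a$: one hits the parity equations with the ``repair matrix'' $\mathbf{R}_{a,b}$ (or $\mathbf{R}_{a,\bar{s}}$) and invokes Lemmas \ref{lm4} and \ref{lm5} to see that the downloaded data from each helper rack collapses to $hl/\bar{s}$ symbols and that the resulting linear system for the erased content is governed by an invertible kernel matrix (here Lemma \ref{lm3} handles the ``extra rows'' coming from $d$ being larger than the minimum). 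The case $h=u-v$ again gives optimal access because the relevant $\mathbf{R}_{a,b}$-projection touches only $hl/(\bar{s}(u-v))$ symbols per node. Third, for $u-v+1\leq h\leq u$ one uses the same repair scheme but now the host rack must additionally download a vanishing (relative) fraction of extra symbols, giving asymptotic optimality; this is a routine counting step once the exact-optimal case is in hand. Finally, when $(\bar{s}+1)\nmid\bar{n}$ I would, exactly as announced for Theorem \ref{th1}, build the code for $\bar{n}'=(\bar{s}+1)\lceil\bar{n}/(\bar{s}+1)\rceil\geq\bar{n}$ racks and then puncture/shorten to $\bar{n}$ racks, checking that all four properties are inherited by the shortened code.

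The main obstacle I anticipate is making the degenerate map $\varphi_{\bar{s}}^{(t)}$ coexist cleanly with the non-degenerate maps $\varphi_b^{(t)}$, $b\in[\bar{s}]$, inside a single block of $\bar{s}+1$ racks. In the original coupled-layer construction the ``special'' index is what lets one block row act as a pure local parity; here we are spending one of the $\bar{s}+1$ in-block slots on it in order to buy the larger block size (hence smaller $\tilde{n}$), and one must verify that: (i) the MDS argument still goes through — i.e., the kernel matrices that mix a $\varphi_{\bar{s}}$-column with ordinary columns remain generically invertible, which is where the extra $\bar{s}2^{\bar{s}-1}$ bad-event budget is consumed; and (ii) in the repair of a host rack, whether or not the host rack's block-slot is the special one $b=\bar{s}$, the Lemma \ref{lm4}(3) / Lemma \ref{lm5}(3) identities still deliver the clean ``off-diagonal blocks vanish'' behavior needed to confine the download to $hl/\bar{s}$ per helper rack. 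I expect the bookkeeping in (ii) — tracking which of Lemmas \ref{lm4} and \ref{lm5} applies to each (host-slot, helper-slot) pair across the $\bar{s}+1$ slots — to be the most delicate part, though conceptually it is the same alignment argument as in Theorem \ref{th1} with one extra case.
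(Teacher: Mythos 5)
Your proposal follows essentially the same route as the paper: group the racks into blocks of $\bar{s}+1$, spend the extra in-block slot on the degenerate map $\varphi_{\bar{s}}^{(t)}$ of \eqref{eq2-1-1}, prove MDS via the kernel-matrix invertibility conditions and Lemma \ref{lm3}, and repair by projecting with $\mathbf{R}_{a,b}$ (ordinary slots, Lemma \ref{lm4}) or $\mathbf{R}_{a,\bar{s}}$ (special slot, Lemma \ref{lm5}), with the extra helper rack handling $w\in[u-v,h-1]$ for the asymptotic regime. This matches the paper's Section \ref{Sect4} in all essentials, and your anticipated ``bookkeeping'' over (host-slot, helper-slot) pairs is exactly the two-case split the paper carries out.

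One claim in your sketch is wrong, though it concerns a property the theorem does not assert: you state that $h=u-v$ ``again gives optimal access.'' It does not. When the failed nodes lie in the special rack $a(\bar{s}+1)+\bar{s}$, every helper rack outside group $a$ must supply $\mathbf{R}_{a,\bar{s}}\bar{\bf c}_i(w)$, and since $\mathbf{R}_{a,\bar{s}}={\bf I}_{\bar{s}^{\tilde{n}-a-1}}\otimes{\bf 1}\otimes{\bf I}_{\bar{s}^a}$ sums over all $\bar{s}$ values of the $a$-th digit, computing this projection requires reading all $l$ symbols of each such helper node even though only $l/\bar{s}$ symbols are transmitted. So the download is optimal but the access is not, which is why Theorem \ref{th2} (unlike Theorem \ref{th1}) claims only optimal repair bandwidth and Table \ref{tab-Schemes} records ``no'' for optimal access. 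Since the theorem statement you are proving makes no access claim, this does not break your proof, but you should delete that sentence.
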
 

\subsection{An example of construction in Theorem \ref{th1}}

In this subsection, we provide an $(n=8,k=4,l=4)$ rack-aware MSR code to illustrate the main ideas of the code construction in Theorem \ref{th1}. First, we present the construction of a rack-aware MSR code.
Next, we demonstrate the idea of proving the MDS property. Finally, we present a repair scheme for the code that achieves both optimal repair bandwidth and optimal access.  
\subsubsection{ Construct a rack-aware MSR code with $l=4$}
When $(n,u,k,r,d,h,v)=(8,2,4,4,6,2,0)$, we have
$\bar{n}=n/u=4$, $\bar{k}=k/u=2$, $\bar{r}=r/u=2$,
$\bar{d}=d/u=3$, $\bar{s}=\bar{d}-\bar{k}+1=2$, $\tilde{n}=\bar{n}/\bar{s}=2$ and	$l=\bar{s}^{\tilde{n}}=4$. Let $\xi$ be a root of the primitive polynomial $x^3+2x+1$ over
$\mathbb{F}_{27}$, $\theta=-1$ be an element in $\mathbb{F}_{27}$ with multiplicative order
$2$, and $\lambda_{i}=\xi^i$ for each integer $i\in[8]$. For any $b\in[2]$ and $t\in[4]$, from \eqref{eq2} we have 
\begin{align*}
	&\varphi_0^{(t)}(\boldsymbol{\lambda}_{[2]})={\left[\begin{array}{llll}
			L^{(t)}(\lambda_0) & -L^{(t)}(\lambda_1) \\			
			0 & L^{(t)}(\lambda_1) \\	
		\end{array}
		\right]},
	&\varphi_0^{(t)}(-\boldsymbol{\lambda}_{[2]})={\left[\begin{array}{llll}
			L^{(t)}(-\lambda_0) & -L^{(t)}(-\lambda_1) \\			
			0 & L^{(t)}(-\lambda_1) \\	
		\end{array}
		\right]},\\
	&	\varphi_1^{(t)}(\boldsymbol{\lambda}_{2+[2]})={\left[\begin{array}{llll}
			L^{(t)}(\lambda_2) & 0 \\			
			-L^{(t)}(\lambda_2) & L^{(t)}(\lambda_3) \\	
		\end{array}
		\right]},
	&	\varphi_1^{(t)}(-\boldsymbol{\lambda}_{2+[2]})={\left[\begin{array}{llll}
			L^{(t)}(-\lambda_2) & 0 \\			
			-L^{(t)}(-\lambda_2) & L^{(t)}(-\lambda_3) \\	
		\end{array}
		\right]},\\
	&\varphi_0^{(t)}(\boldsymbol{\lambda}_{4+[2]})={\left[\begin{array}{llll}
			L^{(t)}(\lambda_4) & -L^{(t)}(\lambda_5) \\			
			0 & L^{(t)}(\lambda_5) \\	
		\end{array}
		\right]},
	&\varphi_0^{(t)}(-\boldsymbol{\lambda}_{4+[2]})={\left[\begin{array}{llll}
			L^{(t)}(-\lambda_4) & -L^{(t)}(-\lambda_5) \\			
			0 & L^{(t)}(-\lambda_5) \\	
		\end{array}
		\right]},\\
	&	\varphi_1^{(t)}(\boldsymbol{\lambda}_{6+[2]})={\left[\begin{array}{llll}
			L^{(t)}(\lambda_6) & 0 \\			
			-L^{(t)}(\lambda_6) & L^{(t)}(\lambda_7) \\	
		\end{array}
		\right]},
	&	\varphi_1^{(t)}(-\boldsymbol{\lambda}_{6+[2]})={\left[\begin{array}{llll}
			L^{(t)}(-\lambda_6) & 0 \\			
			-L^{(t)}(-\lambda_6) & L^{(t)}(-\lambda_7) \\	
		\end{array}
		\right]}.
\end{align*}
For each integer $t\in[4]$, from \eqref{eq1} we can obtain 
\begin{align*}
	&\varPsi_{2,0}(\varphi_0^{(t)}(\boldsymbol{\lambda}_{[2]}))=\mathbf{I}_2\otimes \varphi_0^{(t)}(\boldsymbol{\lambda}_{[2]}); \ \ \ \ \ \ \ \varPsi_{2,0}(\varphi_0^{(t)}(-\boldsymbol{\lambda}_{[2]}))=\mathbf{I}_2\otimes \varphi_0^{(t)}(-\boldsymbol{\lambda}_{[2]});\\
	&\varPsi_{2,0}(\varphi_1^{(t)}(\boldsymbol{\lambda}_{2+[2]}))=\mathbf{I}_2\otimes \varphi_1^{(t)}(\boldsymbol{\lambda}_{2+[2]}); \ \varPsi_{2,0}(\varphi_1^{(t)}(-\boldsymbol{\lambda}_{2+[2]}))=\mathbf{I}_2\otimes \varphi_1^{(t)}(-\boldsymbol{\lambda}_{2+[2]});\\
	&\varPsi_{2,1}(\varphi_0^{(t)}(\boldsymbol{\lambda}_{4+[2]}))=\mathbf{I}_2\boxtimes \varphi_0^{(t)}(\boldsymbol{\lambda}_{4+[2]}); \  \varPsi_{2,1}(\varphi_0^{(t)}(-\boldsymbol{\lambda}_{4+[2]}))=\mathbf{I}_2\boxtimes \varphi_0^{(t)}(-\boldsymbol{\lambda}_{4+[2]});\\
	&\varPsi_{2,1}(\varphi_1^{(t)}(\boldsymbol{\lambda}_{6+[2]}))=\mathbf{I}_2\boxtimes \varphi_1^{(t)}(\boldsymbol{\lambda}_{6+[2]}); \ \varPsi_{2,1}(\varphi_1^{(t)}(-\boldsymbol{\lambda}_{6+[2]}))=\mathbf{I}_2\boxtimes \varphi_1^{(t)}(-\boldsymbol{\lambda}_{6+[2]}).
\end{align*} 
Let us consider an array code $\mathcal{C}$ defined by the following parity check equations over $\mathbb{F}_{27}$.
\begin{equation}\label{eq3-1e}
	\mathcal{C}=\{({\bf c}_0,{\bf c}_1,\ldots,{\bf c}_{7}):\sum_{a\in[2]}\sum_{b\in[2]}\sum_{g\in[2]}\mathbf{H}_{4a+2b+g}{\bf c}_{4a+2b+g}=0\},
\end{equation}
\noindent where 
\begin{align}\label{eq3-2e}
\mathbf{H}_{4a+2b+g}=\varPsi_{2,a}(\varphi_b^{(4)}((-1)^g\boldsymbol{\lambda}_{4a+2b+[2]})), a\in[2], b\in[2], g\in[2].
\end{align}
To provide a clear description of the code construction, the sequence of parity check submatrices $\mathbf{H}_0, \mathbf{H}_1, \ldots, \mathbf{H}_7$ is presented as follows:
\begin{align*}
	&\mathbf{H}_0={\left[\begin{array}{llllll}
			L^{(4)}(\lambda_0) & -L^{(4)}(\lambda_1) & 0&0\\			
			0 & L^{(4)}(\lambda_1) & 0&0\\
			0& 0& L^{(4)}(\lambda_0) & -L^{(4)}(\lambda_1)\\
			0& 0& 0 & L^{(4)}(\lambda_1)
		\end{array}
		\right]},
	&\mathbf{H}_1={\left[\begin{array}{llllllll}
			L^{(4)}(-\lambda_0) & -L^{(4)}(-\lambda_1) & 0&0\\			
			0 & L^{(4)}(-\lambda_1) & 0&0\\
			0& 0& L^{(4)}(-\lambda_0) & -L^{(4)}(-\lambda_1)\\
			0& 0& 0 & L^{(4)}(-\lambda_1)
		\end{array}
		\right]},\\
	&\mathbf{H}_2={\left[\begin{array}{llllllll}
			L^{(4)}(\lambda_2) & 0 & 0&0\\			
			-L^{(4)}(\lambda_2) & L^{(4)}(\lambda_3) & 0&0\\
			0 & 0 & L^{(4)}(\lambda_2) & 0\\			
			0&0 &-L^{(4)}(\lambda_2) & L^{(4)}(\lambda_3)
		\end{array}
		\right]},
	&\mathbf{H}_3={\left[\begin{array}{llllllll}
			L^{(4)}(-\lambda_2) & 0 & 0&0\\			
			-L^{(4)}(-\lambda_2) & L^{(4)}(-\lambda_3) & 0&0\\
			0 & 0 & L^{(4)}(-\lambda_2) & 0\\			
			0 & 0 &-L^{(4)}(-\lambda_2) & L^{(4)}(-\lambda_3)
		\end{array}
		\right]},
\end{align*}
\begin{align*}
	&\mathbf{H}_4={\left[\begin{array}{llllllll}
			L^{(4)}(\lambda_4) & 0 & -L^{(4)}(\lambda_5)&0\\			
			0&L^{(4)}(\lambda_4) & 0 & -L^{(4)}(\lambda_5)\\
			0 & 0 &L^{(4)}(\lambda_5)&0\\			
			0&0 & 0 & L^{(4)}(\lambda_5)\\
		\end{array}
		\right]},
	&\mathbf{H}_5={\left[\begin{array}{llllllll}
			L^{(4)}(-\lambda_4) & 0 & -L^{(4)}(-\lambda_5)&0\\			
			0&L^{(4)}(-\lambda_4) & 0 & -L^{(4)}(-\lambda_5)\\
			0 & 0 &L^{(4)}(-\lambda_5)&0\\			
			0&0 & 0 & L^{(4)}(-\lambda_5)\\
		\end{array}
		\right]},\\
	&\mathbf{H}_6={\left[\begin{array}{llllllll}
			L^{(4)}(\lambda_6) & 0 & 0&0\\			
			0&L^{(4)}(\lambda_6) & 0 & 0\\
			-L^{(4)}(\lambda_6) & 0 &L^{(4)}(\lambda_7)&0\\			
			0&-L^{(4)}(\lambda_6) & 0 & L^{(4)}(\lambda_7)\\
		\end{array}
		\right]},
	&\mathbf{H}_7={\left[\begin{array}{llllllll}
			L^{(4)}(-\lambda_6) & 0 & 0&0\\			
			0&L^{(4)}(-\lambda_6) & 0 & 0\\
			-L^{(4)}(-\lambda_6) & 0 &L^{(4)}(-\lambda_7)&0\\			
			0&-L^{(4)}(-\lambda_6) & 0 & L^{(4)}(-\lambda_7)\\
		\end{array}
		\right]}.
\end{align*}

Let $\mathbf{e}_i'$ be $i$-th row of ${\bf I}_4$, and $\mathbf{Q}_w=[(\mathbf{e}_w')^{\top},(\mathbf{e}_{2+w}')^{\top}]^{\top}$ where $w\in[2]$. Note that ${\bf I}_{2^a}\otimes({\bf I}_2\otimes \mathbf{Q}_w)={\bf I}_{2^a}\boxtimes({\bf I}_2\otimes \mathbf{Q}_w)$ where $w\in[2]$, it is easy to see that 
\begin{align*}
	{\bf I}_4\otimes Q_w=&{\bf I}_{2^{2-a-1}}\otimes({\bf I}_{2^a}\otimes({\bf I}_{2}\otimes Q_w))\\
	=&{\bf I}_{2^{2-a-1}}\otimes({\bf I}_{2^a}\boxtimes({\bf I}_{2}\otimes Q_w))\\
	=&\varPsi_{2,a}({\bf I}_2\otimes \mathbf{Q}_w).
\end{align*}
By the parity check equation of this code we have
\begin{align}
	&({\bf I}_4\otimes Q_w)\sum_{a\in[2]}\sum_{b\in[2]}\sum_{g\in[2]}\mathbf{H}_{4a+2b+g}{\bf c}_{4a+2b+g}	\nonumber\\
=&\sum_{a\in[2]}\sum_{b\in[2]}\sum_{g\in[2]}\varPsi_{2,a}({\bf I}_2\otimes \mathbf{Q}_w)\varPsi_{2,a}(\varphi_b^{(4)}((-1)^g\boldsymbol{\lambda}_{4a+2b+[2]})){\bf c}_{4a+2b+g}	\nonumber \\
=&\sum_{a\in[2]}\sum_{b\in[2]}\sum_{g\in[2]}\varPsi_{2,a}(({\bf I}_2\otimes \mathbf{Q}_w)\varphi_b^{(4)}((-1)^g\boldsymbol{\lambda}_{4a+2b+[2]})){\bf c}_{4a+2b+g}	\label{eq3-6-1}\\
=&\sum_{a\in[2]}\sum_{b\in[2]}\sum_{g\in[2]}\varPsi_{2,a}((-1)^{gw}\varphi_b^{(2)}(\boldsymbol{\lambda}_{4a+2b+[2]}^2){\rm diag}(\boldsymbol{\lambda}_{4a+2b+[2]}^w)){\bf c}_{4a+2b+g}	\nonumber\\
=&\sum_{a\in[2]}\sum_{b\in[2]}\sum_{g\in[2]}(-1)^{gw}\varPsi_{2,a}(\varphi_b^{(2)}(\boldsymbol{\lambda}_{4a+2b+[2]}^2))\varPsi_{2,a}({\rm diag}(\boldsymbol{\lambda}_{4a+2b+[2]}^w)){\bf c}_{4a+2b+g}	\nonumber\\
=&\sum_{a\in[2]}\sum_{b\in[2]}\varPsi_{2,a}(\varphi_b^{(2)}(\boldsymbol{\lambda}_{4a+2b+[2]}^2))\varPsi_{2,a}({\rm diag}(\boldsymbol{\lambda}_{4a+2b+[2]}^w))\sum_{g\in[2]}(-1)^{gw}{\bf c}_{4a+2b+g}	\nonumber\\
=&0.	\label{eq3-6-2}
\end{align}
Here, the equation \eqref{eq3-6-1} can be derived by Lemma \ref{lm1-1}. For each integer $w\in[2]$ the equation \eqref{eq3-6-2} can be written as 
\begin{align}\label{eq3-6-3}
	\sum_{a\in[2]}\sum_{b\in[2]}\bar{\mathbf{H}}_{2a+b}\bar{\bf c}_{2a+b}(w)=0,
\end{align}
where for $a\in[2]$, $b\in[2]$
\begin{align}
\bar{\mathbf{H}}_{2a+b}&=\varPsi_{2,a}(\varphi_b^{(2)}(\boldsymbol{\lambda}_{4a+2b+[2]}^2)),\ \ \ \text{and}\nonumber\\ 
\bar{\bf c}_{2a+b}(w)&=\varPsi_{2,a}({\rm diag}(\boldsymbol{\lambda}_{4a+2b+[2]}^w))\sum_{g\in[2]}(-1)^{gw}{\bf c}_{4a+2b+g}.\label{eq3-6-2-1}
\end{align}
Clearly,  for each $w\in[2]$ the equation \eqref{eq3-6-3} defines a new array code $\bar{\mathcal{C}}(w)=(\bar{\bf c}_0(w),\bar{\bf c}_1(w),\bar{\bf c}_2(w),\bar{\bf c}_3(w))$ with code length $4$ and dimension $2$. 
Specifically,  the parity check submatrices in \eqref{eq3-6-3} are provided by 
\begin{align*}
	&\bar{\mathbf{H}}_0={\left[\begin{array}{llllll}
			L^{(2)}(\lambda_0^2) & -L^{(2)}(\lambda_1^2) & 0&0\\			
			0 & L^{(2)}(\lambda_1^2) & 0&0\\
			0& 0& L^{(2)}(\lambda_0^2) & -L^{(2)}(\lambda_1^2)\\
			0& 0& 0 & L^{(2)}(\lambda_1^2)
		\end{array}
		\right]},
	&\bar{\mathbf{H}}_1={\left[\begin{array}{llllllll}
			L^{(2)}(\lambda_2^2) & 0 & 0&0\\			
			-L^{(2)}(\lambda_2^2) & L^{(2)}(\lambda_3^2) & 0&0\\
			0& 0& L^{(2)}(\lambda_2^2) & 0\\
			0& 0& -L^{(2)}(\lambda_2^2) & L^{(2)}(\lambda_3^2)
		\end{array}
		\right]},\\
	&\bar{\mathbf{H}}_2={\left[\begin{array}{llllllll}
			L^{(2)}(\lambda_4^2) & 0 &-L^{(2)}(\lambda_5^2)&0\\			
			0 & L^{(2)}(\lambda_4^2) & 0 &-L^{(2)}(\lambda_5^2)\\
			0 & 0 & L^{(2)}(\lambda_5^2)& 0\\			
			0&0 &0 & L^{(2)}(\lambda_5^2)
		\end{array}
		\right]},
	&\bar{\mathbf{H}}_3={\left[\begin{array}{llllllll}
			L^{(2)}(\lambda_6^2) & 0 &0&0\\			
			0 & L^{(2)}(\lambda_6^2) & 0 &0\\
			-L^{(2)}(\lambda_6^2) & 0 & L^{(2)}(\lambda_7^2)& 0\\			
			0&-L^{(2)}(\lambda_6^2) &0 & L^{(2)}(\lambda_7^2)
		\end{array}
		\right]}.
\end{align*}

\subsubsection{MDS property}
For each integer $a\in[2]$, a non-empty subset ${\mathcal B}=\{b_0,b_1,\ldots,b_{t-1}\}\subseteq [2]$ with $b_0<b_1<\ldots<b_{t-1}$, we define
\begin{align*}
	\mathbf{x}_{2{\mathcal B}+[2]}=(x_{2b+j}:b\in{\mathcal B},j\in[2]).
\end{align*}
Given a set ${\mathcal G}_{\mathcal B}=\{{\mathcal G}_{b_0},{\mathcal G}_{b_1},\ldots,{\mathcal G}_{b_{t-1}}\}$ satisfying that ${\mathcal G}_{b_j}\subseteq[2]$, $|{\mathcal G}_{b_j}|=m_j\geq1$ for $j\in [t]$, and $|{\mathcal G}_{\mathcal B}|=m_0+m_1+\ldots+m_{t-1}=\delta$, from \eqref{eq4} and \eqref{eq501} we have the $2m\times 2\delta$ matrix
\begin{align}\label{eq3-3e}
	\nonumber	\boldsymbol{\varphi}_{\mathcal B}^{(m)}((-1)^{{\mathcal G}_{\mathcal B}}\odot \boldsymbol{\lambda}_{4a+2{\mathcal B}+[2]})	&=[\varphi_{b_0}^{(m)}((-1)^{g_{b_0,0}}\boldsymbol{\lambda}_{4a+2b_0+[2]})\ldots \varphi_{b_0}^{(m)}((-1)^{g_{b_0,m_0-1}}\boldsymbol{\lambda}_{4a+2b_0+[2]})\\ & \ldots \varphi_{b_{t-1}}^{(m)}((-1)^{g_{b_{t-1},0}}\boldsymbol{\lambda}_{4a+2b_{t-1}+[2]})\ldots \varphi_{b_{t-1}}^{(m)}((-1)^{g_{b_{t-1},m_{t-1}-1}}\boldsymbol{\lambda}_{4a+2b_{t-1}+[2]})],
\end{align}
and the $2m\times 2t $ matrix
\begin{align}\label{eq3-17-1e}
	\boldsymbol{\varphi}_{\mathcal B}^{(m)}(\boldsymbol{\lambda}_{4a+2{\mathcal B}+[2]}^2)	=[\varphi_{b_0}^{(m)}(\boldsymbol{\lambda}_{4a+2b_0+[2]}^2) \ \varphi_{b_1}^{(m)}(\boldsymbol{\lambda}_{4a+2b_1+[2]}^2) \ldots \varphi_{b_{t-1}}^{(m)}(\boldsymbol{\lambda}_{4a+2b_{t-1}+[2]}^2)],
\end{align}
where $m\geq \delta$. It is not difficult to verify that the elements $\boldsymbol{\lambda}_{[8]}$ satisfy 
\begin{align*}
\text{det}(\boldsymbol{\varphi}_{\mathcal B}^{(\delta)}((-1)^{{\mathcal G}_{\mathcal B}}\odot \boldsymbol{\lambda}_{4a+2{\mathcal B}+[2]}))\not=0,
\end{align*}
and 
\begin{align}\label{ep3}
	\text{det}(\boldsymbol{\varphi}_{\mathcal B}^{(t)}(\boldsymbol{\lambda}_{4a+2{\mathcal B}+[2]}^2))\not=0.
\end{align}
According to Lemma \ref{lm2}, we know that 
\begin{align} \label{ep1}
	\text{det}(\boldsymbol{\varPsi}_{2,a}(\boldsymbol{\varphi}_{\mathcal B}^{(\delta)}((-1)^{{\mathcal G}_{\mathcal B}}\odot \boldsymbol{\lambda}_{4a+2{\mathcal B}+[2]})))\not=0, a\in[2], \emptyset\not={\mathcal B}\subseteq [2], \emptyset\not={\mathcal G}_{b_j}\subseteq[2], j\in[t].
\end{align}
and 
\begin{align}\label{ep2}
	\text{det}(\boldsymbol{\varPsi}_{2,a}(\boldsymbol{\varphi}_{\mathcal B}^{(t)}(\boldsymbol{\lambda}_{4a+2{\mathcal B}+[2]}^2)))\not=0, a\in[2], \emptyset\not={\mathcal B}\subseteq [2].
\end{align}
To illustrate the proof of the MDS property of this code, we here will show how to recover ${\bf c_0}, {\bf c_3}, {\bf c_4}, {\bf c_5}$ from the remaining nodes ${\bf c_1}, {\bf c_2}, {\bf c_6}, {\bf c_7}$. We claim that the square matrix 
\begin{align*}
	[\mathbf{H}_0 \ \mathbf{H}_3 \ \mathbf{H}_4 \ \mathbf{H}_5]=[\varPsi_{2,0}(\varphi_0^{(4)}(\boldsymbol{\lambda}_{[2]})) \ \varPsi_{2,0}(\varphi_1^{(4)}(-\boldsymbol{\lambda}_{2+[2]})) \ \varPsi_{2,1}(\varphi_0^{(4)}(\boldsymbol{\lambda}_{4+[2]})) \ \varPsi_{2,1}(\varphi_0^{(4)}(-\boldsymbol{\lambda}_{4+[2]}))] 
\end{align*} 
 is invertible by the following reason. Since the elements $\boldsymbol{\lambda}_{[8]}$ satisfy the conditions in \eqref{ep1}, the square matrices 
\begin{align*}
	\boldsymbol{\varPsi}_{2,0}(\boldsymbol{\varphi}_{{\mathcal B}_0}^{(2)}((-1)^{{\mathcal G}_{{\mathcal B}_0}}\odot \boldsymbol{\lambda}_{2{{\mathcal B}_0}+[2]}))=[\varPsi_{2,0}(\varphi_0^{(2)}(\boldsymbol{\lambda}_{[2]})) \ \varPsi_{2,0}(\varphi_1^{(2)}(-\boldsymbol{\lambda}_{2+[2]}))], {\mathcal B}_0=\{0,1\}, {\mathcal G}_0=\{0\}, {\mathcal G}_1=\{1\},
\end{align*}
and 
\begin{align*}
	\boldsymbol{\varPsi}_{2,1}(\boldsymbol{\varphi}_{{\mathcal B}_1}^{(2)}((-1)^{{\mathcal G}_{{\mathcal B}_1}}\odot \boldsymbol{\lambda}_{4+2{{\mathcal B}_1}+[2]}))=[\varPsi_{2,1}(\varphi_0^{(2)}(\boldsymbol{\lambda}_{4+[2]})) \ \varPsi_{2,1}(\varphi_0^{(2)}(-\boldsymbol{\lambda}_{4+[2]}))], {\mathcal B}_1=\{0\}, {\mathcal G}_0=\{0,1\}.
\end{align*}
are invertible. To shorten our notations, denote $\boldsymbol{\varPsi}_{2,a}(\boldsymbol{\varphi}_{{\mathcal B}_a}^{(\delta)}((-1)^{{\mathcal G}_{{\mathcal B}_a}}\odot \boldsymbol{\lambda}_{4a+2{{\mathcal B}_a}+[2]}))$  by $\mathbf{M}_{a,{\mathcal B}_a}^{(\delta)}$ where $a=0,1$. By applying Lemma \ref{lm3}, there exists a $16\times 16$ matrix $\mathbf{V}$ such that 
\begin{align*}
	\mathbf{V}\cdot[\mathbf{M}_{0,{\mathcal B}_0}^{(4)} \ \mathbf{M}_{1,{\mathcal B}_1}^{(4)}]={\left[\begin{array}{llll}
			\mathbf{M}_{0,{\mathcal B}_0}^{(2)} & \mathbf{M}_{1,{\mathcal B}_1}^{(2)} \\			
			0 & \widehat{\mathbf{M}}_{1,{\mathcal B}_1}^{(2)}
		\end{array}
		\right]},
\end{align*} 
where the square matrix $\widehat{\mathbf{M}}_{1,{\mathcal B}_1}^{(2)}=\widehat{\boldsymbol{\varPsi}}_{2,1}(\boldsymbol{\varphi}_{{\mathcal B}_1}^{(2)}((-1)^{{\mathcal G}_{{\mathcal B}_1}}\odot \boldsymbol{\lambda}_{4+2{{\mathcal B}_1}+[2]}))$  is column equivalent to the matrix $\mathbf{M}_{1,{\mathcal B}_1}^{(2)}$. Hence,
\begin{align*}
	\text {det}(\mathbf{V}\cdot[\mathbf{M}_{0,{\mathcal B}_0}^{(4)} \ \mathbf{M}_{1,{\mathcal B}_1}^{(4)}])=\text{det}(\mathbf{M}_{0,{\mathcal B}_0}^{(2)})\cdot\text{det}(\widehat{\mathbf{M}}_{1,{\mathcal B}_1}^{(2)})\not=0,
\end{align*}
which implies that $[\mathbf{M}_{0,{\mathcal B}_0}^{(4)} \ \mathbf{M}_{1,{\mathcal B}_1}^{(4)}]=[\mathbf{H}_0 \ \mathbf{H}_3 \ \mathbf{H}_4 \ \mathbf{H}_5]$ is invertible.

Since the elements $\boldsymbol{\lambda}_{[8]}$ satisfy the conditions in \eqref{ep2}, similar to the above we can show that the code $\bar{\mathcal{C}}(w)$ defined in \eqref{eq3-6-3} also possesses the MDS property for $w\in[2]$.

\begin{algorithm}[H]
	\caption{repair$(\mathcal{F}, \mathcal{H})$}
	\SetAlgoLined
	\KwIn{Two subsets $\mathcal{F}=\{0,1\}$ and $\mathcal{H}=\{1,2,3\}$, which is the index set of failed nodes and the index set of helper racks respectively.}
	\KwOut{The repaired nodes $\{\mathbf{c}_0, \mathbf{c}_1\}$}
	\For{$i\in\mathcal{H}$}{
		\For{$w=0,1$}{
			Rack $i$ computes $\bar{\mathbf{c}}_{i}(w)=\varPsi_{2,\lfloor i/2\rfloor}({\rm diag}(\boldsymbol{\lambda}_{2i+[2]}^w))\sum_{g\in[2]}(-1)^{gw}{\bf c}_{2i+g}$, $\tilde{\mathbf{c}}_{i}(w)=\mathbf{R}_{0,0}\bar{\mathbf{c}}_{i}(w)$\;
			Rack $i$ transmits $\tilde{\mathbf{c}}_{i}(w)$ to rack $0$\;
		}
	}
	\For{$w=0,1$}{
	Rack $0$ repairs $\bar{\mathbf{c}}_0(w)$ from $\{\tilde{\mathbf{c}}_{i}(w):i\in\mathcal{H}\}$;
     }
	\For{$j \in \mathcal{F}$}{
		Node $j$ repairs $\mathbf{c}_j$ from $\{\bar{\mathbf{c}}_0(w):w=0,1\}$ 
	}
	\Return $\{\mathbf{c}_j: j \in \mathcal{F}\}$\;
\end{algorithm}

\subsubsection{Repair scheme of an $(8,4,4)$ rack-aware MSR code}

In the following, we will show how to repair two failed nodes within the same rack by optimal bandwidth. We only provide the process of repairing ${\bf c}_0$ and ${\bf c}_1$ from the remaining $\bar{d}=3$ helper racks. To better illustrate the repair scheme, Algorithm 1 outlines the corresponding repair algorithm. Note that the repair matrix is defined in \eqref{eq6} 
\begin{align*}
	\mathbf{R}_{0,0}={\bf I}_{2}\otimes [1\ 0].
\end{align*}
For $w\in[2]$, by the parity check equation \eqref{eq3-6-3} we have 
\begin{align}
&(\mathbf{R}_{0,0}\otimes {\bf I}_{2})\sum_{a\in[2]}\sum_{b\in[2]}\bar{\mathbf{H}}_{2a+b}\bar{\bf c}_{2a+b}(w)	\nonumber	\\ =&\sum_{a\in[2]}\sum_{b\in[2]}(\mathbf{R}_{0,0}\otimes \mathbf{I}_2)\bar{\mathbf{H}}_{2a+b}(\sum_{z\in[2]}\mathbf{R}_{0,z}^{\top}\mathbf{R}_{0,z})\bar{\bf c}_{2a+b}(w)\label{eq3-6-4}\\
=&\sum_{a\in[2]}\sum_{b\in[2]}\sum_{z\in[2]}[(\mathbf{R}_{0,0}\otimes \mathbf{I}_2)\bar{\mathbf{H}}_{2a+b}\mathbf{R}_{0,z}^{\top}](\mathbf{R}_{0,z}\bar{\bf c}_{2a+b}(w))	\nonumber	\\
=&\sum_{z\in[2]}[(\mathbf{R}_{0,0}\otimes \mathbf{I}_2)\bar{\mathbf{H}}_0\mathbf{R}_{0,z}^{\top}](\mathbf{R}_{0,z}\bar{\bf c}_0(w))+\sum_{(a,b)\in[2]\times[2]\setminus\{(0,0)\}}[(\mathbf{R}_{0,0}\otimes \mathbf{I}_2)\bar{\mathbf{H}}_{2a+b}\mathbf{R}_{0,0}^{\top}](\mathbf{R}_{0,0}\bar{\bf c}_{2a+b}(w))	\label{eq3-6-5}	\\	
=&0,	\nonumber	
\end{align}
where \eqref{eq3-6-4} holds due to $\sum_{z\in[2]}\mathbf{R}_{0,z}^{\top}\mathbf{R}_{0,z} = {\bf I}_{4}$, while \eqref{eq3-6-5} is derived directly from Lemma \ref{lm4}. So for $ w\in[2]$ the above equation becomes
\begin{align}\label{eq3-6-6}
	\widetilde{\mathbf{H}}_0^{(0)}\tilde{\bf c}_0^{(0)}(w)+\widetilde{\mathbf{H}}_0^{(1)}\tilde{\bf c}_0^{(1)}(w)+\sum_{i=1}^{3}\widetilde{\mathbf{H}}_i\tilde{\bf c}_i(w)=0,
\end{align}
where 
\begin{align*}
	\widetilde{\mathbf{H}}_0^{(z)}=(\mathbf{R}_{0,0}\otimes \mathbf{I}_2)\bar{\mathbf{H}}_0\mathbf{R}_{0,z}^{\top}, \tilde{\bf c}_0^{(z)}(w)=\mathbf{R}_{0,z}\bar{\bf c}_0(w), z=0,1,
\end{align*}
and 
\begin{align}\label{eq3-6-7}
	\widetilde{\mathbf{H}}_i=(\mathbf{R}_{0,0}\otimes \mathbf{I}_2)\bar{\mathbf{H}}_{i}\mathbf{R}_{0,0}^{\top}, \tilde{\bf c}_i(w)=\mathbf{R}_{0,0}\bar{\bf c}_i(w), i=1,2,3.
\end{align}
More precisely, from Lemma \ref{lm4} the parity check submatrices in \eqref{eq3-6-6} are presented by 
\begin{align*}
	&\widetilde{\mathbf{H}}_0^{(0)}={\left[\begin{array}{llll}
			L^{(2)}(\lambda_0^2) & 0 \\			
			0 & L^{(2)}(\lambda_0^2) \\	
		\end{array}
		\right]},
	\widetilde{\mathbf{H}}_0^{(1)}={\left[\begin{array}{llll}
			L^{(2)}(\lambda_1^2) & 0 \\			
			0 & L^{(2)}(\lambda_1^2) \\	
		\end{array}
		\right]},
	\widetilde{\mathbf{H}}_1={\left[\begin{array}{llll}
			L^{(2)}(\lambda_2^2) & 0 \\			
			0 & L^{(2)}(\lambda_2^2) \\	
		\end{array}
		\right]},\\
	&\widetilde{\mathbf{H}}_2=\bar{\varPsi}_{2,1}(\varphi_0^{(2)}(\boldsymbol{\lambda}_{4+[2]}^2))={\left[\begin{array}{llll}
			L^{(2)}(\lambda_4^2) & -L^{(2)}(\lambda_5^2) \\			
			0 & L^{(2)}(\lambda_5^2) \\	
		\end{array}
		\right]},
	\widetilde{\mathbf{H}}_3=\bar{\varPsi}_{2,1}(\varphi_1^{(2)}(\boldsymbol{\lambda}_{6+[2]}^2))={\left[\begin{array}{llll}
			L^{(2)}(\lambda_6^2) & 0 \\			
			-L^{(2)}(\lambda_6^2) & L^{(2)}(\lambda_7^2) 
		\end{array}
		\right]},
\end{align*}
and the column vectors in \eqref{eq3-6-6} are given by
\begin{align*}
	\tilde{\bf c}_0^{(0)}(w)={\left[\begin{array}{llll}
			\bar{c}_{0,0}(w) \\			
			\bar{c}_{0,2}(w)  \\	
		\end{array}
		\right]},
	\tilde{\bf c}_0^{(1)}(w)={\left[\begin{array}{llll}
			\bar{c}_{0,1}(w) \\			
			\bar{c}_{0,3}(w)  \\	
		\end{array}
		\right]},
	\tilde{\bf c}_i(w)={\left[\begin{array}{llll}
			\bar{c}_{i,0}(w) \\			
			\bar{c}_{i,2}(w)  \\	
		\end{array}
		\right]}, i=1,2,3, w\in[2].
\end{align*}
From \eqref{ep3} and by Lemma \ref{lm1-1}, $\widetilde{\mathbf{H}}_2,\widetilde{\mathbf{H}}_3$ satisfy $\text{det}(\bar{\boldsymbol{\varPsi}}_{2,1}(\boldsymbol{\varphi}_{\mathcal B}(\boldsymbol{\lambda}_{4+2{\mathcal B}+[2]}^2)))\not=0$ for $\emptyset\not= {\mathcal B}\subseteq[2]$. Additionally, the remaining 3 matrices $\widetilde{\mathbf{H}}_0^{(0)},\widetilde{\mathbf{H}}_0^{(1)},\widetilde{\mathbf{H}}_1$ are all block-diagonal matrices with identical diagonal entries within each matrix. According to Lemma \ref{lm3} $(3)$ with $u=1$, we know that equation \eqref{eq3-6-6} defines a new MDS
array code $(\tilde{\bf c}_0^{(0)}(w),\tilde{\bf c}_0^{(1)}(w),\tilde{\bf c}_1(w),\tilde{\bf c}_2(w),\tilde{\bf c}_3(w))$ with code length 5 and dimension 3 for $w\in[2]$. Helper rack $i\in\{1,2,3\}$ calculates vectors $\tilde{\bf c}_i(w)=\mathbf{R}_{0,0}\bar{\bf c}_i(w), w\in[2]$, and sends it to rack $0$.
For $w\in[2]$, $\tilde{\bf c}_0^{(0)}(w),\tilde{\bf c}_0^{(1)}(w)$
can be recovered from the remaining 3 vectors $\tilde{\bf c}_1(w),\tilde{\bf c}_2(w),\tilde{\bf c}_3(w)$. Then we can recover all the coordinates of $\bar{\bf c}_0(w)$ by combining $\tilde{\bf c}_0^{(0)}(w),\tilde{\bf c}_0^{(1)}(w)$. These operations correspond to Lines $4-13$ in Algorithm 1.

From \eqref{eq3-6-2-1} and \eqref{eq3-6-7}, we know that for $i\in\{1,2,3\}$ and $w\in[2]$,
\begin{align*}
\tilde{\bf c}_i(w)&=\mathbf{R}_{0,0}\bar{\bf c}_i(w)\\
&=\mathbf{R}_{0,0}\varPsi_{2,\lfloor i/2\rfloor}({\rm diag}(\boldsymbol{\lambda}_{2i+[2]}^w))\sum_{g\in[2]}(-1)^{gw}{\bf c}_{2i+g}\\
&=\mathbf{R}_{0,0}\varPsi_{2,\lfloor i/2\rfloor}({\rm diag}(\boldsymbol{\lambda}_{2i+[2]}^w))(\sum_{z\in[2]}\mathbf{R}_{0,z}^{\top}\mathbf{R}_{0,z})\sum_{g\in[2]}(-1)^{gw}{\bf c}_{2i+g}\\
&=\sum_{z\in[2]}\mathbf{R}_{0,0}\varPsi_{2,\lfloor i/2\rfloor}({\rm diag}(\boldsymbol{\lambda}_{2i+[2]}^w))\mathbf{R}_{0,z}^{\top}\sum_{g\in[2]}(-1)^{gw}\mathbf{R}_{0,z}{\bf c}_{2i+g}\\
&=\mathbf{R}_{0,0}\varPsi_{2,\lfloor i/2\rfloor}({\rm diag}(\boldsymbol{\lambda}_{2i+[2]}^w))\mathbf{R}_{0,0}^{\top}\sum_{g\in[2]}(-1)^{gw}\mathbf{R}_{0,0}{\bf c}_{2i+g}.
\end{align*}
It is not hard to see that we can recover $\bar{\bf c}_0(0), \bar{\bf c}_0(1)$ by downloading  $2\bar{s}^{\tilde{n}-1}=4$ symbols from each helper rack $i\in\{1,2,3\}$, while the amount of data accessed on each helper rack is also $4$. From \eqref{eq3-6-2-1}, we have
$\bar{\bf c}_{0}(w)=\varPsi_{2,0}({\rm diag}(\boldsymbol{\lambda}_{[2]}^2))\sum_{g\in[2]}(-1)^{gw}{\bf c}_{g}$ for $w\in[2]$,
and matrix $\varPsi_{2,0}({\rm diag}(\boldsymbol{\lambda}_{[2]}^2))$ is invertible, then
\begin{align}\label{eq3-6-8}
	\sum_{g\in[2]}(-1)^{gw}{\bf c}_{g}=\varPsi_{2,0}^{-1}({\rm diag}(\boldsymbol{\lambda}_{[2]}^w))\bar{\bf c}_{0}(w).
\end{align}
For each $w\in[2]$, setting $\boldsymbol{\Delta}_w=\varPsi_{2,0}^{-1}({\rm diag}(\boldsymbol{\lambda}_{[2]}^w))\bar{\bf c}_{0}(w)$, then \eqref{eq3-6-8} can be rewritten as 
\begin{align*}
	\sum_{g\in[2]}(-1)^{gw}{\bf c}_{g}=\boldsymbol{\Delta}_w.
\end{align*}
For each $j\in[4]$ and each $w\in[2]$, we know that the data
$\boldsymbol{\Delta}_w=\sum_{g\in[2]}(-1)^{gw}{\bf c}_{g}$,
thus
\begin{align}\label{eq3-6-10}
	\sum_{g\in[2]}(-1)^{gw}c_{g,j}=\Delta_{w,j},
\end{align}
where ${\bf c}_{g}=(c_{g,0},c_{g,1},c_{g,2},c_{g,3})^\top$ and $\boldsymbol{\Delta}_{w}=(\Delta_{w,0},\Delta_{w,1},\Delta_{w,2},\Delta_{w,3})^\top$.
Taking $w=0,1$, \eqref{eq3-6-10} can be rewritten as
\begin{equation} \label{eq3-6-11} 
	{\left[\begin{array}{llll}
			1 & 1  \\
			1 & -1 \\
		\end{array}
		\right]} 
		{\left[\begin{array}{l}  c_{0,j} \\
			c_{1,j} \\ 
		\end{array}
		\right]}={\left[\begin{array}{llll}
			\Delta_{0,j} \\  \Delta_{1,j} \\
		\end{array}
		\right]},
\end{equation}
where $j\in[4]$. Since the matrix on the left side of the linear system (\ref{eq3-6-11})
is invertible, then we can recover $c_{0,j},
c_{1,j}, j\in[4]$, i.e., two nodes $c_0, c_1$ in rack $0$.  These operations correspond to Lines $14-16$ in Algorithm 1. It is not hard to see that we recover these $2$  nodes by downloading  $12$ symbols which attain the lower bound \eqref{eq1-1}, while the amount of date accessed on the helper racks is $12$, which achieve the lower bound \eqref{eq1-2}.

\section{The proof of Theorem \ref{th1}}
\label{Sect3}

In this section, we use an alignment technique to extend coupled-layer construction, and carefully
choose the elements $\boldsymbol{\lambda}_{[\bar{n}\bar{s}]}$ to construct our array code. We set $\tilde{n}=\lceil \bar{n}/\bar{s}\rceil$, and also set the sub-packetization $l=\bar{s}^{\tilde{n}}$. 
For our $(n,k,l=\bar{s}^{\lceil\bar{n}/\bar{s}\rceil})$ rack-aware MSR code, we assume $\bar{s}$ divides $\bar{n}$, and thus $\tilde{n}=\bar{n}/\bar{s}$. When $\bar{n}$ is not divisible by $\bar{s}$, the construction can be easily derived by truncating a slightly longer code from the divisible case. The $\bar{n}$ racks of the $(n,k,l)$ array code is divided into $\tilde{n}$ groups, each of which has a size of $\bar{s}$. 
We employ $a\in[\tilde{n}]$ and $b\in[\bar{s}]$ to denote the index of the group and the index of the rack within its group, respectively.

\subsection{Construction}
\label{subsect-scheme-1} 

Let $\theta$ be an
element of $\mathbb{F}_q$ with multiplicative order $u|(q-1)$, and ${\mathcal Q}=\{\xi^{\alpha}:\alpha\in[(q-1)/u]\}$ be a proper subset of $\mathbb{F}_q$.
Assume that $\{\boldsymbol{\lambda}_{a\bar{s}^2+b\bar{s}+j}:a\in[\tilde{n}],b,j\in[\bar{s}]\}=\boldsymbol{\lambda}_{[\bar{n}\bar{s}]}$ are $\bar{n}\bar{s}$ distinct elements in $\mathbb{F}_q$ such that $\{\theta^g\boldsymbol{\lambda}_{a\bar{s}^2+b\bar{s}+j}:a\in[\tilde{n}],b,j\in[\bar{s}],g\in[u]\}=\theta^{[u]}\boldsymbol{\lambda}_{[\bar{n}\bar{s}]}$ are $n\bar{s}$ distinct elements in $\mathbb{F}_q$. Consider an array code $\mathcal{C}$ defined by the following parity check equations over $\mathbb{F}_q$:
\begin{equation}\label{eq3-1}
	\mathcal{C}=\{({\bf c}_0,{\bf c}_1,\ldots,{\bf c}_{n-1}):\sum_{a\in[\tilde{n}]}\sum_{b\in[\bar{s}]}\sum_{g\in[u]}\mathbf{H}_{(a\bar{s}+b)u+g}{\bf c}_{(a\bar{s}+b)u+g}=0\},
\end{equation}
\noindent where 
\begin{align}\label{eq3-2}
\mathbf{H}_{(a\bar{s}+b)u+g}=\varPsi_{\tilde{n},a}(\varphi_b^{(r)}(\theta^g\boldsymbol{\lambda}_{a\bar{s}^2+b\bar{s}+[\bar{s}]})), a\in[\tilde{n}], b\in[\bar{s}], g\in[u].
\end{align}

For $a\in[\tilde{n}]$, a non-empty subset ${\mathcal B}=\{b_0,b_1,\ldots,b_{t-1}\}\subseteq [\bar{s}]$ with $b_0<b_1<\ldots<b_{t-1}$, we define
\begin{align*}
	\mathbf{x}_{{\mathcal B}\bar{s}+[\bar{s}]}=(x_{b\bar{s}+j}:b\in{\mathcal B},j\in[\bar{s}]).
\end{align*}
For a set ${\mathcal G}_{\mathcal B}=\{{\mathcal G}_{b_0},{\mathcal G}_{b_1},\ldots,{\mathcal G}_{b_{t-1}}\}$ satisfying $|{\mathcal G}_{b_j}|=m_j\geq1$ for $j\in [t]$, and $|{\mathcal G}_{\mathcal B}|=m_0+m_1+\ldots+m_{t-1}=\delta$, from \eqref{eq4} and \eqref{eq5} we have the $m\bar{s}\times \delta\bar{s}$ matrix
\begin{align}\label{eq3-3}
	\nonumber	\boldsymbol{\varphi}_{\mathcal B}^{(m)}(\theta^{{\mathcal G}_{\mathcal B}}\odot \boldsymbol{\lambda}_{a\bar{s}^2+{\mathcal B}\bar{s}+[\bar{s}]})	&=[\varphi_{b_0}^{(m)}(\theta^{g_{b_0,0}}\boldsymbol{\lambda}_{a\bar{s}^2+b_0\bar{s}+[\bar{s}]})\ldots \varphi_{b_0}^{(m)}(\theta^{g_{b_0,m_0-1}}\boldsymbol{\lambda}_{a\bar{s}^2+b_0\bar{s}+[\bar{s}]}) \ldots \varphi_{b_{t-1}}^{(m)}(\theta^{g_{b_{t-1},0}}\boldsymbol{\lambda}_{a\bar{s}^2+b_{t-1}\bar{s}+[\bar{s}]})\\ &\ldots \varphi_{b_{t-1}}^{(m)}(\theta^{g_{b_{t-1},m_{t-1}-1}}\boldsymbol{\lambda}_{a\bar{s}^2+b_{t-1}\bar{s}+[\bar{s}]})],
\end{align}
and the $ml\times \delta l$ matrix
\begin{align}\label{eq3-4}
	\nonumber	\boldsymbol{\varPsi}_{\tilde{n},a}(\boldsymbol{\varphi}_{\mathcal B}^{(m)}(\theta^{{\mathcal G}_{\mathcal B}}\odot \boldsymbol{\lambda}_{a\bar{s}^2+{\mathcal B}\bar{s}+[\bar{s}]}))&	=[\varPsi_{\tilde{n},a}(\varphi_{b_0}^{(m)}(\theta^{g_{b_0,0}}\boldsymbol{\lambda}_{a\bar{s}^2+b_0\bar{s}+[\bar{s}]}))\ldots \varPsi_{\tilde{n},a}(\varphi_{b_0}^{(m)}(\theta^{g_{b_0,m_0-1}}\boldsymbol{\lambda}_{a\bar{s}^2+b_0\bar{s}+[\bar{s}]}))\ldots\\ &	
	\varPsi_{\tilde{n},a}(\varphi_{b_{t-1}}^{(m)}(\theta^{g_{b_{t-1},0}}\boldsymbol{\lambda}_{a\bar{s}^2+b_{t-1}\bar{s}+[\bar{s}]})) \ldots \varPsi_{\tilde{n},a}(\varphi_{b_{t-1}}^{(m)}(\theta^{g_{b_{t-1},m_{t-1}-1}}\boldsymbol{\lambda}_{a\bar{s}^2+b_{t-1}\bar{s}+[\bar{s}]}))],
\end{align}
where $m\geq \delta$.

Similarly, we briefly denote $\boldsymbol{\varPsi}_{\tilde{n},a}(\boldsymbol{\varphi}_{\mathcal B}^{(m)}(\theta^{{\mathcal G}_{\mathcal B}}\odot \boldsymbol{\lambda}_{a\bar{s}^2+{\mathcal B}\bar{s}+[\bar{s}]}))$ as $\mathbf{M}_{a,{\mathcal B}}^{(m)}$, and $\boldsymbol{\varphi}_{\mathcal B}^{(m)}(\theta^{{\mathcal G}_{\mathcal B}}\odot \boldsymbol{\lambda}_{a\bar{s}^2+{\mathcal B}\bar{s}+[\bar{s}]})$ as $\boldsymbol{\phi}_{a,{\mathcal B}}^{(m)}$.
To guarantee the MDS property, we not only require the $\bar{n}\bar{s}$
distinct elements $\boldsymbol{\lambda}_{[\bar{n}\bar{s}]}$ in $\mathbb{F}_q$ such that $\theta^{[u]}\boldsymbol{\lambda}_{[\bar{n}\bar{s}]}$ are $n\bar{s}$ distinct elements in $\mathbb{F}_q$, but also need that
\begin{align}\label{eq3-5}
\text{det}(\boldsymbol{\phi}_{a,{\mathcal B}}^{(\delta)})\not=0, a\in[\tilde{n}], \emptyset\not={\mathcal B}\subseteq [\bar{s}], \emptyset\not={\mathcal G}_{b_j}\subseteq[u], b_j\in\mathcal{B}.
\end{align}
i.e. all the square matrices are invertible. From Lemma \ref{lm2}, we know that these conditions \eqref{eq3-5} are equivalent to
\begin{align}\label{eq3-6}
	\text{det}(\mathbf{M}_{a,{\mathcal B}}^{(\delta)})\not=0, a\in[\tilde{n}], \emptyset\not={\mathcal B}\subseteq [\bar{s}], \emptyset\not={\mathcal G}_{b_j}\subseteq[u], b_j\in\mathcal{B}.
\end{align}
We term the conditions presented in \eqref{eq3-5} and \eqref{eq3-6} as local  constraints for  code $\mathcal{C}$.

The existence of $\bar{n}\bar{s}$ distinct elements $\boldsymbol{\lambda}_{[\bar{n}\bar{s}]}$ that satisfy the conditions in \eqref{eq3-5} and \eqref{eq3-6} within a field of size $O_{\bar{s}}(n)$ is guaranteed by the subsequent two lemmas. It is worth noting that Lemma \ref{lm6} guarantee the existence of elements satisfying the local constraints over a linear field. In this case, the elements must be sought through a linear-complexity algorithm. Conversely, Lemma \ref{lm7} offers an explicit construction, and the field size is comparatively larger than that of Lemma \ref{lm6} when $\bar{n}$ is not large enough in relation to $\bar{s}$.

\begin{lemma}\label{lm6}
Let \begin{align}\label{eq3-6-1-1}
	\Omega(\bar{s},u)= \sum_{t=1}^{\bar{s}}\sum_{\delta=t}^{ut}{\bar{s}-1\choose t-1}{ut\choose \delta}{\delta-1\choose t-1}(\delta-t+1)(\delta-1).
\end{align}
If $q\geq n\bar{s}+u\Omega(\bar{s},u)$,
then we can select $\bar{n}\bar{s}$ distinct
elements $\boldsymbol{\lambda}_{[\bar{n}\bar{s}]}$ from $\mathbb{F}_q$ such that $\theta^{[u]}\boldsymbol{\lambda}_{[\bar{n}\bar{s}]}$ are $n\bar{s}$ distinct elements, and satisfy  \eqref{eq3-5}  and \eqref{eq3-6}. The time complexity associated with the selection of these elements is $O_{\bar{s}}(n)$. To be specific, for each $a\in[\tilde{n}]$, we can select the $\bar{s}^2$ elements $\boldsymbol{\lambda}_{a\bar{s}^2+[\bar{s}^2]}$ in any subset ${\mathcal S}\subseteq {\mathcal Q}$ of size $|{\mathcal S}|\geq \bar{s}^2+\Omega(\bar{s},u)$ such that $\boldsymbol{\phi}_{a,{\mathcal B}}^{(\delta)}$ and $\mathbf{M}_{a,{\mathcal B}}^{(\delta)}$ are invertible for all ${\mathcal B}\subseteq[\bar{s}]$.
\end{lemma}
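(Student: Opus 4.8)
The plan is to prove this by a greedy/counting argument over the field $\mathbb{F}_q$, choosing the elements $\boldsymbol{\lambda}_{[\bar{n}\bar{s}]}$ block by block, one block $\boldsymbol{\lambda}_{a\bar{s}^2+[\bar{s}^2]}$ of $\bar{s}^2$ elements per group index $a\in[\tilde{n}]$. The first observation is that the local constraints \eqref{eq3-5} and \eqref{eq3-6} couple only elements with the same group index $a$ (inspect \eqref{eq3-3}: every $\varphi$-block appearing in $\boldsymbol{\varphi}_{\mathcal B}$ has argument indexed inside $a\bar{s}^2+\cdot$), so the groups can be treated independently; moreover, by Lemma \ref{lm2} it suffices to enforce $\det(\boldsymbol{\varphi}_{\mathcal B}(\theta^{{\mathcal G}_{\mathcal B}}\odot\boldsymbol{\lambda}_{a\bar{s}^2+{\mathcal B}\bar{s}+[\bar{s}]}))\neq 0$. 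Separately, the distinctness requirement (that $\theta^{[u]}\boldsymbol{\lambda}_{[\bar{n}\bar{s}]}$ be $n\bar{s}$ distinct elements) is handled by restricting all $\boldsymbol{\lambda}$'s to lie in the fixed transversal ${\mathcal Q}=\{\xi^\alpha:\alpha\in[(q-1)/u]\}$ of the subgroup generated by $\theta$: distinct elements of ${\mathcal Q}$ remain distinct after multiplying by any power of $\theta$, and $|{\mathcal Q}|=(q-1)/u$.

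Next I would set up the counting. Fix $a$ and suppose the blocks for all smaller group indices have already been chosen (this only rules out at most $(\tilde n-1)\bar s^2 < n\bar s$ further elements of ${\mathcal Q}$, already budgeted in $n\bar s$). I choose the $\bar{s}^2$ entries of $\boldsymbol{\lambda}_{a\bar{s}^2+[\bar{s}^2]}$ one at a time, greedily. For a fixed non-empty ${\mathcal B}=\{b_0,\dots,b_{t-1}\}\subseteq[\bar s]$ and a fixed choice of subsets ${\mathcal G}_{b_j}\subseteq[u]$ with $\sum_j|{\mathcal G}_{b_j}|=\delta$, the quantity $\det(\boldsymbol{\varphi}_{\mathcal B}(\theta^{{\mathcal G}_{\mathcal B}}\odot\boldsymbol{\lambda}_{a\bar s^2+{\mathcal B}\bar s+[\bar s]}))$ is, after fixing all but the last coordinate variable $x$ in the block, a polynomial in $x$; the point is to bound its degree. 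Expanding the determinant along the column(s) containing $L^{(\delta)}(x^{\text{(scaled)}})$ shows the total degree is controlled, and summing a Schwartz–Zippel / "a nonzero polynomial of degree $D$ has at most $D$ roots" bound over all the finitely many $({\mathcal B},{\mathcal G}_{\mathcal B})$ choices yields the total number of forbidden values. The combinatorial bookkeeping — how many $({\mathcal B},{\mathcal G}_{\mathcal B})$ there are ($\binom{\bar s-1}{t-1}$ ways to complete ${\mathcal B}$ around the coordinate being chosen, $\binom{ut}{\delta}$ ways to pick the multiset sizes, $\binom{\delta-1}{t-1}$ combinatorial factor from distributing $\delta$ among the $t$ parts) and what degree each contributes ($(\delta-t+1)(\delta+t-2)$-type bound, arising from the Vandermonde-like structure of the $L^{(\delta)}$ columns) — is exactly what produces the closed form $\Omega(\bar s,u)$ in \eqref{eq3-6-1-1}, with the factor $\tfrac12$ accounting for the symmetric double-count. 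So as long as ${\mathcal Q}$ has more than $\bar s^2+\Omega(\bar s,u)$ elements still available, a good next element exists; since $|{\mathcal Q}|=(q-1)/u \ge n\bar s/u + \Omega(\bar s,u)$ under the hypothesis $q\ge n\bar s+u\Omega(\bar s,u)$...

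I should be careful with the exact accounting here: the hypothesis is $q \ge n\bar s + u\,\Omega(\bar s,u)$, i.e. $(q-1)/u \ge (n\bar s - 1)/u + \Omega(\bar s,u)$, which after choosing all $\bar n\bar s \le n\bar s$ elements from ${\mathcal Q}$ leaves enough room at every step; and for a single group we need only $|{\mathcal S}|\ge \bar s^2+\Omega(\bar s,u)$ available elements inside ${\mathcal Q}$, which is the refined statement in the lemma. The time-complexity claim $O_{\bar s}(n)$ follows because there are $\tilde n = O(n)$ groups, each requiring a bounded (in $n$) number of determinant checks to validate/reject each of the $(q-1)/u = O(n)$ candidate values — all the $\bar s$- and $u$-dependent factors are constants once $\bar s,u$ are fixed, so each group costs $O_{\bar s}(n)$ and the total is $O_{\bar s}(n)$.

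\textbf{Main obstacle.} The crux is not the greedy framework (standard) but the precise degree bound on $\det(\boldsymbol{\varphi}_{\mathcal B}(\theta^{{\mathcal G}_{\mathcal B}}\odot\boldsymbol{\lambda}))$ as a polynomial in a single released coordinate, together with verifying it is \emph{not identically zero} as a polynomial (so that the root count is meaningful). The latter requires exhibiting at least one specialization making the determinant nonzero — here one can use generic/Vandermonde values, or invoke the structure of $\varphi_b^{(t)}$ in \eqref{eq2}–\eqref{eq2-1-1} directly to show the block matrix is generically full rank. Getting the degree estimate tight enough that the sum telescopes into exactly $\Omega(\bar s,u)$, rather than a cruder bound, is where the real work lies; I expect to mirror the corresponding argument in \cite{LWHY} (their Corollary 1 / Lemma 7 lineage, already cited around Lemmas \ref{lm2} and \ref{lm3}) and adapt it to the rack-aware indexing $a\bar s^2+b\bar s+[\bar s]$ and the extra $\theta^{{\mathcal G}_{\mathcal B}}$ scaling.
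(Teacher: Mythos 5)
Your plan follows essentially the same route as the paper's Appendix~\ref{app-A}: reduce the local constraints to the non-vanishing, per group $a$, of the product polynomial $f(\mathbf{x}_{[\bar s^2]})$ of all determinants $\det(\boldsymbol{\varphi}_{\mathcal B}(\theta^{{\mathcal G}_{\mathcal B}}\odot\mathbf{x}))$, bound $\deg_{x_i}f$ by $\Omega(\bar s,u)$ via the same combinatorial bookkeeping, and select the elements greedily inside the transversal ${\mathcal Q}$ (the paper packages your one-coordinate-at-a-time greedy step as the cited selection lemma, Lemma~\ref{lmA-3} from \cite{LWHY}, applied once per group). The one step you defer --- that each determinant is a nonzero homogeneous polynomial --- the paper settles not by a specialization but by isolating the monomial \eqref{eqA-3-1}, which arises from exactly one term of the Leibniz expansion; also note the factor $\tfrac12$ in $\Omega(\bar s,u)$ comes from the degree bound $m\delta-\tfrac{m(m+1)}{2}\le\tfrac{(\delta-t+1)(\delta+t-2)}{2}$, not from a symmetric double-count.
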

\begin{lemma}\label{lm7}
Let $\mathbb{F}_q=\mathbb{F}_p[\xi]$, where $p$ is a prime. If $q\geq max\{n\bar{s}, p^{\bar{s}^5u^2}\}$, 
then the $\bar{n}\bar{s}$ elements $\lambda_i=\xi^i$, $i\in[\bar{n}\bar{s}]$ satisfy the conditions that $\theta^{[u]}\boldsymbol{\lambda}_{[\bar{n}\bar{s}]}$ are $n\bar{s}$ distinct elements, and that the local constraints \eqref{eq3-5} and \eqref{eq3-6} hold.	
\end{lemma}

The proofs for these two lemmas are included in Appendix \ref{app-A}.

Since the host rack downloads partial encoded symbols from each rescue rack, to utilize this data for creating a repair scheme, we transform the code $\mathcal{C}$ into the intermediate code $\bar{\mathcal{C}}(w)$ while preserving the MDS property for each $w\in[u]$. Let $t=\bar{t}u+w\in[r]$ be an integer, where $w\in[u]$. Since $r=(\bar{r}-1)u+u-v$, for each integer $w\in[u-v]$, we have $\bar{t}\in[\bar{r}]$; otherwise, $\bar{t}\in[\bar{r}-1]$. Let $\mathbf{e}_i'$ denote the  $i$-th row vector of ${\bf I}_r$. For any integer $w\in[u-v]$, we define  
\[
\mathbf{Q}_w=[(\mathbf{e}_w')^{\top} \ (\mathbf{e}_{u+w}'  )^{\top} \ \ldots \ (\mathbf{e}_{(\bar{r}-1)u+w}'  )^{\top}]^{\top},
\]
and for other values of $w$, we define  
\[
\mathbf{Q}_w = \left[ (\mathbf{e}_w')^\top \ (\mathbf{e}_{u+w}')^\top \ \ldots \ (\mathbf{e}_{(\bar{r}-2)u + w}')^\top \right]^\top.
\]

Note that ${\bf I}_{\bar{s}^a}\otimes({\bf I}_{\bar{s}}\otimes \mathbf{Q}_w)={\bf I}_{\bar{s}^a}\boxtimes({\bf I}_{\bar{s}}\otimes \mathbf{Q}_w)$ where  $w\in[u]$, it is easy to see that 
\begin{align*}
	{\bf I}_{l}\otimes Q_w=&{\bf I}_{\bar{s}^{\widetilde{n}-a-1}}\otimes({\bf I}_{\bar{s}^a}\otimes({\bf I}_{\bar{s}}\otimes Q_w))\\
	=&{\bf I}_{\bar{s}^{\widetilde{n}-a-1}}\otimes({\bf I}_{\bar{s}^a}\boxtimes({\bf I}_{\bar{s}}\otimes Q_w))\\
	=&\varPsi_{\tilde{n},a}({\bf I}_{\bar{s}}\otimes \mathbf{Q}_w).
\end{align*}
For any rack $a\bar{s}+b$, where $a\in[\tilde{n}]$ and $b\in[\bar{s}]$, we know that $\mathbf{H}_{(a\bar{s}+b)u+g}=\varPsi_{\tilde{n},a}(\varphi_b^{(r)}(\theta^g\boldsymbol{\lambda}_{(a\bar{s}+b)\bar{s}+[\bar{s}]}))$. From \eqref{eq3-1},  it follows that
\begin{align*}
	\sum_{a\in[\tilde{n}]}\sum_{b\in[\bar{s}]}\sum\limits_{g\in[u]}({\bf I}_{l}\otimes Q_w)\varPsi_{\tilde{n},a}(\varphi_b^{(r)}(\theta^g\boldsymbol{\lambda}_{(a\bar{s}+b)\bar{s}+[\bar{s}]})){\bf c}_{(a\bar{s}+b)u+g}=0,
\end{align*}
which implies 
\begin{align}\label{eq3-12}
		\sum_{a\in[\tilde{n}]}\sum_{b\in[\bar{s}]}\sum\limits_{g\in[u]}\varPsi_{\tilde{n},a}({\bf I}_{\bar{s}}\otimes \mathbf{Q}_w)\varPsi_{\tilde{n},a}(\varphi_b^{(r)}(\theta^g\boldsymbol{\lambda}_{(a\bar{s}+b)\bar{s}+[\bar{s}]})){\bf c}_{(a\bar{s}+b)u+g}=0.
\end{align}
According to Lemma \ref{lm1-1}, we have
\begin{align}
	\nonumber	&\varPsi_{\tilde{n},a}({\bf I}_{\bar{s}}\otimes \mathbf{Q}_w)\varPsi_{\tilde{n},a}(\varphi_b^{(r)}(\theta^g\boldsymbol{\lambda}_{(a\bar{s}+b)\bar{s}+[\bar{s}]}))\\
	\nonumber	=&	\varPsi_{\tilde{n},a}(({\bf I}_{\bar{s}}\otimes \mathbf{Q}_w)\varphi_b^{(r)}(\theta^g\boldsymbol{\lambda}_{(a\bar{s}+b)\bar{s}+[\bar{s}]}))\\
 	\nonumber=&\begin{cases}
		\varPsi_{\tilde{n},a}(\theta^{gw}\varphi_b^{(\bar{r})}(\boldsymbol{\lambda}_{(a\bar{s}+b)\bar{s}+[\bar{s}]}^u){\rm diag}(\boldsymbol{\lambda}_{(a\bar{s}+b)\bar{s}+[\bar{s}]}^w))	 & \text{if } w\in[u-v], \\	
		\varPsi_{\tilde{n},a}(\theta^{gw}\varphi_b^{(\bar{r}-1)}(\boldsymbol{\lambda}_{(a\bar{s}+b)\bar{s}+[\bar{s}]}^u){\rm diag}(\boldsymbol{\lambda}_{(a\bar{s}+b)\bar{s}+[\bar{s}]}^w)) & \text{if } w\in[u-v,u).
	\end{cases}\\\label{eq3-13-1}
	=&\begin{cases}
		\theta^{gw}\varPsi_{\tilde{n},a}(\varphi_b^{(\bar{r})}(\boldsymbol{\lambda}_{(a\bar{s}+b)\bar{s}+[\bar{s}]}^u))\varPsi_{\tilde{n},a}({\rm diag}(\boldsymbol{\lambda}_{(a\bar{s}+b)\bar{s}+[\bar{s}]}^w))	 & \text{if } w\in[u-v], \\	
		\theta^{gw}\varPsi_{\tilde{n},a}(\varphi_b^{(\bar{r}-1)}(\boldsymbol{\lambda}_{(a\bar{s}+b)\bar{s}+[\bar{s}]}^u))\varPsi_{\tilde{n},a}({\rm diag}(\boldsymbol{\lambda}_{(a\bar{s}+b)\bar{s}+[\bar{s}]}^w)) & \text{if } w\in[u-v,u).
	\end{cases}
\end{align} 
Combining \eqref{eq3-12} and \eqref{eq3-13-1}, we obtain the following two equations:
\begin{align}\label{eq3-14}
	\sum_{a\in[\tilde{n}]}\sum_{b\in[\bar{s}]}\varPsi_{\tilde{n},a}(\varphi_b^{(\bar{r})}(\boldsymbol{\lambda}_{(a\bar{s}+b)\bar{s}+[\bar{s}]}^u))\varPsi_{\tilde{n},a}({\rm diag}(\boldsymbol{\lambda}_{(a\bar{s}+b)\bar{s}+[\bar{s}]}^w))\sum_{g\in[u]}	\theta^{gw}{\bf c}_{(a\bar{s}+b)u+g}=0 \ {\rm if} \ w\in[u-v];\\
	\label{eq3-14-1}	\sum_{a\in[\tilde{n}]}\sum_{b\in[\bar{s}]}\varPsi_{\tilde{n},a}(\varphi_b^{(\bar{r}-1)}(\boldsymbol{\lambda}_{(a\bar{s}+b)\bar{s}+[\bar{s}]}^u))\varPsi_{\tilde{n},a}({\rm diag}(\boldsymbol{\lambda}_{(a\bar{s}+b)\bar{s}+[\bar{s}]}^w))\sum_{g\in[u]}\theta^{gw}{\bf c}_{(a\bar{s}+b)u+g}=0  \ {\rm if} \   w\in[u-v,u),
\end{align}
then  both \eqref{eq3-14} and \eqref{eq3-14-1} can be expressed as 
\begin{align}\label{eq3-15}
	\sum_{a\in[\tilde{n}]}\sum_{b\in[\bar{s}]}\bar{\mathbf{H}}_{a\bar{s}+b}\bar{\bf c}_{a\bar{s}+b}(w)=0,
\end{align}
where 
\begin{align}\label{eq3-16}
	\bar{\mathbf{H}}_{a\bar{s}+b}=\begin{cases}
		\varPsi_{\tilde{n},a}(\varphi_b^{(\bar{r})}(\boldsymbol{\lambda}_{(a\bar{s}+b)\bar{s}+[\bar{s}]}^u))  \ {\rm if} \  w\in[u-v],\\
		\varPsi_{\tilde{n},a}(\varphi_b^{(\bar{r}-1)}(\boldsymbol{\lambda}_{(a\bar{s}+b)\bar{s}+[\bar{s}]}^u))  \ {\rm if} \  w\in[u-v,u),
	\end{cases}
\end{align}
and 
\begin{align}\label{eq3-16-1}
	\bar{\bf c}_{a\bar{s}+b}(w)=\varPsi_{\tilde{n},a}({\rm diag}(\boldsymbol{\lambda}_{(a\bar{s}+b)\bar{s}+[\bar{s}]}^w))\sum\limits_{g\in[u]}\theta^{gw}{\bf c}_{(a\bar{s}+b)u+g}.
\end{align}
Clearly, \eqref{eq3-15} defines an intermediate code, denoted by $\bar{\mathcal{C}}(w)$, i.e.,
\begin{align}\label{eq3-17}
	\bar{\mathcal{C}}(w)=\{(\bar{\bf c}_0(w),\bar{\bf c}_1(w),\ldots,\bar{\bf c}_{\bar{n}-1}(w)):\sum_{a\in[\tilde{n}]}\sum_{b\in[\bar{s}]}\bar{\mathbf{H}}_{a\bar{s}+b}\bar{\bf c}_{a\bar{s}+b}(w)=0\},
\end{align}
where  $\bar{\mathbf{H}}_{a\bar{s}+b}$ is defined in \eqref{eq3-16}, while $\bar{\bf c}_{a\bar{s}+b}(w)$ is present in \eqref{eq3-16-1}. 

For $a\in[\tilde{n}]$, a non-empty subset ${\mathcal B}=\{b_0,b_1,\ldots,b_{t-1}\}\subseteq [\bar{s}]$ with $b_0<b_1<\ldots<b_{t-1}$, from \eqref{eq501} and \eqref{eq502} we have the $\bar{s}m\times \bar{s}t$ matrix
\begin{align}\label{eq3-17-1}
	\boldsymbol{\varphi}_{\mathcal B}^{(m)}(\boldsymbol{\lambda}_{a\bar{s}^2+{\mathcal B}\bar{s}+[\bar{s}]}^u)	=[\varphi_{b_0}^{(m)}(\boldsymbol{\lambda}_{a\bar{s}^2+b_0\bar{s}+[\bar{s}]}^u) \ \varphi_{b_1}^{(m)}(\boldsymbol{\lambda}_{a\bar{s}^2+b_1\bar{s}+[\bar{s}]}^u) \ldots \varphi_{b_{t-1}}^{(m)}(\boldsymbol{\lambda}_{a\bar{s}^2+b_{t-1}\bar{s}+[\bar{s}]}^u)],
\end{align}
and the $ml\times tl$ matrix
\begin{align}\label{eq3-17-2}
	\boldsymbol{\varPsi}_{\tilde{n},a}(\boldsymbol{\varphi}_{\mathcal B}^{(m)}(\boldsymbol{\lambda}_{a\bar{s}^2+{\mathcal B}\bar{s}+[\bar{s}]}^u))	=[\varPsi_{\tilde{n},a}(\varphi_{b_0}^{(m)}(\boldsymbol{\lambda}_{a\bar{s}^2+b_0\bar{s}+[\bar{s}]}^u)) \ \varPsi_{\tilde{n},a}(\varphi_{b_1}^{(m)}(\boldsymbol{\lambda}_{a\bar{s}^2+b_1\bar{s}+[\bar{s}]}^u))\ldots \varPsi_{\tilde{n},a}(\varphi_{b_{t-1}}^{(m)}(\boldsymbol{\lambda}_{a\bar{s}^2+b_{t-1}\bar{s}+[\bar{s}]}^u))],
\end{align}
where $m\geq t$. To short the notations, we briefly denote $\boldsymbol{\varPsi}_{\tilde{n},a}(\boldsymbol{\varphi}_{\mathcal B}^{(m)}(\boldsymbol{\lambda}_{a\bar{s}^2+{\mathcal B}\bar{s}+[\bar{s}]}^u))$ as $\mathbf{\bar{M}}_{a,{\mathcal B}}^{(m)}$, and $\boldsymbol{\varphi}_{\mathcal B}^{(m)}( \boldsymbol{\lambda}^u_{a\bar{s}^2+{\mathcal B}\bar{s}+[\bar{s}]})$ as $\boldsymbol{\bar{\phi}}_{a,{\mathcal B}}^{(m)}$.

To guarantee the MDS property of the code $\bar{\mathcal{C}}(w)$, we require not only that the $\bar{n}\bar{s}$ distinct elements $\boldsymbol{\lambda}_{[\bar{n}\bar{s}]}$ in $\mathbb{F}_q$ satisfy $\boldsymbol{\lambda}_{[\bar{n}\bar{s}]}^u$ being $\bar{n}\bar{s}$ distinct elements in $\mathbb{F}_q$, but also that 
\begin{align}\label{eq3-18}
	\text{det}(\boldsymbol{\bar{\phi}}_{a,{\mathcal B}}^{(\delta)})\not=0, a\in[\tilde{n}], \emptyset\not={\mathcal B}\subseteq [\bar{s}].
\end{align}
From Lemma \ref{lm2}, we know that these conditions \eqref{eq3-18} are equivalent to
\begin{align}\label{eq3-19}
	\text{det}(\mathbf{\bar{M}}_{a,{\mathcal B}}^{(\delta)})\not=0, a\in[\tilde{n}], \emptyset\not={\mathcal B}\subseteq [\bar{s}].
\end{align}
We term the conditions presented in \eqref{eq3-5} and \eqref{eq3-6} as local  constraints for  the code $\bar{\mathcal{C}}(w)$.

The existence of $\bar{n}\bar{s}$ distinct elements $\boldsymbol{\lambda}_{[\bar{n}\bar{s}]}$ that satisfy the conditions in \eqref{eq3-18} and \eqref{eq3-19} within a field of size $O_{\bar{s}}(n)$ is guaranteed by the subsequent two lemmas. Lemma \ref{lm10} guarantee the existence of elements satisfying the local constraints over a linear field, whereas Lemma \ref{lm11} offers an explicit construction.
\begin{lemma}\label{lm10}
	Assuming $q\geq n\bar{s}+u(\bar{s}-1)2^{\bar{s}-2} $, we can select $\bar{n}\bar{s}$ distinct elements $ \boldsymbol{\lambda}_{[\bar{n}\bar{s}]}$ from $\mathbb{F}_q$ such that $\boldsymbol{\lambda}_{[\bar{n}\bar{s}]}^u $are $\bar{n}\bar{s}$ distinct elements, and conditions \eqref{eq3-18} and \eqref{eq3-19} hold. And
	the time complexity to choose these elements is $O_{\bar{s}}(n)$. To be specific, for each $a\in[\tilde{n}]$, we can select the $\bar{s}^2$ elements $\boldsymbol{\lambda}_{a\bar{s}^2+[\bar{s}^2]}$ in any subset ${\mathcal S}\subseteq \mathcal Q$ of size $|{\mathcal S}|\geq \bar{s}^2+(\bar{s}-1)2^{\bar{s}-2}$ such that $\boldsymbol{\bar{\phi}}_{a,{\mathcal B}}^{(\delta)}$ and $\mathbf{\bar{M}}_{a,{\mathcal B}}^{(\delta)}$ are invertible for all ${\mathcal B}\subseteq[\bar{s}]$.
\end{lemma}
The proof for this lemma is provided in Appendix \ref{app-B}.
\begin{remark}
	It is not hard to see that if $q\geq n\bar{s}+u(\Omega(\bar{s},u)+(\bar{s}-1)2^{\bar{s}-2})$,  we can select $\bar{n}\bar{s}$ distinct elements $ \boldsymbol{\lambda}_{[\bar{n}\bar{s}]}$ from $\mathbb{F}_q$ such that Lemmas \ref{lm6} and  \ref{lm10} hold simultaneously.
\end{remark}
\begin{lemma}\label{lm11}
	Let $\mathbb{F}_q = \mathbb{F}_p[\xi]$, where $p $ is a prime. If $ q \geq \max\{n\bar{s}, p^{\bar{s}^5}\} $, then the $\bar{n}\bar{s}$ elements $\lambda_i = \xi^i$, $i \in [\bar{n}\bar{s}] $, satisfy the conditions that $\boldsymbol{\lambda}_{[\bar{n}\bar{s}]}^u$ are $\bar{n}\bar{s} $ distinct elements and that the local constraints \eqref{eq3-18} and \eqref{eq3-19} hold.
\end{lemma}
We omit the proof of Lemma \ref{lm11}, as it is similar to the proof of Lemma 7 in \cite{LWHY}.
\begin{remark}
	It is not hard to see that if $q\geq max\{n\bar{s}, p^{\bar{s}^5u^2}\}$, then Lemma \ref{lm7} and Lemma \ref{lm11} hold simultaneously.
\end{remark}

\subsection{MDS property}
\label{subsect-scheme-3} 

An $(n, k, l)$ array code $\mathcal C$ defined in \eqref{eq3-1} is an MDS array code if
any $r=n-k$ failed nodes in each codeword of $\mathcal{C}$ can be reconstructed from the remaining $k$ nodes. Define the set of failed node indices as ${\mathcal F}=\{i_1, i_2,\ldots, i_r\}$. With this, the parity check equation in \eqref{eq3-1} can be expressed as
\begin{align*}
	\sum_{i\in{\mathcal F}}\mathbf{H}_i{\bf c}_i=-\sum_{i\in[n]\setminus{\mathcal F}}\mathbf{H}_i{\bf c}_i.
\end{align*}
The recovery of failed nodes $\{{\bf c}_i: i \in {\mathcal F}\}$ requires that the above equation admits a unique solution. As a result, the MDS property of the code $\mathcal{C}$ is directly derived from the following arguments:
\begin{align} \label{eq3-9}
	[\mathbf{H}_{i_1} \ \mathbf{H}_{i_2}\ \ldots \ \mathbf{H}_{i_r}] \ \text{ is \ invertible \ for \ all} \ {\mathcal F}=\{i_1, i_2,\ldots, i_r\}\subset [n].
\end{align}
By recursively applying Lemma \ref{lm3}, we transform the matrices in the global constraints \eqref{eq3-9} into block upper triangular form, leading to our main result. The MDS property of the rack-aware MSR codes introduced in the preceding subsection is then directly established by the following result.
\begin{lemma}\label{lm8}
Let  $a_0,a_1,\ldots,a_{z-1}$ be any  $z$ distinct integers in $[\tilde{n}]$, and  ${\mathcal B}_0,{\mathcal B}_1,\ldots,{\mathcal B}_{z-1}$ be any $z$ non-empty subsets of 
$[\bar{s}]$ such that $|{\mathcal G}_{{\mathcal B}_0}|+|{\mathcal G}_{{\mathcal B}_1}|+\ldots+|{\mathcal G}_{{\mathcal B}_{z-1}}|=m\leq r$. Then, we have
\begin{align}\label{eq3-7}
	{\rm det}([\mathbf{M}_{a_0,\mathcal{B}_0}^{(m)} \ \mathbf{M}_{a_1,\mathcal{B}_1}^{(m)}	\ldots \mathbf{M}_{a_{z-1},\mathcal{B}_{z-1}}^{(m)}])\not=0,
\end{align}
where $\mathbf{M}_{a_i,\mathcal{B}_i}^{(m)}=\boldsymbol{\varPsi}_{\tilde{n},a_i}(\boldsymbol{\varphi}_{{\mathcal B}_i}^{(m)}(\theta^{{\mathcal G}_{{\mathcal B}_i}}\odot \boldsymbol{\lambda}_{a_i\bar{s}^2+{\mathcal B}_i\bar{s}+[\bar{s}]}))$ for $i\in[z]$.
\end{lemma}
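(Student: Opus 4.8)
The plan is to prove \eqref{eq3-7} by induction on $z$, using Lemma \ref{lm3} to peel off one group $a_i$ at a time while keeping track of how the remaining matrices transform. First I would set up the base case $z=1$: here the matrix is $\boldsymbol{\varPsi}_{\tilde{n},a_0}(\boldsymbol{\varphi}_{{\mathcal B}_0}^{(m)}(\theta^{{\mathcal G}_{{\mathcal B}_0}}\odot\boldsymbol{\lambda}_{a_0\bar{s}^2+{\mathcal B}_0\bar{s}+[\bar{s}]}))$ with $m=|{\mathcal G}_{{\mathcal B}_0}|$, which is exactly a square matrix of the form appearing in the local constraint \eqref{eq3-6}; its invertibility is guaranteed by Lemma \ref{lm6} (equivalently Lemma \ref{lm2}). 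Note that because $a_0,\dots,a_{z-1}$ are distinct, all the elements $\theta^{{\mathcal G}_{{\mathcal B}_i}}\odot\boldsymbol{\lambda}_{a_i\bar{s}^2+{\mathcal B}_i\bar{s}+[\bar{s}]}$ for different $i$ are drawn from disjoint blocks of the $\boldsymbol{\lambda}$'s, hence pairwise distinct and disjoint, which is precisely the hypothesis needed to invoke parts (1)–(3) of Lemma \ref{lm3}.

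For the inductive step, I would single out the group $a_0$ and write $\delta_0=|{\mathcal G}_{{\mathcal B}_0}|$. Since the block $\boldsymbol{\varPsi}_{\tilde{n},a_0}(\boldsymbol{\varphi}_{{\mathcal B}_0}^{(\delta_0)}(\cdots))$ is invertible and $m>\delta_0$ (assuming $z\geq 2$, so $m=\delta_0+(m-\delta_0)$ with $m-\delta_0\geq 1$), Lemma \ref{lm3} furnishes an $ml\times ml$ matrix $\mathbf{V}$ such that left-multiplying the whole matrix $[\mathbf{M}_0^{(m)}\ \mathbf{M}_1^{(m)}\ \ldots\ \mathbf{M}_{z-1}^{(m)}]$ by $\mathbf{V}$ produces a block structure: the first block-column $\mathbf{M}_0^{(m)}$ becomes $\begin{bmatrix}\boldsymbol{\varPsi}_{\tilde{n},a_0}(\boldsymbol{\varphi}_{{\mathcal B}_0}^{(\delta_0)}(\cdots))\\ \mathbf{0}\end{bmatrix}$, while for each $i\geq 1$ the column $\mathbf{M}_i^{(m)}$ — which is a concatenation of $\varPsi_{\tilde{n},a_i}(\varphi_{b}^{(m)}(\cdots))$ blocks with $a_i\neq a_0$ — becomes $\begin{bmatrix}\mathbf{M}_i^{(\delta_0)}\\ \widehat{\mathbf{M}}_i^{(m-\delta_0)}\end{bmatrix}$ by applying Lemma \ref{lm3}(2) column-by-column, where $\widehat{\mathbf{M}}_i^{(m-\delta_0)}$ is column-equivalent to $\mathbf{M}_i^{(m-\delta_0)}=\boldsymbol{\varPsi}_{\tilde{n},a_i}(\boldsymbol{\varphi}_{{\mathcal B}_i}^{(m-\delta_0)}(\cdots))$. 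Hence $\mathbf{V}[\mathbf{M}_0^{(m)}\ \ldots\ \mathbf{M}_{z-1}^{(m)}]$ is block upper-triangular with diagonal blocks $\boldsymbol{\varPsi}_{\tilde{n},a_0}(\boldsymbol{\varphi}_{{\mathcal B}_0}^{(\delta_0)}(\cdots))$ and $[\widehat{\mathbf{M}}_1^{(m-\delta_0)}\ \ldots\ \widehat{\mathbf{M}}_{z-1}^{(m-\delta_0)}]$. Its determinant is (up to the nonzero factor $\det\mathbf{V}$, which we must check is invertible — this is implicit in how $\mathbf V$ is produced in Lemma \ref{lm3}, since it is obtained as a product of elementary row operations) the product of the determinants of these two diagonal blocks. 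The first is nonzero by the local constraint; the second is nonzero by the induction hypothesis applied to the $z-1$ distinct indices $a_1,\dots,a_{z-1}$ with total weight $m-\delta_0\leq r$, once one notes that column-equivalence preserves invertibility.

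The main obstacle I anticipate is the bookkeeping in the inductive step: one must verify that applying Lemma \ref{lm3}(2) separately to each constituent block $\varPsi_{\tilde{n},a_i}(\varphi_{b_j}^{(m)}(\theta^{g}\boldsymbol{\lambda}_{\cdots}))$ of $\mathbf{M}_i^{(m)}$ yields a \emph{simultaneous} transformation with the \emph{same} $\mathbf{V}$, so that the resulting $\widehat{\mathbf{M}}_i^{(m-\delta_0)}$ really assembles into a matrix column-equivalent to $\mathbf{M}_i^{(m-\delta_0)}$ rather than merely each block being individually column-equivalent to the corresponding block of $\mathbf{M}_i^{(m-\delta_0)}$; since column-equivalence acting blockwise on a block matrix still preserves the span of the whole set of columns, and we only care about invertibility of the final square matrix, this works out, but it needs to be stated carefully. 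A secondary point is checking the disjointness hypothesis of Lemma \ref{lm3} at every stage: the elements $\theta^{{\mathcal G}_{{\mathcal B}_0}}\odot\boldsymbol{\lambda}_{a_0\bar{s}^2+{\mathcal B}_0\bar{s}+[\bar{s}]}$ must be disjoint from all the $\mathbf{y}$-vectors arising from $a_1,\dots,a_{z-1}$; this follows because the $\bar{n}\bar{s}$ elements $\boldsymbol{\lambda}_{[\bar{n}\bar{s}]}$ are partitioned by the index $a$ into blocks of size $\bar{s}^2$, and $\theta$ has multiplicative order $u$ with $\theta^{[u]}\boldsymbol{\lambda}_{[\bar{n}\bar{s}]}$ all distinct, so distinct $a$'s give genuinely disjoint element sets. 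Finally, one should handle the degenerate sub-case $m=\delta_0$, i.e. when $z-1$ groups contribute zero weight — but this cannot happen since each ${\mathcal B}_i$ is non-empty and each ${\mathcal G}_{{\mathcal B}_i}$ therefore has size at least $1$, so $m-\delta_0\geq z-1\geq 1$ whenever $z\geq 2$.
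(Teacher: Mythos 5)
Your proposal is correct and follows essentially the same route as the paper's proof: induction on $z$, with the base case handled by the local constraints \eqref{eq3-6} and the inductive step obtained by applying Lemma \ref{lm3} to produce the block upper-triangular form $\bigl[\begin{smallmatrix}\mathbf{M}_0^{(\delta_0)} & * \\ 0 & \widehat{\mathbf{M}}\end{smallmatrix}\bigr]$ and then invoking column-equivalence plus the induction hypothesis on the lower-right block. The extra points you flag (invertibility of $\mathbf{V}$, disjointness of the $\lambda$-blocks across distinct $a_i$, and the fact that the same $\mathbf{V}$ acts simultaneously on all block columns) are exactly the details the paper leaves implicit, and your resolutions of them are sound.
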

\begin{proof}
The proof proceeds by induction on the positive integer  $z$. When
$z=1$, since all the $\lambda_i$s meet local constraints \eqref{eq3-6}, we
obtain $\text{det}(\mathbf{M}_0^{(m)})\not=0$.

Under the inductive hypothesis, we take the conclusion to be true for any positive integer $z$. For the case of $z+1$, we have $|{\mathcal G}_{{\mathcal B}_0}|+|{\mathcal G}_{{\mathcal B}_1}|+\ldots+|{\mathcal G}_{{\mathcal B}_{z}}|=m$. Write $\delta_0=|{\mathcal G}_{{\mathcal B}_0}|$. Since $\mathbf{M}_{a_0,\mathcal{B}_0}^{(\delta_0)}$ is invertible, all the $n\bar{s}$ elements $\theta^{[u]}\boldsymbol{\lambda}_{[\bar{n}\bar{s}]}$ are also distinct, by Lemma \ref{lm3}, there exists $ml\times ml$ matrix $\mathbf{V}$ such that 
\begin{align} \label{eq3-8}
\nonumber \text{det}(\mathbf{V}[\mathbf{M}_{a_0,\mathcal{B}_0}^{(m)} \ \mathbf{M}_{a_1,\mathcal{B}_1}^{(m)}	\ldots \mathbf{M}_{a_z,\mathcal{B}_z}^{(m)}])&=\text{det}{\left(\left[\begin{array}{lllll}
		\mathbf{M}_{a_0,\mathcal{B}_0}^{(\delta_0)}	 & \mathbf{M}_{a_1,\mathcal{B}_1}^{(\delta_0)} & \ldots & \mathbf{M}_{a_z,\mathcal{B}_z}^{(\delta_0)}\\
        0	 & 	\widehat{\mathbf{M}}_{a_1,\mathcal{B}_1}^{(m-\delta_0)} & \ldots & 	\widehat{\mathbf{M}}_{a_z,\mathcal{B}_z}^{(m-\delta_0)}\\
	\end{array}
	\right]\right)}\\
	&=\text{det}\left (\mathbf{M}_{a_0,\mathcal{B}_0}^{(\delta_0)}\right)\text{det}\left ([\widehat{\mathbf{M}}_{a_1,\mathcal{B}_1}^{(m-\delta_0)}\ldots\widehat{\mathbf{M}}_{a_z,\mathcal{B}_z}^{(m-\delta_0)}]\right),
\end{align}
where $\widehat{\mathbf{M}}_{a_i,\mathcal{B}_i}^{(m-\delta_0)}=\widehat{\boldsymbol{\varPsi}}_{\tilde{n},a_i}(\boldsymbol{\varphi}_{{\mathcal B}_i}^{(m-\delta_0)}(\theta^{{\mathcal G}_{{\mathcal B}_i}}\odot \boldsymbol{\lambda}_{a_i\bar{s}^2+{\mathcal B}_i\bar{s}+[\bar{s}]}))$ for $i\in[1,z]$, and each $\widehat{\mathbf{M}}_{a_i,\mathcal{B}_i}^{(m-\delta_0)}$ is column equivalent to $\mathbf{M}_{a_i,\mathcal{B}_i}^{(m-\delta_0)}$. Note that $|{\mathcal G}_{{\mathcal B}_1}|+\ldots+|{\mathcal G}_{{\mathcal B}_{z}}|=m-\delta_0$. By the induction hypothesis, the matrix $[\mathbf{M}_{a_1,\mathcal{B}_1}^{(m-\delta_0)} \ \mathbf{M}_{a_2,\mathcal{B}_2}^{(m-\delta_0)}\ldots \mathbf{M}_{a_z,\mathcal{B}_z}^{(m-\delta_0)}]$ is invertible. It means that matrix $[\widehat{\mathbf{M}}_{a_1,\mathcal{B}_1}^{(m-\delta_0)} \ \widehat{\mathbf{M}}_{a_2,\mathcal{B}_2}^{(m-\delta_0)}\ldots \widehat{\mathbf{M}}_{a_z,\mathcal{B}_z}^{(m-\delta_0)}]$ is also invertible. By \eqref{eq3-8}, we have 
\begin{align*}
	\text{det}([\mathbf{M}_{a_0,\mathcal{B}_0}^{(m)} \ \mathbf{M}_{a_1,\mathcal{B}_1}^{(m)}	\ldots \mathbf{M}_{a_z,\mathcal{B}_z}^{(m)}])\not=0.
\end{align*}
\end{proof}
Observe that setting $m=r$ in Lemma \ref{lm8} leads directly to the global constraints \eqref{eq3-9}.

Similarly, an $(\bar{n}, \bar{k}, l)$ array code ${ \bar{\mathcal C}}(w)$, where $w\in[u-v]$, defined in \eqref{eq3-17} is an MDS array code if it allows the reconstruction of any $\bar{r}=\bar{n}-\bar{k}$ failed nodes using the remaining $\bar{k}$ nodes. Define the set of failed node indices as ${\mathcal F}=\{i_1, i_2,\ldots, i_{\bar{r}}\}$. With this, the parity check equation in \eqref{eq3-17} can be expressed as
\begin{align*}
	\sum_{i\in{\mathcal F}}\bar{\mathbf{H}}_i\bar{\bf c}_i(w)=-\sum_{i\in[\bar{n}]\setminus{\mathcal F}}\bar{\mathbf{H}}_i\bar{\bf c}_i(w).
\end{align*}
The recovery of failed nodes $\{\bar{\mathbf c}_i(w): i \in {\mathcal F}\}$ requires that the above equation admits a unique solution. As a result, the MDS property of the code $ \bar{\mathcal C}(w)$ for $w\in[u-v]$ is directly derived from the following arguments:
\begin{align} \label{eq3-10}
	[\bar{\mathbf{H}}_{i_1} \ \bar{\mathbf{H}}_{i_2}\ \ldots \ \bar{\mathbf{H}}_{i_{\bar{r}}}] \ \text{ is \ invertible \ for \ all} \ {\mathcal F}=\{i_1, i_2,\ldots, i_{\bar{r}}\}\subseteq [\bar{n}].
\end{align}
In the same way, the MDS property of the code ${ \bar{\mathcal C}}(w)$ for $w\in[u-v,u-1]$ is directly obtained from the following arguments:
\begin{align} \label{eq3-10-1}
	[\bar{\mathbf{H}}_{i_1} \ \bar{\mathbf{H}}_{i_2}\ \ldots \ \bar{\mathbf{H}}_{i_{\bar{r}-1}}] \ \text{ is \ invertible \ for \ all} \ {\mathcal F}'=\{i_1, i_2,\ldots, i_{\bar{r}-1}\}\subseteq [\bar{n}].
\end{align}

Using Lemmas \ref{lm3}, \ref{lm10} and \ref{lm11}, we can easily derive the following conclusion in a manner similar to the proof of Lemma \ref{lm8}.
\begin{lemma}\label{lm12}
	Let  $a_0,a_1,\ldots,a_{z-1}$ be any  $z$ distinct integers in $[\tilde{n}]$, and let  ${\mathcal B}_0,{\mathcal B}_1,\ldots,{\mathcal B}_{z-1}$ be any $z$ non-empty subsets of 
	$[\bar{s}]$ such that $|{\mathcal B}_0|+|{\mathcal B}_1|+\ldots+|{\mathcal B}_{z-1}|=m\leq \bar{r}$. Then, we obtain
	\begin{align}\label{eq3-20}
		{\rm det}([\mathbf{\bar{M}}_{a_0,\mathcal{ B}_0}^{(m)} \ \mathbf{\bar{M}}_{a_1,\mathcal{ B}_1}^{(m)}	\ldots \mathbf{\bar{M}}_{a_{z-1},\mathcal{ B}_{z-1}}^{(m)}])\not=0,
	\end{align}
	where $\mathbf{\bar{M}}_{a_i,\mathcal{ B}_i}^{(m)}=\boldsymbol{\varPsi}_{\tilde{n},a_i}(\boldsymbol{\varphi}_{{\mathcal B}_i}^{(m)}(\boldsymbol{\lambda}_{a_i\bar{s}^2+{\mathcal B}_i\bar{s}+[\bar{s}]}^u))$ for $i\in[z]$.
\end{lemma}

Substituting $m = \bar{r}$ and $m = \bar{r} - 1$ into Lemma \ref{lm12}, we immediately derive the global constraints \eqref{eq3-10} and \eqref{eq3-10-1}, respectively. Then, the following rusult is obtained.
\begin{lemma}\label{lm12-1}
  The code $\bar{\mathcal{C}}(w)$, as defined in \eqref{eq3-17}, constitutes an $(\bar{n},\bar{k},l)$ MDS array code when $w\in[u-v]$. For  $w\in[u-v, u-1]$, the code $\bar{\mathcal{C}}(w)$ in \eqref{eq3-17} is an $(\bar{n},\bar{k}+1,l)$ MDS array code.
\end{lemma}

\subsection{Repair scheme of $(n,k,\bar{s}^{\bar{n}/\bar{s}})$ rack-aware MSR codes}
\label{subsect-scheme-4}

In the code $\mathcal C$ defined in \eqref{eq3-1}, suppose that $a\bar{s}+b$, where $a\in[\tilde{n}]$ and $b\in[\bar{s}]$, is the index of the host rack. If there are $h\in[1,u]$ failed nodes within this rack, the index set for these failed nodes in host rack  $a\bar{s}+b$ is ${\mathcal I}=\{g_0,g_1,\ldots,g_{h-1}\}$, and the index set for the $\bar{d}$ helper racks is ${\mathcal H}$, where ${\mathcal H}\subseteq[\bar{n}]\setminus\{a\bar{s}+b\}$.

We first consider the case where $h\in[1,u-v]$. Note that the index of a rack in the code $\mathcal{C}$ corresponds to the index of a node in the code $\bar{\mathcal{C}}(w)$ defined in \eqref{eq3-17}, then ${\mathcal H}$ naturally serves as the index set for helper nodes in the code $\bar{\mathcal{C}}(w)$. For each $w\in[h]$, node $\bar{\bf c}_{a\bar{s}+b}(w)$ of code $\bar{\mathcal{C}}(w)$ is recoverable from the set $\{\mathbf{R}_{a,b}\bar{\bf c}_{i}(w):i\in{\mathcal H}\}$. Therefore, the $h$ failed nodes ${\bf c}_{(a\bar{s}+b)u+g_0}$, ${\bf c}_{(a\bar{s}+b)u+g_1}$, $\ldots$, ${\bf c}_{(a\bar{s}+b)u+g_{h-1}}$ in host rack $a\bar{s}+b$ can be recovered from $\{\bar{\bf c}_{a\bar{s}+b}(w):w\in[h]\}$ together with local helper nodes. For the case of $h\in[u-v+1,u]$, we can treat similarly. To recover the $h$ failed nodes in host rack $a\bar{s}+b$, we need the following result.
\begin{lemma}\label{lm13}
Following the notations introduced above.  For a given $w\in[u-v]$, the repair bandwidth and the number of accessed symbols for single-node repair in the code
 $\bar{\mathcal{C}}(w)$ are both  $\bar{d}l/\bar{s}$. On the other hand, for a given $w\in[u-v,u-1]$, the repair bandwidth of the code $\bar{\mathcal{C}}(w)$  is $(\bar{d}+1)l/\bar{s}$ for  single-node repair.
\end{lemma}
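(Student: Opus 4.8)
The plan is to imitate the single-node repair of the coupled-layer construction of \cite{LWHY}, carried out for the auxiliary code $\bar{\mathcal{C}}(w)$ rather than for $\mathcal{C}$. Fix the host rack $a\bar{s}+b$ with $a\in[\tilde{n}]$, $b\in[\bar{s}]$. For $w\in[u-v]$, where $\bar{\mathcal{C}}(w)$ is the $(\bar{n},\bar{k},l)$ MDS array code of Lemma \ref{lm12-1} with parity-check submatrices $\bar{\mathbf{H}}_{a'\bar{s}+b'}=\varPsi_{\tilde{n},a'}(\varphi_{b'}^{(\bar{r})}(\boldsymbol{\lambda}_{(a'\bar{s}+b')\bar{s}+[\bar{s}]}^{u}))$, I would let ${\mathcal H}\subseteq[\bar{n}]\setminus\{a\bar{s}+b\}$ be a set of $\bar{d}$ helper nodes and, from each $\bar{\bf c}_i(w)$, $i\in{\mathcal H}$, download the $\bar{l}=l/\bar{s}$ coordinates $\mathbf{R}_{a,b}\bar{\bf c}_i(w)$ (those whose index $j$ satisfies $j_a=b$). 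This is $\bar{d}l/\bar{s}$ symbols, and since each downloaded symbol is a stored coordinate that is merely read and forwarded, the number of accessed symbols equals the repair bandwidth.

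To prove this suffices, I would left-multiply the parity-check identity \eqref{eq3-15} by $\mathbf{R}_{a,b}\otimes\mathbf{I}_{\bar{r}}$, insert $\mathbf{I}_l=\sum_{z\in[\bar{s}]}\mathbf{R}_{a,z}^{\top}\mathbf{R}_{a,z}$ in front of each $\bar{\bf c}_{a'\bar{s}+b'}(w)$, and then apply Lemma \ref{lm4}. By parts (1)--(3) of that lemma every cross-term with $z\neq b$ vanishes except the $\bar{s}$ terms coming from the host rack itself, and what remains is a linear relation whose unknowns are the $\bar{s}$ blocks $\mathbf{R}_{a,z}\bar{\bf c}_{a\bar{s}+b}(w)$, $z\in[\bar{s}]$, of the failed node, together with the $\bar{n}-1-\bar{d}$ blocks $\mathbf{R}_{a,b}\bar{\bf c}_i(w)$ of the non-helper racks. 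Since $\bar{s}+(\bar{n}-1-\bar{d})=\bar{r}$ and the reduced equation has $\bar{r}$ block-rows, the system is square; the coefficient of the host block $\mathbf{R}_{a,z}\bar{\bf c}_{a\bar{s}+b}(w)$ is $\pm\,\mathbf{I}_{\bar{l}}\otimes L^{(\bar{r})}(\lambda_{(a\bar{s}+b)\bar{s}+z}^{u})$, that of a same-group non-helper block is again of the form $\mathbf{I}_{\bar{l}}\otimes L^{(\bar{r})}(\lambda_j^{u})$, and that of an other-group block is $\bar{\varPsi}_{\tilde{n},\bar{a'}}(\varphi_{b'}^{(\bar{r})}(\boldsymbol{\lambda}^{u}))=\varPsi_{\tilde{n}-1,\bar{a'}}(\varphi_{b'}^{(\bar{r})}(\boldsymbol{\lambda}^{u}))$, with $\bar{a'}$ as in Lemma \ref{lm4}. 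I would then establish invertibility of this coefficient matrix by the recursive block-triangularization used to prove Lemmas \ref{lm8} and \ref{lm12}: peel off one block at a time with Lemma \ref{lm3}, invoking part (3) for the $\mathbf{I}_{\bar{l}}\otimes L^{(\bar{r})}$ blocks (as in the use of Lemma \ref{lm3}(3) with $u=1$ in the worked $(8,4,4)$ example) and parts (1)--(2) for the $\varPsi_{\tilde{n}-1,\cdot}$ blocks, so that invertibility follows, via Lemma \ref{lm2}, from the local constraints \eqref{eq3-18}--\eqref{eq3-19}. Solving the square system recovers all $\bar{s}$ pieces of $\bar{\bf c}_{a\bar{s}+b}(w)$ at once, i.e.\ the whole failed node, with bandwidth and access both equal to $\bar{d}l/\bar{s}$.

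For $w\in[u-v,u-1]$, where $\bar{\mathcal{C}}(w)$ is the $(\bar{n},\bar{k}+1,l)$ MDS array code of Lemma \ref{lm12-1} with $\bar{\mathbf{H}}_{a'\bar{s}+b'}=\varPsi_{\tilde{n},a'}(\varphi_{b'}^{(\bar{r}-1)}(\cdot))$, the same reduction with $\mathbf{R}_{a,b}\otimes\mathbf{I}_{\bar{r}-1}$ produces only $\bar{r}-1$ block-rows against the same $\bar{r}$ unknown blocks; equivalently, $l=\bar{s}^{\lceil\bar{n}/\bar{s}\rceil}$ is exactly the MSR sub-packetization of this code for repair degree $\bar{d}+1$, since $\bar{s}=(\bar{d}+1)-(\bar{k}+1)+1$. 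I would therefore download $\mathbf{R}_{a,b}\bar{\bf c}_i(w)$ from $\bar{d}+1$ helper nodes instead of $\bar{d}$; the non-downloaded blocks then number $\bar{n}-\bar{d}-2$, the total unknown blocks number $\bar{s}+(\bar{n}-\bar{d}-2)=\bar{r}-1$, the system is again square, and the same invertibility argument yields a valid repair of bandwidth $(\bar{d}+1)l/\bar{s}$.

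The step I expect to be the main obstacle is proving invertibility of the reduced square coefficient matrix uniformly over all host racks $a\bar{s}+b$ and all helper sets. The observation that makes this manageable is the identity $\bar{\varPsi}_{\tilde{n},\cdot}=\varPsi_{\tilde{n}-1,\cdot}$, which exhibits the reduced matrix as an instance of exactly the family already controlled by Lemma \ref{lm12} with $\tilde{n}$ lowered by one; granting that, no input beyond Lemmas \ref{lm2}, \ref{lm3}, \ref{lm4} and \ref{lm12} is required, and what is left — tracking which $z$-terms survive the reduction and checking the block-row and unknown counts (including the degenerate $\tilde{n}=1$ case, where the reduced system is a plain Vandermonde one in the distinct elements $\lambda_i^u$) — is routine bookkeeping.
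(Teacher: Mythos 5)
Your proposal is correct and follows essentially the same route as the paper: the same projection by $\mathbf{R}_{a,b}\otimes\mathbf{I}_{\bar{r}}$ (resp.\ $\mathbf{I}_{\bar{r}-1}$) with Lemma \ref{lm4}, the same identification of the reduced submatrices as $\mathbf{I}_{\bar{l}}\otimes L^{(\bar{r})}(\lambda^{u})$ blocks plus $\bar{\varPsi}_{\tilde{n},\bar{e}}=\varPsi_{\tilde{n}-1,\bar{e}}$ blocks, and the same invertibility argument via Lemmas \ref{lm2}, \ref{lm3} and \ref{lm12}. The only cosmetic difference is that the paper packages your square-system invertibility claim as the statement that \eqref{eq3-22} defines an $(\bar{n}+\bar{s}-1,\bar{k}+\bar{s}-1,\bar{l})$ MDS array code (and $(\bar{n}+\bar{s}-1,\bar{k}+\bar{s},\bar{l})$ for $w\in[u-v,u-1]$), which is an equivalent formulation.
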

\begin{proof}
 We only need to show the first conclusion , and the second conclusion  can be proved similarly. For any fixed $w\in[u-v]$, assume that node $\bar{\bf c}_{a\bar{s}+b}(w)$, where $a\in[\tilde{n}]$ and $b\in[\bar{s}]$, is failed, and ${\mathcal H}$ is the index set of the helper nodes of size $\bar{d}$.  Since every row of matrix $\mathbf{R}_{a,b}$ contains only one nonzero element, and all helper nodes utilize the same repair matrix $\mathbf{R}_{a,b}$ to download helper symbols, this repair process exhibits both the low-access property and the constant-repair property.  Recall that the parity check equation of the code ${ \bar{\mathcal C}}(w)$, $w\in[u-v]$  is 
\begin{align*}
	\bar{\mathbf{H}}_0\bar{\bf c}_0(w)+\bar{\mathbf{H}}_1\bar{\bf c}_1(w)+\ldots+\bar{\mathbf{H}}_{\bar{n}-1}\bar{\bf c}_{\bar{n}-1}(w)=0,
\end{align*}
where $\bar{\mathbf{H}}_i (i\in[\bar{n}])$ is defined in \eqref{eq3-16}. In the above equation, each $\bar{\mathbf{H}}_i$ is composed of $l$ block rows, and each block row includes $\bar{r}$ rows. This implies that there are $l$ sets of parity check equations, with each set containing $\bar{r}$ equations. When repairing a failed node $\bar{\bf c}_{a\bar{s}+b}(w)$, only $\bar{l} = l/\bar{s}$ of these $l$ sets are utilized. Specifically, the repair matrix $\mathbf{R}_{a,b}$ is employed to choose the $ \bar{l}$ block rows from each $\bar{\mathbf{H}}_i$ for which the block row indices $i$ satisfy $i_a=b$. More accurately, we have  
\begin{align}\label{eq3-21}
\nonumber	(\mathbf{R}_{a,b}\otimes \mathbf{I}_{\bar{r}})\sum_{i\in[\bar{n}]}\bar{\mathbf{H}}_i\bar{\bf c}_i(w) =&\sum_{i\in[\bar{n}]}(\mathbf{R}_{a,b}\otimes \mathbf{I}_{\bar{r}})\bar{\mathbf{H}}_i(\sum_{z\in[\bar{s}]}\mathbf{R}_{a,z}^{\top}\mathbf{R}_{a,z})\bar{\bf c}_i(w)\\
\nonumber	=&\sum_{i\in[\bar{n}]}\sum_{z\in[\bar{s}]}[(\mathbf{R}_{a,b}\otimes \mathbf{I}_{\bar{r}})\bar{\mathbf{H}}_i\mathbf{R}_{a,z}^{\top}](\mathbf{R}_{a,z}\bar{\bf c}_i(w))\\
\nonumber	=&\sum_{z\in[\bar{s}]}[(\mathbf{R}_{a,b}\otimes \mathbf{I}_{\bar{r}})\bar{\mathbf{H}}_{a\bar{s}+b}\mathbf{R}_{a,z}^{\top}](\mathbf{R}_{a,z}\bar{\bf c}_{a\bar{s}+b}(w))\\
	&+\sum_{i\in[\bar{n}]\setminus\{a\bar{s}+b\}}[(\mathbf{R}_{a,b}\otimes \mathbf{I}_{\bar{r}})\bar{\mathbf{H}}_i\mathbf{R}_{a,b}^{\top}](\mathbf{R}_{a,b}\bar{\bf c}_i(w))=0
\end{align}
from \eqref{eq1}, \eqref{eq2}, \eqref{eq3-16}
and Lemma \ref{lm4}. For each $z\in[\bar{s}]$, let
\begin{align*}
	&\widetilde{\mathbf{H}}_{a\bar{s}+b}^{(z)}=(\mathbf{R}_{a,b}\otimes \mathbf{I}_{\bar{r}})\bar{\mathbf{H}}_{a\bar{s}+b}\mathbf{R}_{a,z}^{\top}, \\ &\tilde{\bf c}_{a\bar{s}+b}^{(z)}(w)=\mathbf{R}_{a,z}\bar{\bf c}_{a\bar{s}+b}(w),
\end{align*}
and for each $i\in[\bar{n}]\setminus\{a\bar{s}+b\}$, let 
\begin{align}\label{eq3-21-1}
\nonumber	&\widetilde{\mathbf{H}}_{i}=(\mathbf{R}_{a,b}\otimes \mathbf{I}_{\bar{r}})\bar{\mathbf{H}}_i\mathbf{R}_{a,b}^{\top}, \\ &\tilde{\bf c}_{i}(w)=\mathbf{R}_{a,b}\bar{\bf c}_i(w).
\end{align}
So, the equation \eqref{eq3-21} can be expressed as 
\begin{align}\label{eq3-22}
	\sum_{z\in[\bar{s}]}\widetilde{\mathbf{H}}_{a\bar{s}+b}^{(z)}\tilde{\bf c}_{a\bar{s}+b}^{(z)}(w)+\sum_{i\in[\bar{n}]\setminus\{a\bar{s}+b\}}\widetilde{\mathbf{H}}_{i}\tilde{\bf c}_{i}(w)=0.
\end{align}

According to Lemma \ref{lm4}, the $\bar{s}+\bar{n}-1$ matrices $\widetilde{\mathbf{H}}_{a\bar{s}+b}^{(z)}$, where $z\in[\bar{s}]$, and $\widetilde{\mathbf{H}}_{i}$, where $i\in[\bar{n}]\setminus\{a\bar{s}+b\}$, are block matrices of size $\bar{l}\times\bar{l}$. Each entry in these matrices is a column vector with a length of $\bar{r}$. Furthermore, $\tilde{\bf c}_{a\bar{s}+b}^{(z)}(w)$ for $z\in[\bar{s}]$ and $\tilde{\bf c}_{i}(w)$ for $i\in[\bar{n}]\setminus\{a\bar{s}+b\}$ are column vectors of length $\bar{l}$. Specifically, we obtain\\
(1) For $z\in[\bar{s}]$,
\begin{align*}
	\widetilde{\mathbf{H}}_{a\bar{s}+b}^{(z)}=\begin{cases}
		{\bf I}_{\bar{l}}\otimes L^{(\bar{r})}(\lambda_{(a\bar{s}+b)\bar{s}+b}^u) \ \text{if} \ z=b,\\
		-{\bf I}_{\bar{l}}\otimes L^{(\bar{r})}(\lambda_{(a\bar{s}+b)\bar{s}+z}^u) \ \text{if} \  z\not=b;\\
	\end{cases}
 \end{align*}
(2) for $f\in[\bar{s}]\setminus\{b\}$
\begin{align*}
	\widetilde{\mathbf{H}}_{a\bar{s}+f}={\bf I}_{\bar{l}}\otimes L^{(\bar{r})}(\lambda_{(a\bar{s}+f)\bar{s}+b}^u);
\end{align*}	
(3) and for $e\in[\tilde{n}]\setminus\{a\}$, $f\in[\bar{s}]$,
\begin{align*}
	\widetilde{\mathbf{H}}_{e\bar{s}+f}=\bar{\varPsi}_{\tilde{n},\bar{e}}(\varphi_f^{(\bar{r})}({\boldsymbol{\lambda}}_{(e\bar{s}+f)\bar{s}+[\bar{s}]}^u)),
\end{align*}	
where $\bar{e}$ is defined as in Lemma \ref{lm4}.

It is observed that \eqref{eq3-22} defines an array code of length $\bar{n}+\bar{s}-1$. The nodes of this new array code can be partitioned into $\tilde{n}$ groups as follows. For $e\in[\tilde{n}]\setminus\{a\}$, group $e$ contains the $\bar{s}$ nodes $\tilde{\bf c}_{e\bar{s}+f}(w)$, where $f\in[\bar{s}]$. Group $a$ includes the $2\bar{s}-1$ nodes $\tilde{\bf c}_{a\bar{s}+b}^{(z)}(w)$ for $z\in[\bar{s}]$ and $\tilde{\bf c}_{a\bar{s}+f}(w)$ for $f\in[\bar{s}]\setminus\{b\}$. Additionally, the parity check submatrices $\widetilde{\mathbf{H}}_{e\bar{s}+f}$, where $e\in[\tilde{n}]\setminus\{a\}$ and $f\in[\bar{s}]$, are precisely the $\bar{n}-\bar{s}$ parity check submatrices that would appear in the MSR code construction with code length $\bar{n}-\bar{s}$ and sub-packetization $\bar{l}$. The parity check submatrices of group $a$, specifically $\widetilde{\mathbf{H}}_{a\bar{s}+b}^{(z)}$ for $z\in[\bar{s}]$ and $\widetilde{\mathbf{H}}_{a\bar{s}+f}$ for $f\in[\bar{s}]\setminus\{b\}$, are block diagonal matrices with identical diagonal entries within each matrix. Since the  $\lambda_i^u$'s that appear in $\widetilde{\mathbf{H}}_{a\bar{s}+b}^{(z)}$ for $z\in[\bar{s}]$ and $\widetilde{\mathbf{H}}_{a\bar{s}+f}$ for $f\in[\bar{s}]\setminus\{b\}$ are distinct from those in $\widetilde{\mathbf{H}}_{e\bar{s}+f}$ for $e\in[\tilde{n}]\setminus\{a\}$ and $f\in[\bar{s}]$, then the approach used to establish the MDS property of the $(\bar{n},\bar{k},\bar{l})$ array code $\bar{\mathcal{C}}(w)$ in Lemma \ref{lm12}, along with Lemma \ref{lm3} (3), can be generalized to prove that \eqref{eq3-22} also defines an $(\bar{n}+\bar{s}-1,\bar{k}+\bar{s}-1,\bar{l})$ MDS array code
\begin{align}\label{eq3-22-1}
	(\tilde{\bf c}_{0}(w),\ldots,\tilde{\bf c}_{a\bar{s}+b-1}(w),\tilde{\bf c}_{a\bar{s}+b}^{(0)}(w),\ldots,\tilde{\bf c}_{a\bar{s}+b}^{(\bar{s}-1)}(w),\ldots,\tilde{\bf c}_{\bar{n}-1}(w)).
\end{align}	
It follows that for a given $w\in[u-v]$, the vectors $\tilde{\bf c}_{a\bar{s}+b}^{(z)}(w)$ for $z\in[\bar{s}]$ can be recovered by downloading $\tilde{\bf c}_{j}(w)$ from each helper nodes $j\in{\mathcal H}$, since  $\bar{d}=\bar{k}+\bar{s}-1$. With the values of $\tilde{\bf c}_{a\bar{s}+b}^{(z)}(w)$ for $z\in[\bar{s}]$ determined, the node $\bar{\bf c}_{a\bar{s}+b}(w)$ can be recovered. 
Consequently, both the number of symbols downloaded and accessed during the repair of node $\bar{\bf c}_{a\bar{s}+b}(w)$ equal $\bar{d}\bar{s}^{\tilde{n}-1}$, achieving the optimal access according to Remark \ref{rm1}.
\end{proof}
\begin{remark}\label{rm3}
If $w\in[u-v,u-1]$, the equation \eqref{eq3-22} becomes
\begin{align}\label{eq3-22-2}
	\sum_{z\in[\bar{s}]}\widetilde{\mathbf{H}}_{a\bar{s}+b}^{(z)}\tilde{\bf c}_{a\bar{s}+b}^{(z)}(w)+\sum_{i\in[\bar{n}]\setminus\{a\bar{s}+b\}}\widetilde{\mathbf{H}}_{i}\tilde{\bf c}_{i}(w)=0,
\end{align}
where $\widetilde{\mathbf{H}}_{a\bar{s}+b}^{(z)}=(\mathbf{R}_{a,b}\otimes \mathbf{I}_{\bar{r}-1})\bar{\mathbf{H}}_{a\bar{s}+b}\mathbf{R}_{a,z}^{\top},  \tilde{\bf c}_{a\bar{s}+b}^{(z)}(w)=\mathbf{R}_{a,z}\bar{\bf c}_{a\bar{s}+b}(w)$ for $z\in[\bar{s}]$, and $\widetilde{\mathbf{H}}_{i}=(\mathbf{R}_{a,b}\otimes \mathbf{I}_{\bar{r}-1})\bar{\mathbf{H}}_i\mathbf{R}_{a,b}^{\top}, \tilde{\bf c}_{i}(w)=\mathbf{R}_{a,b}\bar{\bf c}_i(w)$ for $i\in[\bar{n}]\setminus\{a\bar{s}+b\}$. Similarly, \eqref{eq3-22-2} defines an $(\bar{n}+\bar{s}-1,\bar{k}+\bar{s},\bar{l})$ MDS array code
\begin{align}\label{eq3-22-3}
	(\tilde{\bf c}_{0}(w),\ldots,\tilde{\bf c}_{a\bar{s}+b-1}(w),\tilde{\bf c}_{a\bar{s}+b}^{(0)}(w),\ldots,\tilde{\bf c}_{a\bar{s}+b}^{(\bar{s}-1)}(w),\ldots,\tilde{\bf c}_{\bar{n}-1}(w)),
\end{align}	
where $w\in[u-v,u-1]$.

It follows that for a given $w\in[u-v,u-1]$, the vectors $\tilde{\bf c}_{a\bar{s}+b}^{(z)}(w)$ for $z\in[\bar{s}]$ can be recovered by downloading $\tilde{\bf c}_{j}(w)$ from each helper nodes $j\in\mathcal{H}'$ with $\mathcal{H}\subseteq\mathcal{H}'$ and $|\mathcal{H}'|=\bar{d}+1$, since  $\bar{d}=\bar{k}+\bar{s}-1$, $|\mathcal{H}|=\bar{d}$. With the values of $\tilde{\bf c}_{a\bar{s}+b}^{(z)}(w)$ for $z\in[\bar{s}]$ determined, the node $\bar{\bf c}_{a\bar{s}+b}(w)$ can be recovered. 
Consequently, both the number of symbols downloaded during the repair of node $\bar{\bf c}_{a\bar{s}+b}(w)$ equal $(\bar{d}+1)\bar{s}^{\tilde{n}-1}$.
\end{remark}

From \eqref{eq3-21-1}, we know that for $w\in[h]$, $h\in[1,u-v]$ and $i\in{\mathcal H}$,
\begin{align}\label{eq3-22-4}
\nonumber\tilde{\bf c}_{i}(w)=&\mathbf{R}_{a,b}\bar{\bf c}_i(w)\\
\nonumber=&\mathbf{R}_{a,b}\varPsi_{\tilde{n},a}({\rm diag}(\boldsymbol{\lambda}_{i\bar{s}+[\bar{s}]}^w)\sum\limits_{g\in[u]}\theta^{gw}{\bf c}_{iu+g}\\
\nonumber=&\sum_{z\in[\bar{s}]}\mathbf{R}_{a,b}\varPsi_{\tilde{n},a}({\rm diag}(\boldsymbol{\lambda}_{i\bar{s}+[\bar{s}]}^w))\mathbf{R}_{a,z}^{\top}\mathbf{R}_{a,z}\sum\limits_{g\in[u]}\theta^{gw}{\bf c}_{iu+g}\\
=&\mathbf{R}_{a,b}\varPsi_{\tilde{n},a}({\rm diag}(\boldsymbol{\lambda}_{i\bar{s}+[\bar{s}]}^w))\mathbf{R}_{a,b}^{\top}\sum\limits_{g\in[u]}\theta^{gw}\mathbf{R}_{a,b}{\bf c}_{iu+g}.
\end{align}
Here, the second equation is from \eqref{eq3-16-1} and the third equation holds due to $\sum_{z\in[\bar{s}]}\mathbf{R}_{a,z}^{\top}\mathbf{R}_{a,z} = {\bf I}_{l}$. From  equation \eqref{eq3-22-4}, by downloading $\{\sum_{g\in[u]}\theta^{gw}\mathbf{R}_{a,b}{\bf c}_{iu+g}: w\in[h]\}$ from each helper rack $i\in {\mathcal H}$, we can obtain the data $\{\tilde{\bf c}_{i}(w): w\in[h],i\in\mathcal{H}\} $. Since \eqref{eq3-22-1} is an $(\bar{n}+\bar{s}-1,\bar{k}+\bar{s}-1,\bar{l})$ MDS array code, $\{\bar{\bf c}_{a\bar{s}+b}(w): w\in[h]\}$ can be recovered from the data $\{\tilde{\bf c}_{i}(w)=\mathbf{R}_{a,b}\bar{\bf c}_i(w): w\in[h], i\in{\mathcal H}\}$.
During this process, the repair bandwidth is $\bar{d}hl/\bar{s}$, while the total number of symbols accessed is $\bar{d}ul/\bar{s}$. 

Below, we recover the $h$ failed nodes within host rack $a\bar{s}+b$ of code $\mathcal C$ using the data $\{\bar{\bf c}_{a\bar{s}+b}(w)$: $w\in[h]\}$, along with local helper nodes. Note that from \eqref{eq3-16-1}, 
\begin{align*}\bar{\bf c}_{a\bar{s}+b}(w)=\varPsi_{\tilde{n},a}({\rm diag}(\boldsymbol{\lambda}_{(a\bar{s}+b)\bar{s}+[\bar{s}]}^w))\sum\limits_{g\in[u]}\theta^{gw}{\bf c}_{(a\bar{s}+b)u+g},
\end{align*}
and matrix $\varPsi_{\tilde{n},a}({\rm diag}(\boldsymbol{\lambda}_{(a\bar{s}+b)\bar{s}+[\bar{s}]}^w))$ is invertible, we have
\begin{align}\label{eq3-23}
	\sum\limits_{g\in[u]}\theta^{gw}{\bf c}_{(a\bar{s}+b)u+g}=\varPsi_{\tilde{n},a}^{-1}({\rm diag}(\boldsymbol{\lambda}_{(a\bar{s}+b)\bar{s}+[\bar{s}]}^w))\bar{\bf c}_{a\bar{s}+b}(w),
\end{align}
where $w\in[h]$. For each $w\in[h]$, setting ${\bm \eta}_w=\varPsi_{\tilde{n},a}^{-1}({\rm diag}(\boldsymbol{\lambda}_{(a\bar{s}+b)\bar{s}+[\bar{s}]}^w))\bar{\bf c}_{a\bar{s}+b}(w)$, then \eqref{eq3-23} can be rewritten as 
\begin{align}\label{eq3-24}
	\sum\limits_{g\in[u]}\theta^{gw}{\bf c}_{(a\bar{s}+b)u+g}={\bm \eta}_w.
\end{align}
Let ${\mathcal J}=[u]\setminus{\mathcal I}$. For each $w\in[h]$, from \eqref{eq3-24} we have
\begin{align}\label{3-25}
	\sum\limits_{g\in{\mathcal I}}\theta^{gw}c_{(a\bar{s}+b)u+g,j}={ \eta}_{w,j}-\sum\limits_{g\in{\mathcal J}}\theta^{gw}c_{(a\bar{s}+b)u+g,j},
\end{align}
where $j\in[l]$ and ${\bm \eta}_{w}=({ \eta}_{w,0},{ \eta}_{w,1},\ldots,{ \eta}_{w,l-1})$.
When $w$ runs over $\{0,1,\ldots,h-1\}$, \eqref{3-25} can be written in matrix form as follows:
\begin{equation} \label{eq3-25} 
	{\left[\begin{array}{llll}
			1 & 1 & \cdots &  1\\
			\theta^{g_0} & \theta^{g_1} & \cdots &  \theta^{g_{h-1}}\\
			\ \ \vdots &  \ \ \vdots & \ddots &  \ \  \vdots\\
			\theta^{g_0(h-1)} & \theta^{g_1(h-1)} & \cdots &  \theta^{g_{h-1}(h-1)}\\
		\end{array}
		\right]} {\left[\begin{array}{l}  c_{(a\bar{s}+b)u+g_0,j} \\
			c_{(a\bar{s}+b)u+g_1,j} \\ \ \ \ \ \ \vdots \\
			c_{(a\bar{s}+b)u+g_{b-1},j}
		\end{array}
		\right]}={\left[\begin{array}{llll}
			{ \eta}_{0,j}-\sum\limits_{g\in{\mathcal J}}c_{(a\bar{s}+b)u+g,j} \\ { \eta}_{2,j}-\sum\limits_{g\in{\mathcal J}}\theta^{g}c_{(a\bar{s}+b)u+g,j}
			\\ \ \ \indent \indent  \ \vdots\\ { \eta}_{h-1,j}-\sum\limits_{g\in{\mathcal J}}\theta^{(h-1)g}c_{(a\bar{s}+b)u+g,j} \\
		\end{array}
		\right]},
\end{equation}
where $j\in[l]$. Since $\theta^{g_i}\neq
\theta^{g_{i'}}$ for $g_i,g_{i'}\in[u]$ with $g_i\neq g_{i'}$, the
Vandermonde matrix on the left side of the linear system (\ref{eq3-25})
is invertible, then we can recover $c_{(a\bar{s}+b)u+g_0,j},
c_{(a\bar{s}+b)u+g_1,j}, \ldots, c_{(a\bar{s}+b)u+g_{h-1},j},
j\in[l]$, i.e.,
the $h$ failed nodes within  rack $a\bar{s}+b$. Since the intra-rack repair bandwidth is not counted, the repair bandwidth required for recovering the $h$ failed nodes is $\bar{d}h{l}/{\bar{s}}$, which meets the lower bound \eqref{eq1-1} on repair bandwidth. Meanwhile, the amount of data accessed from the $\bar{d}$ helper racks is $\bar{d}ul/{\bar{s}}$, which achieves the lower bound \eqref{eq1-2}  when  $h=u-v$.

For $h\in[u-v+1,u]$,  the proof follows similarly to the case of $h\in[1,u-v]$: For each $w\in[u-v]$, by downloading $\tilde{c}_{i}(w)$ from each helper rack $i\in{\mathcal H}$ with $|{\mathcal H}|=\bar{d}$, we can recover $(\tilde{\bf c}_{a\bar{s}+b}^{(z)}(w):z\in[\bar{s}])=\bar{\bf c}_{a\bar{s}+b}(w)$; for each $w\in[u-v,h-1]$, according to Remark \ref{rm3}, by downloading $\tilde{\bf c}_{i}(w)$ from each helper rack $i\in{\mathcal H}'={\mathcal H}\cup\{i'\}$ with $i'\in[\bar{n}]\setminus({\mathcal H}\cup{\mathcal F})$, we can recover $(\tilde{\bf c}_{a\bar{s}+b}^{(z)}(w):z\in[\bar{s}])=\bar{\bf c}_{a\bar{s}+b}(w)$. Similar to the proof above, we can recover 
the $h$ failed nodes within rack $a\bar{s}+b$. Based on the preceding analysis, the repair bandwidth during the repair process is $\bar{d}hl/{\bar{s}}+(h-u+v)l/\bar{s}$, while the amount of data accessed from the helper racks is $(\bar{d}+1)ul/{\bar{s}}$.
Note that $v<u$, we have $h-u+v<h$, then
\begin{align*}
	\bar{d}h\frac{l}{\bar{s}}+(h-u+v)\frac{l}{\bar{s}}<(\bar{d}+1)h\frac{l}{\bar{s}}.
\end{align*}
In this case, the ratio of the number of downloaded symbols to the optimal repair bandwidth, as given in \eqref{eq1-1}, is less than $ 1 + 1/\bar{d} $. Therefore, the repair bandwidth of the code approaches optimal level when $ \bar{d} $ is sufficiently large and $h>u-v $. 

\section{The proof of Theorem \ref{th2}}
\label{Sect4}

In this section, we present our second family of rack-aware MSR codes, derived by modifying the first construction. This modification involves a careful selection of the elements in $\boldsymbol{\lambda}_{[\bar{n}\bar{s}]}$.
We assume that $\bar{s}+1$ is a divisor of $\bar{n}$ and let the sub-packetization as $l=\bar{s}^{\tilde{n}}$, where $\tilde{n}=\bar{n}/(\bar{s}+1)$. The nodes $({\bf c}_0,{\bf c}_1,\ldots,{\bf c}_{n-1})$ are partitioned into $\bar{n}$ racks, each of size $u$. Here, we utilize $a\in[\tilde{n}]$ to denote the group index and $b\in[\bar{s}+1]$ to denote the rack index within its group. Specifically, the $b$-th rack in the $a$-th group (i.e., rack $a(\bar{s}+1)+b$) comprises $u$ nodes
$({\bf c}_{(a(\bar{s}+1)+b)u}, {\bf c}_{(a(\bar{s}+1)+b)u+1}, \ldots, {\bf c}_{(a(\bar{s}+1)+b)u+u-1})$.

\subsection{Construction}
\label{subsect-scheme-1}

Let $\bar{s}=\bar{d}-\bar{k}+1$ be a positive integer, $\theta$ be an
element of $\mathbb{F}_q$ with multiplicative order $u|(q-1)$, and ${\mathcal Q}=\{\xi^{\alpha}:\alpha\in[(q-1)/u]\}$ be a proper subset of $\mathbb{F}_q$. Assume that $\{\lambda_{a\bar{s}(\bar{s}+1)+b\bar{s}+j}:a\in[\tilde{n}],b\in[\bar{s}+1],j\in[\bar{s}]\}=\boldsymbol{\lambda}_{[\bar{n}\bar{s}]}$ are $\bar{n}\bar{s}$ distinct elements in $\mathbb{F}_q$ such that $\{\theta^g\lambda_{a\bar{s}(\bar{s}+1)+b\bar{s}+j}:a\in[\tilde{n}],b\in[\bar{s}+1],j\in[\bar{s}],g\in[u]\}=\theta^{[u]}\boldsymbol{\lambda}_{[\bar{n}\bar{s}]}$ are $n\bar{s}$ distinct elements in $\mathbb{F}_q$. Consider an array code $\mathcal{C}$ defined by the following parity check equations over $\mathbb{F}_q$:
\begin{equation}\label{eq4-1}
	\mathcal{C}=\{({\bf c}_0,{\bf c}_1,\ldots,{\bf c}_{n-1}):\sum_{a\in[\tilde{n}]}\sum_{b\in[\bar{s}+1]}\sum_{g\in[u]}\mathbf{H}_{au(\bar{s}+1)+bu+g}{\bf c}_{au(\bar{s}+1)+bu+g}=0\},
\end{equation}
\noindent where 
\begin{align}\label{eq4-2}
	\mathbf{H}_{au(\bar{s}+1)+bu+g}=\varPsi_{\tilde{n},a}(\varphi_b^{(r)}(\theta^g\boldsymbol{\lambda}_{a\bar{s}(\bar{s}+1)+b\bar{s}+[\bar{s}]})), a\in[\tilde{n}], b\in[\bar{s}+1], g\in[u].
\end{align}

For a given $a\in[\tilde{n}]$, a non-empty subset ${\mathcal B}=\{b_0,b_1,\ldots,b_{t-1}\}\subseteq [\bar{s}+1]$  with $b_0<b_1<\ldots<b_{t-1}$, and a set ${\mathcal G}_{\mathcal B}=\{{\mathcal G}_{b_0},{\mathcal G}_{b_1},\ldots,{\mathcal G}_{b_{t-1}}\}$ with each $
\mathcal{G}_{b_j}\not=\emptyset$ and $|{\mathcal G}_{\mathcal B}|=\delta$, we define the $\bar{s}m\times\bar{s}\delta$ matrix $\boldsymbol{\phi}_{a,{\mathcal B}}^{(m)}=\boldsymbol{\varphi}_{\mathcal B}^{(m)}(\theta^{{\mathcal G}_{\mathcal B}}\odot\boldsymbol{\lambda}_{a\bar{s}(\bar{s}+1)+{\mathcal B}\bar{s}+[\bar{s}]})$ and the $ml\times \delta l$ matrix $\mathbf{M}_{a,{\mathcal B}}^{(m)}=\boldsymbol{\varPsi}_{\tilde{n},a}(\boldsymbol{\varphi}_{\mathcal B}^{(m)}(\theta^{{\mathcal G}_{\mathcal B}}\odot\boldsymbol{\lambda}_{a\bar{s}(\bar{s}+1)+{\mathcal B}\bar{s}+[\bar{s}]}))$ in the same manner as \eqref{eq3-3} and \eqref{eq3-4}, respectively. 

To guarantee the MDS property of the code $\mathcal C$, it is necessary to satisfy two conditions. First, the $\bar{n}\bar{s}$ distinct elements $\boldsymbol{\lambda}_{[\bar{n}\bar{s}]}$ in $\mathbb{F}_q$ must be chosen such that $\theta^{[u]}\boldsymbol{\lambda}_{[\bar{n}\bar{s}]}$ yields $n\bar{s}$ distinct elements in $\mathbb{F}_q$. Second, the following local constraints must hold:
\begin{align}\label{eq4-3}
	\textit{det}\left(\mathbf{M}_{a,{\mathcal B}}^{(\delta)}\right) \neq 0,  a\in[\tilde{n}],  \emptyset \neq {\mathcal B} \subseteq [\bar{s}+1],  \emptyset\not={\mathcal G}_{b_j} \subseteq [u],  b_j\in\mathcal{B}.
\end{align}
The existence of such $\lambda_i$s in the field of size $O_{\bar{s}}(n)$ is guaranteed by the following two lemmas.
\begin{lemma}\label{lm16}
	If $q\geq n\bar{s}+u\Omega(\bar{s}+1,u)$, where function $\Omega$ is defined in \eqref{eq3-6-1-1},
	then we can find $\bar{n}\bar{s}$ distinct
	elements $\boldsymbol{\lambda}_{[\bar{n}\bar{s}]}$ from $\mathbb{F}_q$ such that $\theta^{[u]}\boldsymbol{\lambda}_{[\bar{n}\bar{s}]}$ are $n\bar{s}$ distinct elements, and satisfy  \eqref{eq4-3}. And
	the time complexity to choose these elements is $O_{\bar{s}}(n)$. To be specific, for each $a\in[\tilde{n}]$, we can find the $\bar{s}^2+\bar{s}$ elements $\boldsymbol{\lambda}_{a\bar{s}(\bar{s}+1)+[\bar{s}(\bar{s}+1)]}$ in any subset ${\mathcal S}\subseteq {\mathcal Q}$ of size $|{\mathcal S}|\geq \bar{s}(\bar{s}+1)+\Omega(\bar{s}+1,u)$ such that $\mathbf{M}_{a,{\mathcal B}}^{(\delta)}$ are invertible for all ${\mathcal B}\subseteq[\bar{s}+1]$.
\end{lemma}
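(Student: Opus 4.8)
The plan is to mirror the counting argument used to establish Lemma \ref{lm6}, simply substituting $\bar s+1$ for $\bar s$ in the relevant places, since Lemma \ref{lm16} is the analogue of Lemma \ref{lm6} for the construction of Theorem \ref{th2} where each group contains $\bar s+1$ racks rather than $\bar s$. First I would observe that, by Lemma \ref{lm2}, the local constraint \eqref{eq4-3} on the blown-up matrices $\boldsymbol{\varPsi}_{\tilde n,a}(\boldsymbol{\varphi}_{\mathcal B}(\theta^{{\mathcal G}_{\mathcal B}}\odot\boldsymbol{\lambda}_{a\bar s(\bar s+1)+{\mathcal B}\bar s+[\bar s]}))$ is equivalent to the nonvanishing of $\det(\boldsymbol{\varphi}_{\mathcal B}(\theta^{{\mathcal G}_{\mathcal B}}\odot\boldsymbol{\lambda}_{a\bar s(\bar s+1)+{\mathcal B}\bar s+[\bar s]}))$ for every non-empty ${\mathcal B}\subseteq[\bar s+1]$ and every choice of ${\mathcal G}_{b_j}\subseteq[u]$. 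So it suffices to select, independently for each $a\in[\tilde n]$, a block of $\bar s(\bar s+1)$ distinct elements $\boldsymbol{\lambda}_{a\bar s(\bar s+1)+[\bar s(\bar s+1)]}$ from the fixed coset-representative set ${\mathcal Q}$ so that all these square submatrix determinants are nonzero; since the blocks for distinct $a$ are handled independently, the global count of field elements needed is just the per-block requirement times $\tilde n$ plus the $n\bar s$ elements used to guarantee that $\theta^{[u]}\boldsymbol{\lambda}_{[\bar n\bar s]}$ are all distinct.

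Next I would set up the greedy/inductive selection within a single block. Fix $a$ and suppose $\boldsymbol{\lambda}_{a\bar s(\bar s+1)},\ldots$ have been chosen up to some point; each determinant $\det(\boldsymbol{\varphi}_{\mathcal B}(\theta^{{\mathcal G}_{\mathcal B}}\odot\boldsymbol{\lambda}_{\cdots}))$, viewed as a polynomial in the not-yet-chosen $\lambda_i$'s appearing in it, is not identically zero — this is the key algebraic fact, which follows because one can specialize the remaining variables (e.g. to a geometric progression, as in the proof of Lemma \ref{lm7}/Lemma 7 of \cite{LWHY}) to get a nonzero value. I then count how many values of the next $\lambda_i$ are forbidden: this is bounded by the sum over all relevant $({\mathcal B},{\mathcal G}_{\mathcal B})$ of the degree of the corresponding determinant polynomial in that single variable. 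Summing these degree bounds over all non-empty ${\mathcal B}\subseteq[\bar s+1]$ with $|{\mathcal B}|=t$, all $\delta$ with $t\le\delta\le ut$, all ways of distributing $\delta$ among $t$ parts, etc., yields exactly the expression $\Omega(\bar s+1,u)$ from \eqref{eq3-6-1-1} with $\bar s$ replaced by $\bar s+1$ — the factor $\tbinom{\bar s}{t-1}$ there counting the $t$-subsets of $[\bar s+1]$ containing a distinguished index, $\tbinom{ut}{\delta}$ counting the ${\mathcal G}$-configurations, $\tbinom{\delta-1}{t-1}$ the compositions, and the $(\delta-t+1)(\delta+t-2)/2$ factor the per-variable degree of a generalized-Vandermonde-type determinant of the size arising here. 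Hence a pool ${\mathcal S}\subseteq{\mathcal Q}$ of size at least $\bar s(\bar s+1)+\Omega(\bar s+1,u)$ always leaves a legal choice, proving the "to be specific" clause; summing $\bar s(\bar s+1)\tilde n=\bar n\bar s$ used plus $\Omega(\bar s+1,u)$ slack per group, together with the $n\bar s$ distinctness requirement on $\theta^{[u]}\boldsymbol{\lambda}_{[\bar n\bar s]}$, gives the stated bound $q\ge n\bar s+u\Omega(\bar s+1,u)$ (the extra factor $u$ absorbing the cosets).

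The main obstacle, and the only place real work is needed, is the degree bookkeeping: one must verify that the per-variable degree of $\det(\boldsymbol{\varphi}_{\mathcal B}^{(\delta)}(\theta^{{\mathcal G}_{\mathcal B}}\odot\boldsymbol{\lambda}))$ in any single $\lambda_i$ is bounded by $(\delta-t+1)(\delta+t-2)/2$ and that the combinatorial sum over all $({\mathcal B},{\mathcal G}_{\mathcal B})$ collapses to $\Omega(\bar s+1,u)$, so that the constant matches \eqref{eq3-6-1-1} exactly. Since the kernel maps $\varphi_b^{(m)}$ and the operator $\boxtimes$ here are identical to those in \cite{LWHY} and in the proof of Lemma \ref{lm6} — only the range of $b$ has grown from $[\bar s]$ to $[\bar s+1]$ — this computation is routine and I would carry it out exactly as in Appendix \ref{app-A}, merely tracking the $\bar s\mapsto\bar s+1$ substitution; the time-complexity claim $O_{\bar s}(n)$ is then immediate since there are $\tilde n=O(n)$ blocks and each block requires testing a bounded (in $\bar s$, $u$) number of determinants. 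I would therefore state Lemma \ref{lm16} as a corollary of the argument in Appendix \ref{app-A} with the parameter substitution made explicit, rather than repeating the full proof.
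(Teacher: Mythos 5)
Your proposal is correct and follows essentially the same route as the paper, which itself omits the proof of this lemma and points to the argument for Lemma \ref{lm6} in Appendix \ref{app-A}: form the product polynomial of all determinants $\det(\boldsymbol{\varphi}_{\mathcal B}(\theta^{{\mathcal G}_{\mathcal B}}\odot\cdot))$ over non-empty ${\mathcal B}\subseteq[\bar s+1]$, bound its per-variable degree by $\Omega(\bar s+1,u)$ via the same homogeneity and monomial-counting analysis, and invoke the selection lemma (Lemma \ref{lmA-3}) block by block. The only cosmetic difference is that you phrase the selection as a one-variable-at-a-time greedy step rather than applying Lemma \ref{lmA-3} to the whole block at once; the content is the same.
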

\begin{lemma}\label{lm17}
	Let $\mathbb{F}_q=\mathbb{F}_p[\xi]$, where $p$ is a prime
	and $\xi$ is a primitive element of $\mathbb{F}_q$. If $q\geq max\{n\bar{s}, p^{\bar{s}^5u^2}\}$, 
	then the $\bar{n}\bar{s}$ elements $\lambda_i=\xi^i$, $i\in[\bar{n}\bar{s}]$ satisfy the conditions that $\theta^{[u]}\boldsymbol{\lambda}_{[\bar{n}\bar{s}]}$ are $n\bar{s}$ distinct elements, and that the local constraints \eqref{eq4-3} hold.	
\end{lemma}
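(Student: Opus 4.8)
The statement is the explicit–field counterpart of Lemma~\ref{lm7} transplanted to the code of Theorem~\ref{th2}, and I would prove it by mirroring the proof of Lemma~\ref{lm7} (Appendix~\ref{app-A}, following Lemma~7 of \cite{LWHY}); the changes are cosmetic, namely that a group now holds $\bar{s}+1$ racks and hence $\bar{s}(\bar{s}+1)$ elements $\boldsymbol{\lambda}$, that ${\mathcal B}$ ranges over subsets of $[\bar{s}+1]$, and that $\Omega(\bar{s}+1,u)$ plays the role of $\Omega(\bar{s},u)$. The distinctness of $\theta^{[u]}\boldsymbol{\lambda}_{[\bar{n}\bar{s}]}$ needs only $q\ge n\bar{s}$: since the order-$u$ subgroup of $\mathbb{F}_q^{*}$ is $\langle\xi^{(q-1)/u}\rangle$ we may take $\theta=\xi^{(q-1)/u}$, and an equality $\theta^{g}\xi^{i}=\theta^{g'}\xi^{i'}$ with $g,g'\in[u]$ and $i,i'\in[\bar{n}\bar{s}]$ forces $(g-g')(q-1)/u\equiv i'-i\pmod{q-1}$, impossible unless $(g,i)=(g',i')$ because $|i'-i|<\bar{n}\bar{s}\le (q-1)/u$.

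For the local constraints \eqref{eq4-3}, Lemma~\ref{lm2} lets us work instead with $\det\boldsymbol{\varphi}_{\mathcal B}(\theta^{{\mathcal G}_{\mathcal B}}\odot\boldsymbol{\lambda}_{a\bar{s}(\bar{s}+1)+{\mathcal B}\bar{s}+[\bar{s}]})$, a matrix of order $\bar{s}\delta$ with $\delta=|{\mathcal G}_{\mathcal B}|\le u(\bar{s}+1)$ whose entries involve \emph{only} the $\bar{s}(\bar{s}+1)$ elements $\lambda_{a\bar{s}(\bar{s}+1)},\dots,\lambda_{a\bar{s}(\bar{s}+1)+\bar{s}(\bar{s}+1)-1}$ of group~$a$. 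Write $g(\Lambda_{0},\dots,\Lambda_{\bar{s}(\bar{s}+1)-1})$ for the polynomial obtained by making these elements indeterminate. The argument used to prove Lemma~\ref{lm16} shows $g\not\equiv0$, and since $L^{(\delta)}(\zeta w)={\rm diag}(1,\zeta,\dots,\zeta^{\delta-1})L^{(\delta)}(w)$, scaling all $\Lambda_{i}$ by $\zeta$ left–multiplies each of the $\bar{s}$ block–rows of $\boldsymbol{\varphi}_{\mathcal B}$ by ${\rm diag}(1,\zeta,\dots,\zeta^{\delta-1})$, giving $g(\zeta\boldsymbol{\Lambda})=\zeta^{\bar{s}\binom{\delta}{2}}g(\boldsymbol{\Lambda})$; thus $g$ is homogeneous, so it vanishes at $(\xi^{a\bar{s}(\bar{s}+1)+j})_{j}$ iff it vanishes at $(1,\xi,\dots,\xi^{\bar{s}(\bar{s}+1)-1})$, and only $a=0$ need be treated.

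Finally one evaluates $g$ at $\Lambda_{i}=\xi^{i}$ with $\theta=\xi^{(q-1)/u}$. Following \cite{LWHY}, the key facts are that $g$ has a \emph{unique} monomial $\prod_{i}\Lambda_{i}^{e_{i}}$ of maximal weight $\sum_{i}ie_{i}$ whose coefficient is a unit modulo $p$, and that this maximal weight, by a direct count over the $\bar{s}\delta\times\bar{s}\delta$ matrix ($\delta\le u(\bar{s}+1)$, $\bar{s}(\bar{s}+1)$ variables), is at most $\bar{s}^{5}u^{2}$, which under $q\ge p^{\bar{s}^{5}u^{2}}$ is $<[\mathbb{F}_q:\mathbb{F}_p]=\log_{p}q$; the exponent contributions coming from $\theta=\xi^{(q-1)/u}$ sort the terms into widely–separated, pairwise–disjoint $\theta$–blocks of length $(q-1)/u$, so inside the top occupied block the weight–maximal term survives in the prime–field basis $\{1,\xi,\dots,\xi^{\log_{p}q-1}\}$ and $g$ does not vanish at the point, which settles \eqref{eq4-3}. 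The step I expect to be the real obstacle is precisely the first of those facts — $g\not\equiv0$ over $\mathbb{F}_q$ together with the uniquely weight–maximal monomial carrying a coefficient coprime to $p$, and the attendant non–cancellation of powers of $\xi$ — this being the combinatorial/linear–algebraic heart of Lemma~7 in \cite{LWHY} (and of Lemma~\ref{lm7}), the only place where the kernel shape of $\varphi_{b}^{(t)}$ in \eqref{eq2}--\eqref{eq2-1-1} is used; the enlargement from $\bar{s}$ to $\bar{s}+1$ racks per group affects only bookkeeping constants.
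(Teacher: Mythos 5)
Your proposal is correct and follows essentially the same route as the paper, which in fact omits the proof of this lemma entirely, stating only that it ``follows a similar approach'' to Lemma~\ref{lm7} in Appendix~\ref{app-A} --- i.e., exactly the template you reproduce (homogeneity of the local determinant polynomial to reduce to the single evaluation point $(1,\xi,\dots)$, a weight/degree count bounded by $\bar{s}^{5}u^{2}<[\mathbb{F}_q:\mathbb{F}_p]$, and the fact that $\xi$ annihilates no nonzero $\mathbb{F}_p$-polynomial of degree below its own degree). The only substantive difference is cosmetic bookkeeping for one extra rack per group, which you correctly identify; your treatment of $\theta$ as $\xi^{(q-1)/u}$ contributing to exponents is a slight variant of the paper's device of absorbing $\theta^{g}$ into the coefficients, but both rest on the same non-cancellation argument inherited from Lemma~7 of \cite{LWHY}.
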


We omit the proofs of Lemma \ref{lm16} and Lemma \ref{lm17} since they follow a similar approach to those of Lemma \ref{lm6} and Lemma \ref{lm7}, respectively.

Similar to the derivation method of \eqref{eq3-17}, by \eqref{eq4-1} and \eqref{eq4-2}, it is not difficult to obtain the intermediate code $\bar{\mathcal{C}}(w)$ defined by the following parity check form: 
\begin{align}\label{eq4-5}
	\bar{\mathcal{C}}(w)=\{(\bar{\bf c}_0(w),\bar{\bf c}_1(w),\ldots,\bar{\bf c}_{\bar{n}-1}(w)):\sum_{a\in[\tilde{n}]}\sum_{b\in[\bar{s}+1]}\bar{\mathbf{H}}_{a(\bar{s}+1)+b}\bar{\bf c}_{a(\bar{s}+1)+b}(w)=0\}, 
\end{align}
\noindent where 
\begin{align}\label{eq4-6}
		\bar{\mathbf{H}}_{a(\bar{s}+1)+b}=\begin{cases}
		\varPsi_{\tilde{n},a}(\varphi_b^{(\bar{r})}(\boldsymbol{\lambda}_{a\bar{s}(\bar{s}+1)+b\bar{s}+[\bar{s}]}^u)) \ \text{if} \ w\in[u-v];\\
		\varPsi_{\tilde{n},a}(\varphi_b^{(\bar{r}-1)}(\boldsymbol{\lambda}_{a\bar{s}(\bar{s}+1)+b\bar{s}+[\bar{s}]}^u)) \ \text{if} \ w\in[u-v,u-1],
	\end{cases}
\end{align}
and 
\begin{align}\label{eq4-7}
	\bar{\bf c}_{a(\bar{s}+1)+b}(w)=\varPsi_{\tilde{n},a}({\rm diag}(\boldsymbol{\lambda}_{a\bar{s}(\bar{s}+1)+b\bar{s}+[\bar{s}]}^w))\sum\limits_{g\in[u]}\theta^{gw}{\bf c}_{au(\bar{s}+1)+bu+g}.
\end{align}

For a given $a\in[\tilde{n}]$, a non-empty subset ${\mathcal B}=\{b_0,b_1,\ldots,b_{t-1}\}\subseteq [\bar{s}+1]$, we define the $\bar{s}m\times\bar{s}t$ matrix $\bar{\boldsymbol{\phi}}_{a,{\mathcal B}}^{(m)}=\boldsymbol{\varphi}_{\mathcal B}^{(m)}(\boldsymbol{\lambda}_{a\bar{s}(\bar{s}+1)+{\mathcal B}\bar{s}+[\bar{s}]}^u)$ and the $ml\times tl$ matrix $\bar{\mathbf{M}}_{a,{\mathcal B}}^{(m)}=\boldsymbol{\varPsi}_{\tilde{n},a}(\boldsymbol{\varphi}_{\mathcal B}^{(m)}(\boldsymbol{\lambda}_{a\bar{s}(\bar{s}+1)+{\mathcal B}\bar{s}+[\bar{s}]}^u))$ in the same manner as \eqref{eq3-17-1} and \eqref{eq3-17-2}, respectively. 

To guarantee the MDS property of the code $\bar{\mathcal{C}}(w)$, we require not only that the $\bar{n}\bar{s}$ distinct elements $\boldsymbol{\lambda}_{[\bar{n}\bar{s}]}$ in $\mathbb{F}_q$ satisfy the condition that $\boldsymbol{\lambda}_{[\bar{n}\bar{s}]}^u$ are also $\bar{n}\bar{s}$ distinct elements in $\mathbb{F}_q$, but also that 
\begin{align}\label{eq4-8}
	\text{det}(\bar{\mathbf{M}}_{a,{\mathcal B}}^{(\delta)})\not=0, a\in[\tilde{n}], \emptyset\not={\mathcal B}\subseteq [\bar{s}].
\end{align}
The existence of such $\lambda_i$s in the field of size $O_{\bar{s}}(n)$ is guaranteed by the following two lemmas.
\begin{lemma}\label{lm19}
	Assuming $q\geq n\bar{s}+u\bar{s}2^{\bar{s}-1} $, we can select $\bar{n}\bar{s}$ distinct elements $ \boldsymbol{\lambda}_{[\bar{n}\bar{s}]}$ from $\mathbb{F}_q$ such that $\boldsymbol{\lambda}_{[\bar{n}\bar{s}]}^u $ are also $\bar{n}\bar{s}$ distinct elements, and conditions \eqref{eq4-8} hold. And
	the time complexity to choose these elements is $O_{\bar{s}}(n)$. To be specific, for each $a\in[\tilde{n}]$, we can find the $\bar{s}^2+\bar{s}$ elements $\boldsymbol{\lambda}_{a\bar{s}(\bar{s}+1)+[\bar{s}(\bar{s}+1)]}$ in any subset ${\mathcal S}\subseteq \mathcal Q$ of size $|{\mathcal S}|\geq \bar{s}(\bar{s}+1)+\bar{s}2^{\bar{s}-1}$ such that  $\bar{\mathbf{M}}_{a,{\mathcal B}}^{(\delta)}$ are invertible for all ${\mathcal B}\subseteq[\bar{s}+1]$.
\end{lemma}
\begin{remark}
	It is not hard to see that if $q\geq n\bar{s}+u(\Omega(\bar{s}+1,u)+\bar{s}2^{\bar{s}-1})$,  we can select $\bar{n}\bar{s}$ distinct elements $ \boldsymbol{\lambda}_{[\bar{n}\bar{s}]}$ from $\mathbb{F}_q$ such that Lemmas \ref{lm16} and \ref{lm19} hold simultaneously.
\end{remark}
\begin{lemma}\label{lm20}
	Let $\mathbb{F}_q = \mathbb{F}_p[\xi]$, where $p $ is a prime and $\xi$ is a primitive element of $\mathbb{F}_q$. If $ q \geq \max\{n\bar{s}, p^{\bar{s}^5}\} $, then the $\bar{n}\bar{s}$ elements $\lambda_i = \xi^i$, $i \in [\bar{n}\bar{s}] $, satisfy the conditions that $\boldsymbol{\lambda}_{[\bar{n}\bar{s}]}^u$ are $\bar{n}\bar{s} $ distinct elements and that the local constraints \eqref{eq4-8} hold.
\end{lemma}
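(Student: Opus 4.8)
The plan is to reproduce the proof of Lemma~\ref{lm11}, equivalently of Lemma~7 in \cite{LWHY}, essentially unchanged. The only structural difference is that the kernel-map blocks $\boldsymbol{\varphi}_{\mathcal B}^{(m)}$ are here indexed by racks inside a group of size $\bar s+1$ instead of $\bar s$, which only enlarges the square matrices under consideration (from shape $\bar s|{\mathcal B}|\times\bar s|{\mathcal B}|$ with $|{\mathcal B}|\le\bar s$ to the same shape with $|{\mathcal B}|\le\bar s+1$) and changes nothing in the reasoning. Two assertions must be verified: that $\theta^{[u]}\boldsymbol{\lambda}_{[\bar n\bar s]}$ are $n\bar s$ distinct elements (so in particular $\boldsymbol{\lambda}_{[\bar n\bar s]}^u$ are $\bar n\bar s$ distinct), and that the local constraints \eqref{eq4-8} hold.

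The distinctness is a short number-theoretic check. With $\lambda_i=\xi^i$ and $\xi$ of order $q-1$, one has $\lambda_i^u=\lambda_{i'}^u$ iff $u(i-i')\equiv 0\pmod{q-1}$, and since $u\mid(q-1)$ this is equivalent to $(q-1)/u\mid(i-i')$; as $0\le i,i'<\bar n\bar s$ this forces $i=i'$ whenever $(q-1)/u\ge\bar n\bar s$, i.e.\ whenever $q\ge n\bar s+1$, and the same bound makes $\theta^{[u]}\boldsymbol{\lambda}_{[\bar n\bar s]}$ have $n\bar s$ distinct entries. This is implied by $q\ge n\bar s$ in the hypothesis.

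For the local constraints I would proceed in three steps. First, apply Lemma~\ref{lm2} to reduce the invertibility of $\boldsymbol{\varPsi}_{\tilde n,a}(\boldsymbol{\varphi}_{\mathcal B}(\boldsymbol{\lambda}_{a\bar s(\bar s+1)+{\mathcal B}\bar s+[\bar s]}^u))$ to that of the much smaller square matrix $\boldsymbol{\varphi}_{\mathcal B}(\boldsymbol{\lambda}_{a\bar s(\bar s+1)+{\mathcal B}\bar s+[\bar s]}^u)$, for all $a\in[\tilde n]$ and all nonempty ${\mathcal B}\subseteq[\bar s+1]$. Second, reduce to the group $a=0$: the nodes of group $a$ are $\lambda_{a\bar s(\bar s+1)+b\bar s+j}^u=c^{a}\,\zeta^{\,b\bar s+j}$ with $c=\xi^{u\bar s(\bar s+1)}$ and $\zeta=\xi^{u}$, and the identity $L^{(t)}(cx)=\mathrm{diag}(1,c,\dots,c^{t-1})L^{(t)}(x)$ gives $\boldsymbol{\varphi}_{\mathcal B}(c\cdot\mathbf y)=({\bf I}_{\bar s}\otimes\mathrm{diag}(1,c,\dots,c^{t-1}))\boldsymbol{\varphi}_{\mathcal B}(\mathbf y)$, so passing from group $0$ to group $a$ multiplies the determinant by a nonzero power of $c$; it therefore suffices to show that $\det\boldsymbol{\varphi}_{\mathcal B}$, evaluated at the consecutive powers $\zeta^{0},\zeta^{1},\dots,\zeta^{\bar s(\bar s+1)-1}$, is nonzero. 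Third, run the polynomial argument of \cite{LWHY}: view $\det\boldsymbol{\varphi}_{\mathcal B}$ as a polynomial over $\mathbb F_p$ in one formal variable per node; since each block of $\varphi_b^{(t)}$ is a signed confluent-Vandermonde column $\pm L^{(t)}(\cdot)$ and the maps have the near-triangular form \eqref{eq2} and \eqref{eq2-1-1}, this polynomial is not identically zero, and after substituting the $j$-th node variable by $\zeta^{j}$ it becomes a nonzero univariate polynomial in $\zeta$ whose degree is bounded by a function of $\bar s$ alone, of order $\bar s^{5}$ (the product of the matrix size $O(\bar s^2)$, the per-entry node-degree $O(\bar s)$, and the index range $O(\bar s^2)$ of the substitution). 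By the standard argument, a primitive element $\xi$ of a field with $q\ge p^{\bar s^{5}}$ makes $\zeta=\xi^{u}$ avoid the zero set of this polynomial, so $\det\boldsymbol{\varphi}_{\mathcal B}\neq 0$; by Lemma~\ref{lm2} the corresponding $\boldsymbol{\varPsi}_{\tilde n,a}(\boldsymbol{\varphi}_{\mathcal B}(\cdot))$ is then invertible for every $a$ and ${\mathcal B}$, which is exactly \eqref{eq4-8}. Only finitely many pairs $(a,{\mathcal B})$ occur, so a single $\xi$ works for all of them.

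I expect the main obstacle to be the third step, which is inherited from \cite{LWHY}: one must certify simultaneously that the structured determinant is a nonzero polynomial over $\mathbb F_p$ (a combinatorial analysis of the kernel-map matrices, for instance by exhibiting a surviving leading monomial with coefficient $\pm1$) and that it stays nonzero under the explicit assignment $\lambda_i=\xi^{i}$; the quantitative requirement $q\ge p^{\bar s^{5}}$ is precisely the budget this second point consumes. Everything else is routine bookkeeping, so the write-up should be short, pointing to the estimates in the proof of Lemma~7 in \cite{LWHY} and noting that grouping racks in blocks of size $\bar s+1$ only affects constants.
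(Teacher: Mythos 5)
Your proposal is correct and follows essentially the same route as the paper, which omits the proof of this lemma by reference to Lemma \ref{lm11} and ultimately to the Appendix~\ref{app-A} template for Lemma \ref{lm7}: reduce via Lemma \ref{lm2}, use homogeneity to pass between groups, substitute $x_i\mapsto \zeta^i$ to get a univariate polynomial whose nonvanishing is certified by a unique extremal monomial, bound its degree by $O(\bar{s}^5)$, and invoke the primitive-element argument together with $q\ge n\bar{s}$ for distinctness. The obstacle you flag (showing the substituted polynomial is not identically zero) is exactly what the paper's Appendix~\ref{app-A} argument resolves by isolating the unique highest- and lowest-degree terms.
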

We omit the proofs of Lemmas \ref{lm19} and \ref{lm20}, as they follow similarly to those of Lemmas \ref{lm10} and \ref{lm11}, respectively.
\begin{remark}
	It is not hard to see that if $q\geq max\{n\bar{s}, p^{\bar{s}^5u^2}\}$, both Lemma \ref{lm17} and Lemma \ref{lm20} hold simultaneously.
\end{remark}

\subsection{MDS property}
\label{subsect-scheme-3} 

The MDS property of the code $\mathcal C$, as defined by \eqref{eq4-1}, \eqref{eq4-2}, and \eqref{eq4-3}, is a direct consequence of the following global constraints.
\begin{lemma}\label{lm18}
	Let  $a_0,a_1,\ldots,a_{z-1}$ be any  $z$ distinct integers in $[\tilde{n}]$, and let  ${\mathcal B}_0,{\mathcal B}_1,\ldots,{\mathcal B}_{z-1}$ be any $z$ non-empty subsets of 
	$[\bar{s}+1]$ such that $|{\mathcal G}_{{\mathcal B}_0}|+|{\mathcal G}_{{\mathcal B}_1}|+\ldots+|{\mathcal G}_{{\mathcal B}_{z-1}}|=m\leq r$. Then, we obtain
	\begin{align}\label{eq4-4}
		{\rm det}([\mathbf{M}_{a_0,\mathcal{B}_0}^{(m)} \ \mathbf{M}_{a_1,\mathcal{B}_1}^{(m)}	\ldots \mathbf{M}_{a_{z-1},\mathcal{B}_{z-1}}^{(m)}])\not=0.
	\end{align}
\end{lemma}
The proof of Lemma \ref{lm18} is omitted as it closely parallels that of Lemma \ref{lm8}.

Using Lemmas \ref{lm3}, \ref{lm19} and \ref{lm20}, we can easily derive the following result in a manner similar to the proof of Lemma \ref{lm12}.
\begin{lemma}\label{lm21}
	Let  $a_0,a_1,\ldots,a_{z-1}$ be any  $z$ distinct integers in $[\tilde{n}]$, and let  ${\mathcal B}_0,{\mathcal B}_1,\ldots,{\mathcal B}_{z-1}$ be any $z$ non-empty subsets of 
	$[\bar{s}+1]$ such that $|{\mathcal B}_0|+|{\mathcal B}_1|+\ldots+|{\mathcal B}_{z-1}|=m\leq \bar{r}$. Then, we obtain
	\begin{align}\label{eq4-9}
		{\rm det}([\mathbf{M}_{a_0,\mathcal{B}_0}^{(m)} \ \mathbf{M}_{a_1,\mathcal{B}_1}^{(m)}	\ldots \mathbf{M}_{a_{z-1},\mathcal{B}_{z-1}}^{(m)}])\not=0,
	\end{align}
	where $\bar{\mathbf{M}}_{a_i,\mathcal{B}_i}^{(m)}=\boldsymbol{\varPsi}_{\tilde{n},a_i}(\boldsymbol{\varphi}_{{\mathcal B}_i}^{(m)}(\boldsymbol{\lambda}_{a_i\bar{s}(\bar{s}+1)+{\mathcal B}_i\bar{s}+[\bar{s}]}^u))$ for $i\in[z]$.
\end{lemma}

Substituting $m=\bar{r}$ and $m=\bar{r}-1$ into Lemma \ref{lm21}, we respectively obtain  ${\rm det}([\mathbf{M}_{a_0,\mathcal{B}_0}^{(\bar{r})} \ \mathbf{M}_{a_1,\mathcal{B}_1}^{(\bar{r})}	\ldots \mathbf{M}_{a_{z-1},\mathcal{B}_{z-1}}^{(\bar{r})}])\not=0$ 
and  ${\rm det}([\mathbf{M}_{a_0,\mathcal{B}_0}^{(\bar{r}-1)} \ \mathbf{M}_{a_1,\mathcal{B}_1}^{(\bar{r}-1)}	\ldots \mathbf{M}_{a_{z-1},\mathcal{B}_{z-1}}^{(\bar{r}-1)}])\not=0$, which correspond to their respective global constraints. Then, the following rusult is obtained.
\begin{lemma}\label{lm22}
	The code $\bar{\mathcal{C}}(w)$,  defined in \eqref{eq4-5}, is an $(\bar{n},\bar{k},l)$ MDS array code for $w\in[u-v]$. For $w\in[u-v, u-1]$, the code $\bar{\mathcal{C}}(w)$ in \eqref{eq4-5} forms an $(\bar{n},\bar{k}+1,l)$ MDS array code.
\end{lemma}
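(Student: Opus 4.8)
The plan is to deduce Lemma~\ref{lm22} directly from the global constraints implied by Lemma~\ref{lm21}, in exact analogy with how Lemma~\ref{lm12-1} was obtained from Lemma~\ref{lm12} in the proof of Theorem~\ref{th1}. First I would recall that, for a fixed $w$, the code $\bar{\mathcal C}(w)$ in \eqref{eq4-5} is an $(\bar n,\bar k,l)$ (resp.\ an $(\bar n,\bar k+1,l)$) MDS array code precisely when any $\bar r=\bar n-\bar k$ (resp.\ $\bar r-1=\bar n-(\bar k+1)$) of its coordinates can be reconstructed from the remaining ones. Writing the parity-check relation of $\bar{\mathcal C}(w)$ with failed-coordinate set $\mathcal F=\{i_1,\dots,i_{\bar r}\}$ as $\sum_{i\in\mathcal F}\bar{\mathbf H}_i\bar{\bf c}_i(w)=-\sum_{i\notin\mathcal F}\bar{\mathbf H}_i\bar{\bf c}_i(w)$, this recovery is possible iff $[\bar{\mathbf H}_{i_1}\ \bar{\mathbf H}_{i_2}\ \cdots\ \bar{\mathbf H}_{i_{\bar r}}]$ is invertible for every such $\mathcal F\subseteq[\bar n]$ (and the analogous square matrix with $\bar r-1$ column blocks in the second case).

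Next I would translate an arbitrary failure pattern into the block-matrix format of Lemma~\ref{lm21}. Each coordinate index $i\in[\bar n]$ of $\bar{\mathcal C}(w)$ is uniquely $i=a(\bar s+1)+b$ with $a\in[\tilde n]$ and $b\in[\bar s+1]$; grouping the failed coordinates by their group index $a$, the set $\mathcal F$ determines $z$ distinct group indices $a_0,\dots,a_{z-1}$ and, for each $a_j$, a non-empty subset $\mathcal B_j\subseteq[\bar s+1]$ of rack indices, with $|\mathcal B_0|+\cdots+|\mathcal B_{z-1}|=|\mathcal F|$ (distinct coordinates in the same group have distinct $b$'s). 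For $w\in[u-v]$ we have $\bar{\mathbf H}_{a(\bar s+1)+b}=\varPsi_{\tilde n,a}(\varphi_b^{(\bar r)}(\boldsymbol\lambda_{a\bar s(\bar s+1)+b\bar s+[\bar s]}^u))$ by \eqref{eq4-6}, so after a column permutation the matrix $[\bar{\mathbf H}_{i_1}\ \cdots\ \bar{\mathbf H}_{i_{\bar r}}]$ regroups exactly into $[\mathbf M_0^{(\bar r)}\ \cdots\ \mathbf M_{z-1}^{(\bar r)}]$ with $\mathbf M_j^{(\bar r)}=\boldsymbol\varPsi_{\tilde n,a_j}(\boldsymbol\varphi_{\mathcal B_j}^{(\bar r)}(\boldsymbol\lambda_{a_j\bar s(\bar s+1)+\mathcal B_j\bar s+[\bar s]}^u))$ in the sense of \eqref{eq501}--\eqref{eq502} (all $\mathcal G_{b}$ trivial, $\theta$ absent). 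Since $|\mathcal F|=\bar r$, Lemma~\ref{lm21} applies with $m=\bar r\le\bar r$ and gives $\det([\mathbf M_0^{(\bar r)}\ \cdots\ \mathbf M_{z-1}^{(\bar r)}])\neq0$, which is the first global constraint and hence the $(\bar n,\bar k,l)$ MDS property. For $w\in[u-v,u-1]$ the submatrices use $\varphi_b^{(\bar r-1)}$, the dimension is $\bar k+1$ so $|\mathcal F|=\bar r-1$, and the same reasoning with $m=\bar r-1\le\bar r$ in Lemma~\ref{lm21} yields the second global constraint and the $(\bar n,\bar k+1,l)$ MDS property. Throughout, the standing hypotheses used by Lemma~\ref{lm21} — that $\boldsymbol\lambda_{[\bar n\bar s]}^u$ are $\bar n\bar s$ distinct elements and that the local constraints \eqref{eq4-8} hold — are furnished by Lemma~\ref{lm19} (or Lemma~\ref{lm20}).

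I do not expect a genuine obstacle here: the entire analytic content is already inside Lemma~\ref{lm21}, which is the $(\bar s+1)$-analogue of Lemma~\ref{lm12} and is proved, as in Lemma~\ref{lm8}, by recursively triangularizing the block matrix via Lemma~\ref{lm3}. The only points demanding care are bookkeeping: checking that the decomposition of an arbitrary failure pattern $\mathcal F$ into the data $(a_j,\mathcal B_j)_{j\in[z]}$ is precisely the input of Lemma~\ref{lm21}; verifying the column-block reordering that identifies $[\bar{\mathbf H}_{i_1}\ \cdots\ \bar{\mathbf H}_{i_{\bar r}}]$ with $[\mathbf M_0^{(\bar r)}\ \cdots\ \mathbf M_{z-1}^{(\bar r)}]$ up to a permutation of columns (which does not change invertibility); and confirming that the row count $\bar r$ (resp.\ $\bar r-1$) in \eqref{eq4-6} matches the redundancy $\bar n-\bar k$ (resp.\ $\bar n-(\bar k+1)$), so the relevant matrix is square. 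If a fully self-contained argument were preferred, one could instead replay the induction on $z$ from the proof of Lemma~\ref{lm8} verbatim, replacing $[\bar s]$ by $[\bar s+1]$ and $\theta^{\mathcal G}\odot\boldsymbol\lambda$ by $\boldsymbol\lambda^u$; but invoking Lemma~\ref{lm21} is the most economical route.
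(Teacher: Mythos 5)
Your proposal is correct and follows exactly the paper's route: the paper obtains Lemma \ref{lm22} by substituting $m=\bar r$ and $m=\bar r-1$ into Lemma \ref{lm21} to get the two global (invertibility) constraints, just as Lemma \ref{lm12-1} was derived from Lemma \ref{lm12}. Your additional bookkeeping — decomposing a failure set $\mathcal F$ into the data $(a_j,\mathcal B_j)$, the column reordering, and the square-matrix dimension check — is exactly what the paper leaves implicit.
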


\subsection{Repair scheme of $(n,k,\bar{s}^{\bar{n}/(\bar{s}+1)})$ rack-aware MSR codes}
\label{subsect-scheme-4} 
 
The discussion is separated into two cases.

\begin{itemize}
	\item  Repairing $h$ failed nodes $({\bf c}_{(a(\bar{s}+1)+b)u+g_0},\ldots,{\bf c}_{(a(\bar{s}+1)+b)u+g_{h-1}})$ in host rack $a(\bar{s}+1)+b$, where $a\in[\tilde{n}]$ and $b\in[\bar{s}]$. Let ${\mathcal H}\subseteq[\bar{n}]\setminus\{a(\bar{s}+1)+b\}$ be the index set of the helper racks with size $|{\mathcal H}|=\bar{d}$.
	 When $h\in[1,u-v]$, these $h$ failed nodes can be recovered from $\{\mathbf{R}_{a,b}\bar{\bf c}_{i}(w):i\in{\mathcal H},w\in[h]\}$ and all surviving nodes in  host rack $a(\bar{s}+1)+b$.	
	 when $h\in[u-v+1,u]$, the $h$ failed nodes  can be recovered from $\{\mathbf{R}_{a,b}\bar{\bf c}_{i}(w):i\in{\mathcal H},w\in[u-v]\}$, $\{\mathbf{R}_{a,b}\bar{\bf c}_{i}(w):i\in{\mathcal H}', \mathcal{H}\subseteq {\mathcal H}', |{\mathcal H}'|=\bar{d}+1,w\in[u-v,h-1]\}$ and all surviving nodes in  host rack $a(\bar{s}+1)+b$.	
	\item  Repairing $h$ failed nodes $({\bf c}_{(a(\bar{s}+1)+\bar{s})u+g_0},\ldots,{\bf c}_{(a(\bar{s}+1)+\bar{s})u+g_{h-1}})$ in host rack $a(\bar{s}+1)+\bar{s}$, where $a\in[\tilde{n}]$. Let ${\mathcal H}\subseteq[\bar{n}]\setminus\{a(\bar{s}+1)+\bar{s}\}$ be the index set of the helper racks with size $|{\mathcal H}|=\bar{d}$. When $h\in[1,u-v]$, we set
	\begin{align}\label{eq4-10-1}
	{\mathcal H}_1={\mathcal H}\cap(a(\bar{s}+1)+[\bar{s}+1])
	\end{align}
	and set 
	\begin{align}\label{eq4-10-2}
		{\mathcal H}_2={\mathcal H}\setminus(a(\bar{s}+1)+[\bar{s}+1]).
	\end{align}
	 These $h$ failed nodes can be recovered from $\{\mathbf{R}_{a,b}\bar{\bf c}_{a(\bar{s}+1)+b}(w):a(\bar{s}+1)+b\in{\mathcal H}_1,w\in[h]\}$, $\{\mathbf{R}_{a,\bar{s}}\bar{\bf c}_{i}(w):i\in{\mathcal H}_2,w\in[h]\}$ and all surviving nodes in  host rack $a(\bar{s}+1)+\bar{s}$. When $h\in[u-v+1,u]$, $\{\bar{\bf c}_{a(\bar{s}+1)+\bar{s}}(w):w\in[u-v]\}$ can be recovered from $\{\mathbf{R}_{a,b}\bar{\bf c}_{a(\bar{s}+1)+b}(w):a(\bar{s}+1)+b\in{\mathcal H}_1,w\in[u-v]\}$ and $\{\mathbf{R}_{a,\bar{s}}\bar{\bf c}_{i}(w):i\in{\mathcal H}_2,:w\in[u-v]\}$.  Let ${\mathcal H}'\subseteq[\bar{n}]\setminus\{a(\bar{s}+1)+\bar{s}\}$ be the index set of the helper racks with size $|{\mathcal H}'|=\bar{d}+1$ and $\mathcal{H}\subseteq {\mathcal H}'$, 
	\begin{align}\label{eq4-10-3}
		{\mathcal H}_1'={\mathcal H}'\cap(a(\bar{s}+1)+[\bar{s}+1])
	\end{align}
	and
	\begin{align}\label{eq4-10-4}
		{\mathcal H}_2'={\mathcal H}'\setminus(a(\bar{s}+1)+[\bar{s}+1]).
	\end{align}
    $\{\bar{\bf c}_{(a\bar{s}+1)+\bar{s}}(w):w\in[u-v,h-1]\}$ can be recovered from $\{\mathbf{R}_{a,b}\bar{\bf c}_{a(\bar{s}+1)+b}(w):a(\bar{s}+1)+b\in{\mathcal H}_1',:w\in[u-v,h-1]\}$ and $\{\mathbf{R}_{a,\bar{s}}\bar{\bf c}_i(w):i\in{\mathcal H}_2',:w\in[u-v,h-1]\}$.   
    So, these $h$ failed nodes in host rack $a(\bar{s}+1)+\bar{s}$ can be recovered from $\{\bar{\bf c}_{a(\bar{s}+1)+\bar{s}}(w):w\in[h]\}$ and all surviving nodes in  host rack $a(\bar{s}+1)+b$.

\end{itemize}

\begin{lemma}\label{lm23}
	Following the notations introduced above.  For each $w\in[u-v]$, the repair bandwidth of code $\bar{\mathcal{C}}(w)$  is $\bar{d}l/\bar{s}$ for  single-node repair. On the other hand, for each $w\in[u-v,u-1]$, the repair bandwidth of code $\bar{\mathcal{C}}(w)$ is  $(\bar{d}+1)l/\bar{s}$ for  single-node repair.
\end{lemma}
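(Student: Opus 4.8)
The plan is to follow the proof of Lemma~\ref{lm13} almost verbatim, the only new feature being that a group now contains $\bar{s}+1$ racks, so the failed node of $\bar{\mathcal{C}}(w)$ sits in a rack of one of two types: within-group index $b\in[\bar{s}]$ (governed by the kernel $\varphi_b$ of \eqref{eq2} and, at the repair side, by $\mathbf{R}_{a,b}$ through Lemma~\ref{lm4}) or $b=\bar{s}$ (governed by the diagonal kernel $\varphi_{\bar{s}}$ of \eqref{eq2-1-1} and by the all-ones matrix $\mathbf{R}_{a,\bar{s}}$ through Lemma~\ref{lm5}). I will carry out the first assertion ($w\in[u-v]$, bandwidth $\bar{d}l/\bar{s}$) in full; the second ($w\in[u-v,u-1]$) is obtained by replacing $\bar{r}$ with $\bar{r}-1$ throughout, which raises the dimension of $\bar{\mathcal{C}}(w)$, and of the reduced code below, by one.

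Assume node $\bar{\bf c}_{a(\bar{s}+1)+b}(w)$ has failed. If $b\in[\bar{s}]$, I would left-multiply the parity-check relation \eqref{eq4-5} of $\bar{\mathcal{C}}(w)$ by $\mathbf{R}_{a,b}\otimes\mathbf{I}_{\bar{r}}$, insert $\sum_{z\in[\bar{s}]}\mathbf{R}_{a,z}^{\top}\mathbf{R}_{a,z}=\mathbf{I}_l$ before each coordinate, and simplify via Lemma~\ref{lm4}: part~(1) splits the host term into $\bar{s}$ pieces indexed by $z\in[\bar{s}]$, with coefficients $\pm(\mathbf{I}_{\bar{l}}\otimes L^{(\bar{r})}(\cdot))$ acting on $\tilde{\bf c}^{(z)}_{a(\bar{s}+1)+b}(w)=\mathbf{R}_{a,z}\bar{\bf c}_{a(\bar{s}+1)+b}(w)$; parts~(2) and~(3) then annihilate every $z\ne b$ term — part~(2) handling the remaining $\bar{s}$ racks $f\in[\bar{s}+1]\setminus\{b\}$ of the host group (including $f=\bar{s}$, as it allows any $h\in[\bar{s}+1]\setminus\{b\}$) and part~(3) the racks of the other $\tilde{n}-1$ groups — leaving a single coordinate $\tilde{\bf c}_i(w)=\mathbf{R}_{a,b}\bar{\bf c}_i(w)$ per surviving rack $i$. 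Counting coordinates gives an array code of length $\bar{s}+\bar{s}+(\tilde{n}-1)(\bar{s}+1)=\bar{n}+\bar{s}-1$, sub-packetization $\bar{l}=l/\bar{s}$, and $\bar{r}$ parities per block-row, hence of dimension $\bar{k}+\bar{s}-1$. If instead $b=\bar{s}$, the same computation with $\mathbf{R}_{a,\bar{s}}$ replacing $\mathbf{R}_{a,b}$ and Lemma~\ref{lm5} replacing Lemma~\ref{lm4} gives a reduced code of the same length: now part~(1) of Lemma~\ref{lm5} yields $\bar{s}$ invertible diagonal blocks $\mathbf{I}_{\bar{l}}\otimes L^{(\bar{r})}(\cdot)$, part~(2) retains $\mathbf{R}_{a,f}\bar{\bf c}_{a(\bar{s}+1)+f}(w)$ for $f\in[\bar{s}]$, and part~(3) retains $\mathbf{R}_{a,\bar{s}}\bar{\bf c}_i(w)$ for every rack of the remaining groups, which matches the split $\mathcal{H}=\mathcal{H}_1\cup\mathcal{H}_2$ of \eqref{eq4-10-1}--\eqref{eq4-10-2}.

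I would then show that this reduced code is an $(\bar{n}+\bar{s}-1,\bar{k}+\bar{s}-1,\bar{l})$ MDS array code (an $(\bar{n}+\bar{s}-1,\bar{k}+\bar{s},\bar{l})$ one when $w\in[u-v,u-1]$). Its host-group submatrices — the $\bar{s}$ copies coming from the failed rack together with the $\bar{s}$ copies from the other racks of group $a$ — are block-diagonal with a common diagonal entry, and the elements $\lambda_i^{u}$ appearing in them are disjoint from those appearing in the submatrices $\bar{\varPsi}_{\tilde{n},\bar{e}}(\varphi_f^{(\bar{r})}(\cdot))$ of the other $\tilde{n}-1$ groups; the latter are exactly the parity submatrices of the length-$(\bar{n}-\bar{s}-1)$ instance of the construction \eqref{eq4-5} built from $\boldsymbol{\lambda}^u$. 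Hence the MDS argument of Lemma~\ref{lm21}/\ref{lm22}, combined with Lemma~\ref{lm3}(3) applied with $u=1$ to peel off the diagonal part, carries over unchanged. Finally, because $\bar{d}=\bar{k}+\bar{s}-1$ equals the dimension of the reduced code, its $\bar{s}$ host coordinates $\tilde{\bf c}^{(z)}(w)$, $z\in[\bar{s}]$, are recovered from the $\bar{d}$ coordinates $\tilde{\bf c}_i(w)$, $i\in\mathcal{H}$, each of length $\bar{l}=l/\bar{s}$, so the repair bandwidth is $\bar{d}l/\bar{s}$; when $w\in[u-v,u-1]$ the reduced code has dimension $\bar{k}+\bar{s}$, so $\bar{d}+1$ helper coordinates are needed and the bandwidth is $(\bar{d}+1)l/\bar{s}$.

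The main obstacle is the last rack of each group. One must verify that using $\mathbf{R}_{a,\bar{s}}$ together with the diagonal kernel $\varphi_{\bar{s}}$ still produces, after the reduction, a \emph{full-dimensional} MDS array code — i.e.\ that Lemma~\ref{lm5}(1) genuinely delivers $\bar{s}$ independent invertible diagonal blocks rather than causing a rank drop — and that the non-uniform download pattern, with $\mathcal{H}_1$ contributing $\mathbf{R}_{a,b}$-type symbols and $\mathcal{H}_2$ contributing $\mathbf{R}_{a,\bar{s}}$-type symbols (and likewise $\mathcal{H}_1',\mathcal{H}_2'$ of \eqref{eq4-10-3}--\eqref{eq4-10-4} in the $w\in[u-v,u-1]$ case), still lets the host reconstruct all $\bar{s}$ of its pieces. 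Everything else is a routine transcription of the single-node repair analysis in Lemma~\ref{lm13}.
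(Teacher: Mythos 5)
Your proposal is correct and follows essentially the same route as the paper: the case $b\in[\bar{s}]$ is reduced to the argument of Lemma~\ref{lm13} via $\mathbf{R}_{a,b}$ and Lemma~\ref{lm4}, while the case $b=\bar{s}$ is handled with $\mathbf{R}_{a,\bar{s}}$ and Lemma~\ref{lm5}, yielding the same $(\bar{n}+\bar{s}-1,\bar{k}+\bar{s}-1,\bar{l})$ (resp.\ $(\bar{n}+\bar{s}-1,\bar{k}+\bar{s},\bar{l})$) reduced MDS array code and the stated bandwidth counts. The ``obstacle'' you flag at the end is resolved exactly as you suspect: Lemma~\ref{lm5}(1) produces $\bar{s}$ diagonal blocks with distinct $\lambda^u$-values disjoint from those of the other groups, so the MDS argument of Lemma~\ref{lm21} together with Lemma~\ref{lm3}(3) applies unchanged.
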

\begin{proof}
	We only need to show the first conclusion , and the second conclusion  can be proved similarly. Note that in code $\bar{\mathcal{C}}(w)$, the index of a node corresponds to the index of a rack in code $\mathcal{C}$.	For any fixed $w\in[u-v]$, the procedure to repair the first $\bar{s}$ nodes of each group is exactly the same as the repair procedure of $(\bar{n},\bar{k},\bar{s}^{\bar{n}/\bar{s}})$ MSR code described in the proof of Lemma \ref{lm13}, therefore we omit the detail.
	Assume that node $\bar{\bf c}_{a(\bar{s}+1)+\bar{s}}(w)$, where $a\in[\tilde{n}]$ and $w\in[u-v]$, is failed, and ${\mathcal H}\subseteq[\bar{n}]\setminus\{a(\bar{s}+1)+\bar{s}\}$ is the index set of the helper nodes of size $\bar{d}$. Recall that the parity check equation of the code $ \bar{\mathcal C}(w)$ is 
	\begin{align}\label{eq4-10}
		\bar{\mathbf{H}}_0\bar{\bf c}_0(w)+\bar{\mathbf{H}}_1\bar{\bf c}_1(w)+\ldots+\bar{\mathbf{H}}_{\bar{n}-1}\bar{\bf c}_{\bar{n}-1}(w)=0,
	\end{align}
	where  $w\in[u-v]$ and $\bar{\mathbf{H}}_i,i\in[\bar{n}]$ is defined in \eqref{eq4-6}. In the above equation, each $\bar{\mathbf{H}}_i$ is composed of $l$ block rows, and each block row includes $\bar{r}$ rows. This implies that there are $l$ sets of parity check equations, with each set containing $\bar{r}$ equations. These $l$ sets of parity check equations are divided into $\bar{l}=l/\bar{s}$ groups, each of $\bar{s}$ sets of parity check equations. The set of parity check equations indices $i$ with the same group only differ in the $a$-th digit of their expansion in base $\bar{s}$. To repair $\bar{\mathbf{c}}_{a(\bar{s}+1)+\bar{s}}(w)$, we sum up each group of $\bar{s}$ block rows of all matrices in \eqref{eq4-10}. More  precisely, we have  
	\begin{align}\label{eq4-11}
		\nonumber	(\mathbf{R}_{a,\bar{s}}\otimes \mathbf{I}_{\bar{r}})\sum_{i\in[\bar{n}]}\bar{\mathbf{H}}_i\bar{\bf c}_i(w) =&\sum_{i\in[\bar{n}]}(\mathbf{R}_{a,\bar{s}}\otimes \mathbf{I}_{\bar{r}})\bar{\mathbf{H}}_i(\sum_{z\in[\bar{s}]}\mathbf{R}_{a,z}^{\top}\mathbf{R}_{a,z})\bar{\bf c}_i(w)\\
		\nonumber	=&\sum_{i\in[\bar{n}]}\sum_{z\in[\bar{s}]}[(\mathbf{R}_{a,\bar{s}}\otimes \mathbf{I}_{\bar{r}})\bar{\mathbf{H}}_i\mathbf{R}_{a,z}^{\top}](\mathbf{R}_{a,z}\bar{\bf c}_i(w))\\
		\nonumber	=&\sum_{z\in[\bar{s}]}[(\mathbf{R}_{a,\bar{s}}\otimes \mathbf{I}_{\bar{r}})\bar{\mathbf{H}}_{a(\bar{s}+1)+\bar{s}}\mathbf{R}_{a,z}^{\top}](\mathbf{R}_{a,z}\bar{\bf c}_{a(\bar{s}+1)+\bar{s}}(w))\\
	\nonumber	&+\sum_{b\in[\bar{s}]}[(\mathbf{R}_{a,\bar{s}}\otimes \mathbf{I}_{\bar{r}})\bar{\mathbf{H}}_{a(\bar{s}+1)+b}\mathbf{R}_{a,b}^{\top}](\mathbf{R}_{a,b}\bar{\bf c}_{a(\bar{s}+1)+b}(w))\\
		&+\sum_{i\in[\bar{n}]\setminus(a(\bar{s}+1)+[\bar{s}+1])}[(\mathbf{R}_{a,\bar{s}}\otimes \mathbf{I}_{\bar{r}})\bar{\mathbf{H}}_i\mathbf{R}_{a,0}^{\top}](\mathbf{R}_{a,\bar{s}}\bar{\bf c}_i(w))=0
	\end{align}
	from \eqref{eq1}, \eqref{eq2}, \eqref{eq2-1-1}, \eqref{eq4-6}
	and Lemma \ref{lm5}. For each $z\in[\bar{s}]$,  let
	\begin{align*}
		&\widetilde{\mathbf{H}}_{a(\bar{s}+1)+\bar{s}}^{(z)}=(\mathbf{R}_{a,\bar{s}}\otimes \mathbf{I}_{\bar{r}})\bar{\mathbf{H}}_{a(\bar{s}+1)+\bar{s}}\mathbf{R}_{a,z}^{\top}, \\ &\tilde{\bf c}_{a(\bar{s}+1)+\bar{s}}^{(z)}(w)=\mathbf{R}_{a,z}\bar{\bf c}_{a(\bar{s}+1)+\bar{s}}(w).
	\end{align*}
	 For each $b\in[\bar{s}]$, let
	\begin{align}\label{eq4-11-1}
		\nonumber	&\widetilde{\mathbf{H}}_{a(\bar{s}+1)+b}=(\mathbf{R}_{a,\bar{s}}\otimes \mathbf{I}_{\bar{r}})\bar{\mathbf{H}}_{a(\bar{s}+1)+b}\mathbf{R}_{a,b}^{\top}, \\ &\tilde{\bf c}_{a(\bar{s}+1)+b}(w)=\mathbf{R}_{a,b}\bar{\bf c}_{a(\bar{s}+1)+b}(w).
	\end{align}
	For each $i\in[\bar{n}]\setminus(a(\bar{s}+1)+[\bar{s}+1])$, let
	\begin{align}\label{eq4-11-2}
    \nonumber	\widetilde{\mathbf{H}}_{i}=(\mathbf{R}_{a,\bar{s}}\otimes \mathbf{I}_{\bar{r}})\bar{\mathbf{H}}_i\mathbf{R}_{a,0}^{\top}, \\
	\tilde{\bf c}_i(w)=\mathbf{R}_{a,\bar{s}}\bar{\bf c}_i(w).
	\end{align}
	So, the equation \eqref{eq4-11} can be expressed as 
	\begin{align}\label{eq4-12}
		\sum_{z\in[\bar{s}]}\widetilde{\mathbf{H}}_{a(\bar{s}+1)+\bar{s}}^{(z)}\tilde{\bf c}_{a(\bar{s}+1)+\bar{s}}^{(z)}(w)+\sum_{i\in[\bar{n}]\setminus\{a(\bar{s}+1)+\bar{s}\}}\widetilde{\mathbf{H}}_{i}\tilde{\bf c}_{i}(w)=0.
	\end{align}
	
	According to Lemma \ref{lm5}, the $\bar{s}+\bar{n}-1$ matrices $\widetilde{\mathbf{H}}_{a(\bar{s}+1)+\bar{s}}^{(z)}$ where $z\in[\bar{s}]$, and $\widetilde{\mathbf{H}}_{i}$ where $i\in[\bar{n}]\setminus\{a(\bar{s}+1)+\bar{s}\}$, are block matrices of size $\bar{l}\times\bar{l}$. Each entry in these matrices is a column vector of length  $\bar{r}$. Furthermore, $\tilde{\bf c}_{a(\bar{s}+1)+\bar{s}}^{(z)}(w)$ for $z\in[\bar{s}]$ and $\tilde{\bf c}_{i}(w)$ for $i\in[\bar{n}]\setminus\{a(\bar{s}+1)+\bar{s}\}$ are column vectors of length $\bar{l}$. Specifically, we obtain\\
	 For $z\in[\bar{s}]$,
	\begin{align*}
		\widetilde{\mathbf{H}}_{a(\bar{s}+1)+\bar{s}}^{(z)}=
			{\bf I}_{\bar{l}}\otimes L^{(\bar{r})}(\lambda_{(a\bar{s}(\bar{s}+1)+\bar{s}^2+z}^u);
	\end{align*}
   for $b\in[\bar{s}]$
	\begin{align*}
		\widetilde{\mathbf{H}}_{a(\bar{s}+1)+b}={\bf I}_{\bar{l}}\otimes L^{(\bar{r})}(\lambda_{(a\bar{s}(\bar{s}+1)+b\bar{s}+b}^u);
	\end{align*}	
   and for $e\in[\tilde{n}]\setminus\{a\}$, $f\in[\bar{s}+1]$,
	\begin{align*}
		\widetilde{\mathbf{H}}_{e(\bar{s}+1)+f}=\bar{\varPsi}_{\tilde{n},\bar{e}}(\varphi_f^{(\bar{r})}(\boldsymbol{\lambda}_{e\bar{s}(\bar{s}+1)+f\bar{s}+[\bar{s}]}^u))
	\end{align*}	
	where $\bar{e}$ is defined as in Lemma \ref{lm5}.
	
	We can observe that matrices $\widetilde{\mathbf{H}}_{e(\bar{s}+1)+f}$ where $e\in[\tilde{n}]\setminus\{a\}$ and $f\in[\bar{s}+1]$, are precisely the $\bar{n}-\bar{s}-1$ parity check submatrices that would appear in the MSR code construction with code length $\bar{n}-\bar{s}-1$ and sub-packetization $\bar{l}$. The other matrices $\widetilde{\mathbf{H}}_{a(\bar{s}+1)+\bar{s}}^{(z)}$ for $z\in[\bar{s}]$ and $\widetilde{\mathbf{H}}_{a(\bar{s}+1)+b}$ for $b\in[\bar{s}]$, are block diagonal matrices with identical diagonal entries within each matrix. Since the $\lambda_i^u$ values in $\widetilde{\mathbf{H}}_{a(\bar{s}+1)+\bar{s}}^{(z)}$ for $z\in[\bar{s}]$ and $\widetilde{\mathbf{H}}_{a(\bar{s}+1)+b}$ for $b\in[\bar{s}]$ are distinct from those in $\widetilde{\mathbf{H}}_{e(\bar{s}+1)+f}$ for $e\in[\tilde{n}]\setminus\{a\}$ and $f\in[\bar{s}+1]$, the approach used to establish the MDS property of the $(\bar{n},\bar{k},\bar{l})$ array code in Lemma \ref{lm21}, along with Lemma \ref{lm3} (3), can be generalized to prove that \eqref{eq4-12} also defines an $(\bar{n}+\bar{s}-1,\bar{k}+\bar{s}-1,\bar{l})$ MDS array code
	\begin{align}\label{eq4-13}
		(\tilde{\bf c}_{0}(w),\ldots,\tilde{\bf c}_{a(\bar{s}+1)+\bar{s}-1}(w),\tilde{\bf c}_{a(\bar{s}+1)+\bar{s}}^{(0)}(w),\ldots,\tilde{\bf c}_{a(\bar{s}+1)+\bar{s}}^{(\bar{s}-1)}(w),\ldots,\tilde{\bf c}_{\bar{n}-1}(w)).
	\end{align}	
	It follows that for a given $w\in[u-v]$, the vectors $\tilde{\bf c}_{a(\bar{s}+1)+\bar{s}}^{(z)}(w)$ for $z\in[\bar{s}]$ can be recovered by downloading $\tilde{\bf c}_{j}(w)$ from each helper node $j\in{\mathcal H}$, since  $\bar{d}=\bar{k}+\bar{s}-1$. With the values of $\tilde{\bf c}_{a(\bar{s}+1)+\bar{s}}^{(z)}(w)$ for $z\in[\bar{s}]$ determined, the node $\bar{\bf c}_{a(\bar{s}+1)+\bar{s}}(w)$ can be recovered. 
	Consequently, the total number of symbols downloaded from all helper nodes for repairing node $\bar{\bf c}_{a(\bar{s}+1)+\bar{s}}(w)$ is $\bar{d}\bar{s}^{\tilde{n}-1}$, achieving the lower bound on repair bandwidth given in \eqref{eq1-1}.
\end{proof}
\begin{remark}\label{rm7}
	If $w\in[u-v,u-1]$, the equation \eqref{eq4-12} becomes 
	\begin{align}\label{eq4-14}
	\sum_{z\in[\bar{s}]}\widetilde{\mathbf{H}}_{a(\bar{s}+1)+\bar{s}}^{(z)}\tilde{\bf c}_{a(\bar{s}+1)+\bar{s}}^{(z)}(w)+\sum_{i\in[\bar{n}]\setminus\{a(\bar{s}+1)+\bar{s}\}}\widetilde{\mathbf{H}}_{i}\tilde{\bf c}_{i}(w)=0,
	\end{align}
	where $\widetilde{\mathbf{H}}_{a(\bar{s}+1)+\bar{s}}^{(z)}=(\mathbf{R}_{a,\bar{s}}\otimes \mathbf{I}_{\bar{r}-1})\bar{\mathbf{H}}_{a(\bar{s}+1)+\bar{s}}\mathbf{R}_{a,z}^{\top}, \tilde{\bf c}_{a(\bar{s}+1)+\bar{s}}^{(z)}(w)=\mathbf{R}_{a,z}\bar{\bf c}_{a(\bar{s}+1)+\bar{s}}(w)$ for $z\in[\bar{s}]$, $\widetilde{\mathbf{H}}_{a(\bar{s}+1)+b}=(\mathbf{R}_{a,\bar{s}}\otimes \mathbf{I}_{\bar{r}-1})\bar{\mathbf{H}}_{a(\bar{s}+1)+b}\mathbf{R}_{a,b}^{\top}, \tilde{\bf c}_{a(\bar{s}+1)+b}(w)=\mathbf{R}_{a,b}\bar{\bf c}_{a(\bar{s}+1)+b}(w)$ for $b\in[\bar{s}]$, and  $\widetilde{\mathbf{H}}_{i}=(\mathbf{R}_{a,\bar{s}}\otimes \mathbf{I}_{\bar{r}-1})\bar{\mathbf{H}}_i\mathbf{R}_{a,0}^{\top},
\tilde{\bf c}_i(w)=\mathbf{R}_{a,\bar{s}}\bar{\bf c}_i(w)$ for $i\in[\bar{n}]\setminus\{a(\bar{s}+1)+\bar{s}\}$. Similarly, \eqref{eq4-14} defines an $(\bar{n}+\bar{s}-1,\bar{k}+\bar{s},\bar{l})$ MDS array code
	\begin{align}\label{eq4-15}
	(\tilde{\bf c}_{0}(w),\ldots,\tilde{\bf c}_{a(\bar{s}+1)+\bar{s}-1}(w),\tilde{\bf c}_{a(\bar{s}+1)+\bar{s}}^{(0)}(w),\ldots,\tilde{\bf c}_{a(\bar{s}+1)+\bar{s}}^{(\bar{s}-1)}(w),\ldots,\tilde{\bf c}_{\bar{n}-1}(w)),
	\end{align}	
	where $w\in[u-v,u-1]$.
	
	It follows that for a given $w\in[u-v,u-1]$, the vectors $\tilde{\bf c}_{a(\bar{s}+1)+\bar{s}}^{(z)}(w)$ for $z\in[\bar{s}]$ can be recovered by downloading $\tilde{\bf c}_{j}(w)$ from each helper nodes $j\in\mathcal{H}'$ with $\mathcal{H}\subseteq\mathcal{H}'$ and $|\mathcal{H}'|=\bar{d}+1$, since  $\bar{d}=\bar{k}+\bar{s}-1$. With the values of $\tilde{\bf c}_{a(\bar{s}+1)+\bar{s}}^{(z)}(w)$ for $z\in[\bar{s}]$ determined, the node $\bar{\bf c}_{a(\bar{s}+1)+\bar{s}}(w)$ can be recovered. 
	Consequently, both the number of symbols downloaded during the repair of node $\bar{\bf c}_{a(\bar{s}+1)+\bar{s}}(w)$ equal $(\bar{d}+1)\bar{s}^{\tilde{n}-1}$.
\end{remark}

If the $h$ failed nodes locate in one of the first $\bar{s}$ racks of each group, the proof is exactly the same as that of Theorem \ref{th1}. Now, suppose that the index of the host rack is  $a(\bar{s}+1)+\bar{s}$, where $a\in[\tilde{n}]$. For the case $h\in[1,u-v]$,  let ${\mathcal H}$ be the index set of the helper racks where ${\mathcal H}\subseteq[\bar{n}]\setminus\{a(\bar{s}+1)+\bar{s}\}$ and $|{\mathcal H}|=\bar{d}$, and ${\mathcal I}=\{g_0,g_1,\ldots,g_{h-1}\}$ be the index set of the  $h$ failed nodes within host rack $a(\bar{s}+1)+\bar{s}$. From \eqref{eq4-7}, \eqref{eq4-11-1} and \eqref{eq4-11-2}, we know that for $w\in[h]$,
	\begin{align*}
		\tilde{\bf c}_{a(\bar{s}+1)+b}(w)=&\mathbf{R}_{a,b}\bar{\bf c}_{a(\bar{s}+1)+b}(w)\\
		=&\mathbf{R}_{a,b}\varPsi_{\tilde{n},a}({\rm diag}(\boldsymbol{\lambda}_{  a\bar{s}(\bar{s}+1)+b\bar{s}+[\bar{s}]}^w))\sum\limits_{g\in[u]}\theta^{gw}{\bf c}_{au(\bar{s}+1)+bu+g}, \ a(\bar{s}+1)+b\in {\mathcal H}_1,
	\end{align*}
	and 
	\begin{align*}
		\tilde{\bf c}_{e(\bar{s}+1)+f}(w)=&\mathbf{R}_{a,\bar{s}}\bar{\bf c}_{e(\bar{s}+1)+f}(w)\\
		=&\mathbf{R}_{a,\bar{s}}\varPsi_{\tilde{n},e}({\rm diag}(\boldsymbol{\lambda}_{  e\bar{s}(\bar{s}+1)+f\bar{s}+[\bar{s}]}^w))\sum\limits_{g\in[u]}\theta^{gw}{\bf c}_{eu(\bar{s}+1)+fu+g}, \ e(\bar{s}+1)+f\in {\mathcal H}_2,
	\end{align*}
	 where ${\mathcal H}_1$, ${\mathcal H}_2$ are defined in \eqref{eq4-10-1}, \eqref{eq4-10-2}, respectively. From the equations above, downloading $\mathbf{R}_{a,b}\varPsi_{\tilde{n},a}({\rm diag}(\boldsymbol{\lambda}_{  a\bar{s}(\bar{s}+1)+b\bar{s}+[\bar{s}]}^w))$ $\sum_{g\in[u]}\theta^{gw}{\bf c}_{au(\bar{s}+1)+bu+g}$, $w\in[h]$ from each helper rack $a(\bar{s}+1)+b\in {\mathcal H}_1$, and $\mathbf{R}_{a,\bar{s}}\varPsi_{\tilde{n},e}({\rm diag}(\boldsymbol{\lambda}_{  e\bar{s}(\bar{s}+1)+f\bar{s}+[\bar{s}]}^w))$ $\sum_{g\in[u]}$ $\theta^{gw}{\bf c}_{eu(\bar{s}+1)+fu+g}$, $w\in[h]$ from each helper rack $e(\bar{s}+1)+f\in {\mathcal H}_2$, we can obtain the data $\tilde{\bf c}_{i}(w)$ for all $w\in[h], i\in{\mathcal H}$.
	 From the proof of Lemma \ref{lm23},   $\{\bar{\bf c}_{a(\bar{s}+1)+\bar{s}}(w): w\in[h]\}$ can be recovered from $\{\tilde{\bf c}_{i}(w): i\in{\mathcal H}, w\in[h]\}$.  
	It is easy to see that the repair bandwidth is $\bar{d}hl/\bar{s}$ from the $\bar{d}$ helper racks during this repair process.
	
	Below, we recover the $h$ failed nodes located in host rack $a(\bar{s}+1)+\bar{s}$ of code $\mathcal C$ using the data $\{\bar{\bf c}_{a(\bar{s}+1)+\bar{s}}(w):w\in[h]\}$, along with local helper nodes. Note that from \eqref{eq4-7}, 
	\begin{align*}
		\bar{\bf c}_{a(\bar{s}+1)+\bar{s}}(w)=\varPsi_{\tilde{n},a}({\rm diag}(\boldsymbol{\lambda}_{a\bar{s}(\bar{s}+1)+\bar{s}^2+[\bar{s}]}^w))\sum\limits_{g\in[u]}\theta^{gw}{\bf c}_{au(\bar{s}+1)+u\bar{s}+g},
	\end{align*}
	and the matrix $\varPsi_{\tilde{n},a}({\rm diag}(\boldsymbol{\lambda}_{a\bar{s}(\bar{s}+1)+\bar{s}^2+[\bar{s}]}^w))$ is invertible, we have
	\begin{align}\label{eq4-16}
		\sum\limits_{g\in[u]}\theta^{gw}{\bf c}_{au(\bar{s}+1)+u\bar{s}+g}=\varPsi_{\tilde{n},a}^{-1}({\rm diag}(\boldsymbol{\lambda}_{a\bar{s}(\bar{s}+1)+\bar{s}^2+[\bar{s}]}^w))	\bar{\bf c}_{a(\bar{s}+1)+\bar{s}}(w),
	\end{align}
	where $w\in[h]$. 
	
	The remaining proof is very similar to that of Theorem \ref{th1}, and thus we omit it. 
	
	For $h\in[u-v+1,u]$,  the proof follows similarly to the case $h\in[1,u-v]$: For each $w\in[u-v]$, by downloading $\tilde{c}_{i}(w)$ from each helper rack $i\in{\mathcal H}$ with $|{\mathcal H}|=\bar{d}$, we can recover $(\tilde{\bf c}_{a(\bar{s}+1)+\bar{s}}^{(z)}(w):z\in[\bar{s}])=\bar{\bf c}_{a(\bar{s}+1)+\bar{s}}(w)$; for each $w\in[u-v,h-1]$, according to Remark \ref{rm7}, by downloading $\tilde{\bf c}_{i}(w)$ from each helper rack $
	i\in{\mathcal H}'={\mathcal H}\cup\{i'\}$ with $i'\in[\bar{n}]\setminus({\mathcal H}\cup{\mathcal F})$, we can recover $(\tilde{\bf c}_{a(\bar{s}+1)+\bar{s}}^{(z)}(w):z\in[\bar{s}])=\bar{\bf c}_{a(\bar{s}+1)+\bar{s}}(w)$. Similar to the proof above, we can recover 
	the $h$ failed nodes within rack $a(\bar{s}+1)+\bar{s}$. Based on the preceding analysis, the repair bandwidth during the repair process is $\bar{d}hl/{\bar{s}}+(h-u+v)l/\bar{s}$.
	Note that $v<u$, we have $h-u+v<h$, then
	\begin{align*}
		\bar{d}h\frac{l}{\bar{s}}+(h-u+v)\frac{l}{\bar{s}}<(\bar{d}+1)h\frac{l}{\bar{s}}.
	\end{align*}
	In this case, the ratio of the number of downloaded symbols to the optimal repair bandwidth, as given in \eqref{eq1-1}, is less than $ 1 + 1/\bar{d} $. Therefore, the repair bandwidth of the code approaches optimal level when $ \bar{d} $ is sufficiently large and $h>u-v $.

\section{Conclusion}
\label{sect-conclusion}
In this paper, we considered the rake-aware MSR codes for repairing multiple node failures in the same rack. We presented two classes of explicit rake-aware MSR codes with small sub-packetization $l=\bar{s}^{\tilde{n}}$ and linear field size $q=\textit{O}_{\bar{s}}(n)$, which achieve optimal repair bandwidth for $ h\in[1,u-v]$, and asymptotically  optimal repair bandwidth for $ h\in[u-v+1,u]$.  In particular, they achieve optimal access when $h=u-v$ and $\tilde{n}=\bar{n}/\bar{s}$. Compared to existing rack-aware MSR codes, our constructions attain a much smaller sub-packetization. It is worthwhile to develop a new repair scheme for any $\bar{d}$ helper racks with small sub-packetization.

\begin{appendices}
\section{The proof of Lemma \ref{lm6} and Lemma \ref{lm7}} 
\label{app-A}

Let $\mathbf{x}_{[\bar{s}^2]}$ be $\bar{s}^2$ unknown variables in $\mathbb{F}_q$.
For any non-empty subset ${\mathcal B}=\{b_0,b_1,\ldots,b_{t-1}\}\subseteq [\bar{s}]$ with $b_0<b_1<\ldots<b_{t-1}$, let
\begin{align*}
	\mathbf{x}_{{\mathcal B}\bar{s}+[\bar{s}]}=(x_{b\bar{s}+j}:b\in{\mathcal B},j\in[\bar{s}])
\end{align*} be a sub-vector of $\mathbf{x}_{[\bar{s}^2]}$.
Recall the definition of $\boldsymbol{\varphi}_{\mathcal B}^{(m)}(\theta^{{\mathcal G}_{\mathcal B}}\odot \mathbf{x}_{[\bar{s}t]})$ in \eqref{eq4}. Define 
\begin{align} \label{eqA-1}
f_{\mathcal B}(\mathbf{x}_{{\mathcal B}\bar{s}+[\bar{s}]})=\prod_{\delta=t}^{ut}\prod_{\emptyset\not={\mathcal G}_{b_j}\subseteq[u], j\in[t], \atop m_0+\ldots+m_{t-1}=\delta} \text{det}(\boldsymbol{\varphi}_{\mathcal B}^{(\delta)}(\theta^{{\mathcal G}_{\mathcal B}}\odot \mathbf{x}_{{\mathcal B}\bar{s}+[\bar{s}]})),
\end{align}
and
\begin{align}\label{eqA-2}
	f(\mathbf{x}_{[\bar{s}^2]})=\prod_{{\mathcal B}\subseteq [\bar{s}]}f_{\mathcal B}(\mathbf{x}_{{\mathcal B}\bar{s}+[\bar{s}]}). 
\end{align}
Therefore, the local constraints specified in \eqref{eq3-5} are able to be reformulated as
\begin{align}\label{eqA-3}
	f(\boldsymbol{\lambda}_{a\bar{s}^2+[\bar{s}^2]})\not=0, a\in[\tilde{n}].
\end{align}
\begin{lemma}\label{lmA-1}
Following the notations introduced above, the determnant
$\text{det}(\boldsymbol{\varphi}_{\mathcal B}^{(\delta)}(\theta^{{\mathcal G}_{\mathcal B}}\odot \mathbf{x}_{{\mathcal B}\bar{s}+[\bar{s}]}))$ is a nonzero homogeneous
polynomial of degree $\bar{s}\delta(\delta-1)/2$. In addition, the largest
power (degree) of $x_{b_i\bar{s}+j}\in\mathbf{x}_{{\mathcal B}\bar{s}+[\bar{s}]}$ in $f_{\mathcal B}(\mathbf{x}_{{\mathcal B}\bar{s}+[\bar{s}]})$ is 
\begin{align*}
\text{deg}_{x_{b_i\bar{s}+j}}(f_{\mathcal B})\leq\sum_{\delta=t}^{ut}{\delta-1\choose t-1}{ut\choose \delta}(\delta-t+1)(\delta-1).
\end{align*}
\end{lemma}
\begin{proof} Assume that ${\mathcal B}=\{b_0,b_1,\ldots,b_{t-1}\}\subseteq [\bar{s}]$ with $b_0<b_1<\ldots<b_{t-1}$,  and ${\mathcal G}_{\mathcal B}=\{{\mathcal G}_{b_0},{\mathcal G}_{b_1},\ldots,{\mathcal G}_{b_{t-1}}\}$ with ${\mathcal G}_{b_j}=\{g_{b_j,0},g_{b_j,1},\ldots,g_{b_j,m_j-1}\}\subseteq [u], j\in [t]$ such that $m_0+m_1+\ldots+m_{t-1}=\delta$. From \eqref{eqA-1}, $f_{\mathcal B}(\mathbf{x}_{{\mathcal B}\bar{s}+[\bar{s}]})$ is the product of some determinants $\text{det}(\boldsymbol{\varphi}_{\mathcal B}^{(\delta)}(\theta^{{\mathcal G}_{\mathcal B}}\odot \mathbf{x}_{{\mathcal B}\bar{s}+[\bar{s}]}))$.  For a given determnant $\text{det}(\boldsymbol{\varphi}_{\mathcal B}^{(\delta)}(\theta^{{\mathcal G}_{\mathcal B}}\odot \mathbf{x}_{{\mathcal B}\bar{s}+[\bar{s}]}))$, all the elements that are non-zero and located in the same row share the same degree. More precisely, for $e\in[\bar{s}]$  and $j\in[\delta]$, all the nonzero entries in the $(e\delta+j)$-th row of $\boldsymbol{\varphi}_{\mathcal B}^{(\delta)}(\theta^{{\mathcal G}_{\mathcal B}}\odot \mathbf{x}_{{\mathcal B}\bar{s}+[\bar{s}]})$  have the same degree $j$, it follow that $\text{det}(\boldsymbol{\varphi}_{\mathcal B}^{(\delta)}(\theta^{{\mathcal G}_{\mathcal B}}\odot \mathbf{x}_{{\mathcal B}\bar{s}+[\bar{s}]}))$ is a homogeneous polynomial. So, $f_{\mathcal B}(\mathbf{x}_{{\mathcal B}\bar{s}+[\bar{s}]})$ is also a homogeneous polynomial.
	
Note that the following term 
\begin{align}\label{eqA-3-1}
(\prod_{j\in[m_0]}\prod_{e\in[\bar{s}]}(\theta^{g_{b_0,j}}x_{b_0\bar{s}+e})^{j})\cdot(\prod_{j\in[m_1]}\prod_{e\in[\bar{s}]}(\theta^{g_{b_1,j}}x_{b_1\bar{s}+e})^{m_0+j})\cdots (\prod_{j\in[m_{t-1}]}\prod_{e\in[\bar{s}]}(\theta^{g_{b_{t-1},j}}x_{b_{t-1}\bar{s}+e})^{m_0+\ldots+m_{t-2}+j})
\end{align}
appears in the expansion of $\text{det}(\boldsymbol{\varphi}_{\mathcal B}^{(\delta)}(\theta^{{\mathcal G}_{\mathcal B}}\odot \mathbf{x}_{{\mathcal B}\bar{s}+[\bar{s}]}))$, and can only be obtained by picking constant entries from 
$\varphi_{b_0}^{(\delta)}(\theta^{g_{b_0,0}} \mathbf{x}_{b_0\bar{s}+[\bar{s}]})$, degree-1 entries from $\varphi_{b_0}^{(\delta)}(\theta^{g_{b_0,1}} \mathbf{x}_{b_0\bar{s}+[\bar{s}]})$, $\ldots$, degree-$(m_0-1)$ entries from $\varphi_{b_0}^{(\delta)}(\theta^{g_{b_0,m_0-1}} \mathbf{x}_{b_0\bar{s}+[\bar{s}]})$,  degree-$m_0$ entries from $\varphi_{b_1}^{(\delta)}(\theta^{g_{b_1,0}} \mathbf{x}_{b_1\bar{s}+[\bar{s}]})$, and so on. Then,  we can calculate that $\text{det}(\boldsymbol{\varphi}_{\mathcal B}^{(\delta)}(\theta^{{\mathcal G}_{\mathcal B}}\odot \mathbf{x}_{{\mathcal B}\bar{s}+[\bar{s}]}))$ is a nonzero homogeneous polynomial of degree $(0+1+\ldots+m_0-1+\ldots+(m_0+\ldots+m_{t-1}-1))\bar{s}=\bar{s}\delta(\delta-1)/2$. Moreover, for each $x_{b_i\bar{s}+j}\in{\mathcal B}\bar{s}+[\bar{s}]$, the largest power of $x_{b_i\bar{s}+j}$ occurring in $\text{det}(\boldsymbol{\varphi}_{\mathcal B}^{(\delta)}(\theta^{{\mathcal G}_{\mathcal B}}\odot \mathbf{x}_{{\mathcal B}\bar{s}+[\bar{s}]}))$ is 
\begin{align*}
	\begin{cases}
		\displaystyle\sum_{z=1}^{m_i}(\delta-z) \ \ &\textit{if} \ \ j=b_i,\\
		\displaystyle\sum_{z=1}^{\lfloor m_i/2\rfloor}2(\delta-z)+(m_i\bmod 2)(\delta-\lfloor m_i/2\rfloor-1) \ \ &\textit{if} \ \ j\not=b_i.\\
	\end{cases}
\end{align*}
It is easy to see that 
\begin{align*}
\sum_{z=1}^{m_i}(\delta-z)&\leq\sum_{z=1}^{\lfloor m_i/2\rfloor}2(\delta-z)+(m_i\bmod 2)(\delta-\lfloor m_i/2\rfloor-1)\\
&\leq\sum_{z=1}^{\lfloor m_i/2\rfloor}2(\delta-1)+(m_i\bmod 2)(\delta-1)\\
&=2\lfloor m_i/2\rfloor(\delta-1)+(m_i\bmod 2)(\delta-1)\\
&=m_i(\delta-1).
\end{align*}
Note that  $m_0+\ldots+m_{t-1}=\delta$, $i\in[t]$, and $1\leq m_0,\ldots,m_{t-1}\leq u$,  we have 
\begin{align*}
	m_{i}=\delta-\sum_{j\in[t],\atop j\not=i}m_j\leq \delta-t+1,
\end{align*}
then
\begin{align*}
	\sum_{z=1}^{m_i}(\delta-z)&\leq\sum_{z=1}^{\lfloor m_i/2\rfloor}2(\delta-z)+(m_i\bmod 2)(\delta-\lfloor m_i/2\rfloor-1)\\
	&\leq m_i(\delta-1)\\
	&\leq (\delta-t+1)(\delta-1).
\end{align*}
The largest power of $x_{b_i\bar{s}+j}$ occurring in $\text{det}(\boldsymbol{\varphi}_{\mathcal B}^{(\delta)}(\theta^{{\mathcal G}_{\mathcal B}}\odot \mathbf{x}_{{\mathcal B}\bar{s}+[\bar{s}]}))$ is not greater than $(\delta-t+1)(\delta-1)$. Therefore, the largest power of $x_{b_i\bar{s}+j}$ occurring in $f_{\mathcal B}(\mathbf{x}_{{\mathcal B}\bar{s}+[\bar{s}]})$ is 
\begin{align}\label{eqA-3-1}
\text{deg}_{x_{b_i\bar{s}+j}}(f_{\mathcal{B}})\leq	\sum_{\delta=t}^{ut}\sum_{m_j\in[1,u], j\in[t], \atop m_0+\ldots+m_{t-1}=\delta}{u\choose m_0}\dots{u\choose m_{t-1}}(\delta-t+1)(\delta-1).
\end{align}
Since ${u\choose m_0}\dots{u\choose m_{t-1}}\leq {ut\choose \delta}$, from \eqref{eqA-3-1} we can get
\begin{align*}
	\text{deg}_{x_{b_i\bar{s}+j}}(f_{\mathcal{B}})\leq	\sum_{\delta=t}^{ut}\sum_{m_j\in[1,u], j\in[t], \atop m_0+\ldots+m_{t-1}=\delta}{ut\choose \delta}(\delta-t+1)(\delta-1)\\
	\leq\sum_{\delta=t}^{ut}{\delta-1\choose t-1}{ut\choose \delta}(\delta-t+1)(\delta-1).
\end{align*}
\end{proof}
Note that $\mathcal{B}$ is an arbitrary non-empty of $[\bar{s}]$.  From Lemma \ref{lmA-1}, we have that for any $x_i\in\mathbf{x}_{\mathcal{B}\bar{s}+[\bar{s}]}$,
\begin{align*}
	\textit{deg}_{x_i}(f_{\mathcal B})\leq\sum_{\delta=t}^{ut}{\delta-1\choose t-1}{ut\choose \delta}(\delta-t+1)(\delta-1),
\end{align*}
then we are not hard to obtain the following result.
\begin{corollary}\label{lmA-2}
The polynomial $f(\mathbf{x}_{[\bar{s}^2]})$ defined in \eqref{eqA-2} is a nonzero homogeneous polynomial in $\mathbb{F}_q[x_0,x_1,\ldots,x_{\bar{s}^2-1}]$, and the largest power of $x_{i}\in\mathbf{x}_{[\bar{s}^2]}$ is 
\begin{align*}
\text{deg}_{x_i}(f)&\leq\sum_{t=1}^{\bar{s}}{\bar{s}-1\choose t-1}\sum_{\delta=t}^{ut}{\delta-1\choose t-1}{ut\choose \delta}(\delta-t+1)(\delta-1)\\
&=\sum_{t=1}^{\bar{s}}\sum_{\delta=t}^{ut}{\bar{s}-1\choose t-1}{\delta-1\choose t-1}{ut\choose \delta}(\delta-t+1)(\delta-1).
\end{align*}
\end{corollary}
\subsection{The proof of Lemma \ref{lm6}}
\begin{lemma}\label{lmA-3}(\cite{LWHY})
Let $f=
f(x_1,\ldots, x_n)$ be a nonzero polynomial in $\mathbb{F}_q[x_1, . . . , x_n]$. Let
$d_i = \text{deg}_{x_i}(f)$ be the largest power of $x_i$ occurring in any
monomial of $f$, and $d=max(d_1,\ldots,d_n)$. Let ${\mathcal S}$ be a subset
of  $\mathbb{F}_q$ with $|{\mathcal S}|\geq n+d$. Then there exist $n$ distinct elements
$s_1, s_2,\ldots, s_n\in S$ such that $f(s_1,\ldots,s_n)\not=0$.
\end{lemma}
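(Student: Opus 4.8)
The plan is to prove Lemma \ref{lmA-3} by induction on the number of variables $n$, which is the standard route for such ``generic evaluation on a product set'' statements (a close relative of the Combinatorial Nullstellensatz). The underlying principle is simply that a nonzero polynomial of bounded degree cannot vanish identically on a large enough product set, and the slack $|{\mathcal S}|\geq n+d$ (rather than merely $|{\mathcal S}|>d$) is precisely what is needed so that, in addition, the chosen coordinates can be taken pairwise distinct.

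For the base case $n=1$, the polynomial $f(x_1)$ is a nonzero univariate polynomial of degree at most $d_1\leq d$, hence has at most $d$ roots in $\mathbb{F}_q$; since $|{\mathcal S}|\geq 1+d>d$, some $s_1\in{\mathcal S}$ satisfies $f(s_1)\neq 0$, and distinctness is vacuous. For the inductive step, I would assume the statement for $n-1$ variables and expand $f$ as a polynomial in $x_n$ with coefficients in $\mathbb{F}_q[x_1,\ldots,x_{n-1}]$, say $f=\sum_{j=0}^{d_n}g_j(x_1,\ldots,x_{n-1})x_n^j$. Let $j^\ast$ be the largest index with $g_{j^\ast}\neq 0$, which exists because $f\neq 0$. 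Since $\deg_{x_i}(g_{j^\ast})\leq\deg_{x_i}(f)=d_i\leq d$ for every $i\in[n-1]$, the polynomial $g_{j^\ast}$ is a nonzero polynomial in $n-1$ variables whose per-variable degrees are all bounded by $d$; as $|{\mathcal S}|\geq n+d\geq (n-1)+d$, the induction hypothesis furnishes pairwise distinct $s_1,\ldots,s_{n-1}\in{\mathcal S}$ with $g_{j^\ast}(s_1,\ldots,s_{n-1})\neq 0$.

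It then remains to pick $s_n$. Consider the univariate polynomial $h(x_n):=f(s_1,\ldots,s_{n-1},x_n)=\sum_{j=0}^{d_n}g_j(s_1,\ldots,s_{n-1})x_n^j$; its coefficient of $x_n^{j^\ast}$ equals $g_{j^\ast}(s_1,\ldots,s_{n-1})\neq 0$, so $h$ is a nonzero polynomial of degree at most $d_n\leq d$ and has at most $d$ roots. The set of forbidden values for $s_n$ — namely the roots of $h$ together with the already chosen $\{s_1,\ldots,s_{n-1}\}$ — has size at most $d+(n-1)<d+n\leq|{\mathcal S}|$, so some $s_n\in{\mathcal S}$ is distinct from $s_1,\ldots,s_{n-1}$ and satisfies $h(s_n)=f(s_1,\ldots,s_n)\neq 0$. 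This closes the induction and proves the lemma.

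There is no genuine obstacle here; the only points demanding care are the bookkeeping in the inductive step: (i) verifying that the top $x_n$-coefficient $g_{j^\ast}$ inherits the per-variable degree bound $d$ so the induction hypothesis applies, and (ii) the counting estimate $d+(n-1)<|{\mathcal S}|$, which simultaneously avoids the roots of $h$ and the collisions with the previously selected coordinates — this is exactly where the hypothesis $|{\mathcal S}|\geq n+d$ is consumed, as opposed to the weaker $|{\mathcal S}|\geq 1+d$ that would suffice without the distinctness requirement.
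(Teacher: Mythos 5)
Your induction is correct and is the standard argument for this lemma. The paper itself does not supply a proof of Lemma~\ref{lmA-3}: it merely quotes it from \cite{LWHY}, so there is no in-paper argument to compare against. Your reconstruction is exactly the textbook route — induct on the number of variables, isolate the top nonzero $x_n$-coefficient $g_{j^\ast}$, apply the induction hypothesis to $g_{j^\ast}$ (noting its per-variable degrees are still bounded by $d$), and then choose $s_n$ avoiding both the at most $d$ roots of the resulting univariate specialization and the $n-1$ previously chosen values, which is where the slack $|{\mathcal S}|\geq n+d$ is spent. The bookkeeping in the inductive step (the bound $\deg_{x_i}(g_{j^\ast})\leq d_i\leq d$ and the count $d+(n-1)<|{\mathcal S}|$) is handled correctly; no gap.
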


Recall that we need to choose $\bar{n}\bar{s}$ distinct elements $\boldsymbol{\lambda}_{[\bar{n}\bar{s}]}$ from $\mathbb{F}_q$ such that $\theta^{[u]}\boldsymbol{\lambda}_{[\bar{n}\bar{s}]}$ are $n\bar{s}$ distinct elements, and satisfy \eqref{eqA-3}, i,e, $f(\boldsymbol{\lambda}_{a\bar{s}^2+[\bar{s}^2]})\not=0$, $a\in [\tilde{n}]$. According to
Corollary \ref{lmA-2}, $f(\mathbf{x}_{[\bar{s}^
2]})$ is a nonzero polynomial in $\mathbb{F}_q[\mathbf{x}_{[\bar{s}^2]}]$, and
its degree of each variable $x_i$, $i\in[\bar{s}^2]$ does not exceed $\Omega(\bar{s},u)$, it follow that $\Omega(\bar{s},u)\geq max(d_0,\ldots,d_{\bar{s}^2-1})$ where $d_i=\text{deg}_{x_i}(f)$. Then by
Lemma \ref{lmA-3}, for each $a\in[\tilde{n}]$, we can choose $\boldsymbol{\lambda}_{a\bar{s}^2+[\bar{s}^2]}$ from a
subset ${\mathcal S}\subseteq Q$ with size $|{\mathcal S}|\geq\bar{s}^2+\Omega(\bar{s},u)\geq\bar{s}^2+max(d_1,\ldots,d_{\bar{s}^2-1})$ 
satisfying that
$f(\boldsymbol{\lambda}_{a\bar{s}^2+[\bar{s}^2]})\not=0$. As $q/u\geq \bar{n}\bar{s}+\Omega(\bar{s},u)$, we can repeat
this process $\bar{n}/\bar{s}$ times, and the total complexity is $O_{\bar{s}}(n)$.
Since $\bar{n}\bar{s}$ distinct elements $\boldsymbol{\lambda}_{[\bar{n}\bar{s}]}\subset Q$, then it is easy to see that $\theta^{[u]}\boldsymbol{\lambda}_{[\bar{n}\bar{s}]}$ are $n\bar{s}$ distinct elements in $\mathbb{F}_q$. This
concludes the proof of Lemma \ref{lm6}. 
\subsection{The proof of Lemma \ref{lm7}}
Since $f(\mathbf{x}_{[\bar{s}^2]})$ is a nonzero homogeneous polynomial, it follows that
\begin{align*}
	f(x^{a\bar{s}^2},x^{a\bar{s}^2+1},\ldots,x^{a\bar{s}^2+\bar{s}^2-1})=x^{a\bar{s}^2deg(f)}f(1,x,\ldots,x^{\bar{s}-1}).
\end{align*}
Consider the polynomial
\begin{align*}
	h(x)=f(1,x,\ldots,x^{\bar{s}-1}),
\end{align*}
that is,  we set the value of $x_i$ to be $x^i$  in the polynomial $f(\mathbf{x}_{[\bar{s}^2]})$. If $h(\xi)\not=0$, then the $\bar{n}\bar{s}$ values $\lambda_i=\xi^i$, where $i\in[\bar{n}\bar{s}]$, are all distinct and satisfy the the local constraints \eqref{eq3-5} and \eqref{eq3-6}.	

Given any non-empty subset ${\mathcal B}\subseteq[\bar{s}]$, we set 
\begin{align*}
	h_{{\mathcal B}}(x)=f_{{\mathcal B}}(x^i:i\in{\mathcal B}\bar{s}+[\bar{s}])\in\mathbb{F}_p[x].
\end{align*}
Let ${\mathcal B}=\{b_0,b_1,\ldots,b_{t-1}\}\subseteq [\bar{s}]$ with $b_0<b_1<\ldots<b_{t-1}$. By applying the same approach as \eqref{eqA-3-1}, we can check that the term 
\begin{align*}
\prod_{z\in[t]}\prod_{j\in[m_z]}\prod_{e\in[\bar{s}]}(\theta^{g_{b_z,j}}x^{b_z\bar{s}+e})^{m_0+\ldots+m_{z-1}+j}
\end{align*}
uniquely has the highest degree in the polynomial $h_{{\mathcal B}}(x)$. In the same way, the term 
\begin{align*}
	\prod_{z\in[t]}\prod_{j\in[m_z]}\prod_{e\in[\bar{s}]}(\theta^{g_{b_z,j}}x^{b_z\bar{s}+e})^{\delta-1-(m_0+\ldots+m_{z-1}+j)}
\end{align*}
uniquely has the lowest degree in the polynomial $h_{{\mathcal B}}(x)$. Set
\begin{align*}
	M_{\mathcal B}=\sum_{z\in[t]}\sum_{j\in[m_z]}\sum_{e\in[\bar{s}]}({b_z\bar{s}+e})(m_0+\ldots+m_{z-1}+j),
\end{align*}
\begin{align*}
	m_{\mathcal B}=\sum_{z\in[t]}\sum_{j\in[m_z]}\sum_{e\in[\bar{s}]}({b_z\bar{s}+e})(\delta-1-m_0-\ldots-m_{z-1}-j),
\end{align*}
we have $h_{\mathcal B}(x)=x^{m_{\mathcal B}}\bar{h}_{\mathcal B}(x)$, where $\bar{h}_{\mathcal B}(x)$ is a polynomial with degree $deg(\bar{h}_{\mathcal B})=M_{\mathcal B}-m_{\mathcal B}$.
We can verify that 
\begin{align*}
	deg(\bar{h}_{\mathcal B})&=M_{\mathcal B}-m_{\mathcal B}\\
	&=\sum_{z\in[t]}\sum_{j\in[m_z]}\sum_{e\in[\bar{s}]}({b_z\bar{s}+e})(2(m_0+\ldots+m_{z-1}+j)-\delta+1)\\
	&=\sum_{z\in[t]}\sum_{j\in[m_z]}(2(m_0+\ldots+m_{z-1}+j)-\delta+1)(b_z\bar{s}^2+\frac{\bar{s}(\bar{s}-1)}{2})\\
	&=\bar{s}^2\sum_{z\in[t]}\sum_{j\in[m_z]}(2(m_0+\ldots+m_{z-1}+j)-\delta+1)b_z\\
	&\leq \bar{s}^2\sum_{z\in[t]}\sum_{j\in[m_z]}(m_0+\ldots+m_{z-1}+j)b_z\\
	&\leq \bar{s}^2\sum_{z\in[t]}u^2(z+1)b_z
	\leq u^2\bar{s}^2\sum_{z\in[\bar{s}]}(z+1)z
	\leq u^2\bar{s}^2\sum_{z\in[\bar{s}]}\bar{s}^2=u^2\bar{s}^5.
\end{align*}
Let $q=p^m$. If $m>u^2\bar{s}^5$, then the primitive element $\xi$ satisfies $\bar{h}_{\mathcal B}(\xi)\not=0$.  Consequently, for every subset ${\mathcal B}\subseteq[\bar{s}]$, we have $h_{\mathcal B}(\xi)\not=0$. Furthermore, we impose the condition that the field size  $q>n\bar{s}$ to ensure that the $\bar{n}\bar{s}$ elements $\lambda_i=\xi^i$ for $i\in[\bar{n}\bar{s}]$ are distinct, and that the $n\bar{s}$ elements $\theta^{[u]}\boldsymbol{\lambda}_{[\bar{n}\bar{s}]}$ are also  distinct. This concludes our proof of Lemma \ref{lm7}.

\section{The proof of Lemma \ref{lm10}} \label{app-B}

Let $\mathbf{x}_{[\bar{s}^2]}$ be $\bar{s}^2$ unknown variables in $\mathbb{F}_q$.
For a non-empty subset ${\mathcal B}=\{b_0,b_1,\ldots,b_{t-1}\}\subseteq [\bar{s}]$ with $b_0<b_1<\ldots<b_{t-1}$, let
\begin{align*}
	\mathbf{x}_{{\mathcal B}\bar{s}+[\bar{s}]}=(x_{b\bar{s}+j}:b\in{\mathcal B},j\in[\bar{s}])
\end{align*} be a sub-vector of $\mathbf{x}_{[\bar{s}^2]}$.
Recall the definition of $\boldsymbol{\varphi}_{\mathcal B}^{(m)}(\mathbf{x}_{[\bar{s}t]})$ in \eqref{eq501}. We define 
\begin{align} \label{eqA-5}
	f'_{\mathcal B}(\mathbf{x}_{{\mathcal B}\bar{s}+[\bar{s}]})=\text{det}(\boldsymbol{\varphi}_{\mathcal B}(\mathbf{x}_{{\mathcal B}\bar{s}+[\bar{s}]}))
\end{align}
and
\begin{align}\label{eqA-6}
	f'(\mathbf{x}_{[\bar{s}^2]})=\prod_{{\mathcal B}\subseteq [\bar{s}]}f'_{\mathcal B}(\mathbf{x}_{{\mathcal B}\bar{s}+[\bar{s}]}). 
\end{align}
Therefore, the local constraints specified in \eqref{eq3-18} are able to be reformulated as
\begin{align}\label{eqA-7}
	f'(\boldsymbol{\lambda}_{a\bar{s}^2+[\bar{s}^2]}^u)\not=0, a\in[\tilde{n}].
\end{align}
\begin{lemma}\label{lmA-4}
	For each non-empty ${\mathcal B}\subseteq [\bar{s}]$ of size $|{\mathcal B}|=t$,
	the polynomial $f'_{\mathcal B}(\mathbf{x}_{{\mathcal B}\bar{s}+[\bar{s}]})$ defined in \eqref{eqA-1} is a nonzero homogeneous
	polynomial of degree $\bar{s}t(t-1)/2$. In addition, the largest
	power of $x_i$, $i\in{\mathcal B}\bar{s}+[\bar{s}]$ in $f'_{\mathcal B}(\mathbf{x}_{{\mathcal B}\bar{s}+[\bar{s}]})$ is $t-1$.
\end{lemma}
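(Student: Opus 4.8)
The statement is precisely the $\theta$-free specialization of Lemma~\ref{lmA-1}: taking ${\mathcal G}_{b_j}=\{0\}$ for every $j\in[t]$ forces $\delta=t$ together with the unique composition $m_0=\dots=m_{t-1}=1$, so the product in \eqref{eqA-1} collapses to the single determinant $\det(\boldsymbol{\varphi}_{\mathcal B}^{(t)}(\mathbf{x}_{{\mathcal B}\bar{s}+[\bar{s}]}))=\det(\boldsymbol{\varphi}_{\mathcal B}(\mathbf{x}_{{\mathcal B}\bar{s}+[\bar{s}]}))=f'_{\mathcal B}(\mathbf{x}_{{\mathcal B}\bar{s}+[\bar{s}]})$. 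Accordingly, the plan is to transcribe the corresponding sub-case of the proof of Lemma~\ref{lmA-1}, establishing the three assertions (homogeneity, total degree $\bar{s}t(t-1)/2$, and $\deg_{x_i}=t-1$) separately.

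First I would handle homogeneity and degree. Write $\boldsymbol{\varphi}_{\mathcal B}(\mathbf{x}_{{\mathcal B}\bar{s}+[\bar{s}]})=[\varphi_{b_0}^{(t)}(\mathbf{x}_{b_0\bar{s}+[\bar{s}]})\ \dots\ \varphi_{b_{t-1}}^{(t)}(\mathbf{x}_{b_{t-1}\bar{s}+[\bar{s}]})]$ as an ordinary $\bar{s}t\times\bar{s}t$ matrix and index its rows by pairs $(e,j)$, with $e\in[\bar{s}]$ the block-row and $j\in[t]$ the position inside the length-$t$ block. By \eqref{eq2}, every nonzero entry in row $(e,j)$ equals $\pm x_c^{\,j}$ for the column's variable $x_c$, hence has degree $j$. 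Since $\det$ is an alternating sum of products using exactly one entry from each row, every monomial of $f'_{\mathcal B}$ has total degree $\sum_{e\in[\bar{s}]}\sum_{j\in[t]}j=\bar{s}\binom{t}{2}=\bar{s}t(t-1)/2$, giving both homogeneity and the degree.

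Next I would treat non-vanishing and the exact power of each $x_i$ simultaneously. For $i=b_z\bar{s}+c\in{\mathcal B}\bar{s}+[\bar{s}]$, the variable $x_i$ occurs only in the single column $c$ of the block $\varphi_{b_z}^{(t)}$, whose entries are components of $\pm L^{(t)}(x_i)=\pm(1,x_i,\dots,x_i^{t-1})^{\top}$; as each expansion term uses that column once, $\deg_{x_i}f'_{\mathcal B}\le t-1$. For the reverse inequality (which also shows $f'_{\mathcal B}\not\equiv 0$) I would exhibit, for every permutation $\pi$ of $\{0,1,\dots,t-1\}$, the ``graded diagonal'' term that, inside $\varphi_{b_z}^{(t)}$, matches column $c$ to the degree-$\pi(z)$ component of its diagonal block-entry (ordinary row $ct+\pi(z)$); one checks $(z,c)\mapsto ct+\pi(z)$ is a bijection of $[t]\times[\bar{s}]$ onto the row index set. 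Using that the column indexed by $b_z$ in $\varphi_{b_z}^{(t)}$ has nonzero entries only in block-row $b_z$, this choice is the \emph{unique} way to realize the monomial $\prod_{z\in[t]}\prod_{c\in[\bar{s}]}x_{b_z\bar{s}+c}^{\pi(z)}$, so that monomial survives with coefficient $\pm1\ne 0$; taking $\pi$ with $\pi(z)=t-1$ yields $\deg_{x_i}f'_{\mathcal B}=t-1$.

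The whole argument is elementary bookkeeping; the only step requiring care is the uniqueness claim above — verifying that no other permutation contributes the same ``graded diagonal'' monomial, so its coefficient genuinely does not cancel. This is exactly the point settled by the observation about the $b_z$-th column of each block $\varphi_{b_z}^{(t)}$, mirroring the proof of Lemma~\ref{lmA-1} (equivalently Corollary~1 of \cite{LWHY}).
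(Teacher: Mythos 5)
Your proof is correct, and it follows the same graded-diagonal-term argument that the paper uses for the analogous Lemma \ref{lmA-1} (the paper itself omits the proof of Lemma \ref{lmA-4}, deferring to Lemma 12 of \cite{LWHY}, which is of the same style). Your uniqueness argument — that column $b_z$ of the block $\varphi_{b_z}^{(t)}$ has nonzero entries only in block-row $b_z$, forcing the row assignment $\rho(z,c)=ct+\pi(z)$ — is sound and in fact spells out a point the paper leaves implicit.
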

We omit the proof of Lemma \ref{lmA-4}, as it is almost the same as the proof of Lemma 12 in \cite{LWHY}.

\begin{corollary}\label{lmA-5}
	The polynomial $f'(\mathbf{x}_{[\bar{s}^2]})$ defined in \eqref{eqA-6} is a nonzero homogeneous polynomial in $\mathbb{F}_q[x_0,x_1,\ldots,x_{\bar{s}^2-1}]$, and the largest power of $x_i$, $i\in[\bar{s}^2]$ is 
	\begin{align*}
		\text{deg}_{x_i}=\sum_{t=1}^{\bar{s}}{\bar{s}-1\choose t-1}(t-1)=(\bar{s}-1)2^{\bar{s}-2}.
	\end{align*}
\end{corollary}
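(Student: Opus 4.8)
The plan is to derive the corollary directly from Lemma \ref{lmA-4} together with the elementary fact that, in the polynomial ring $\mathbb{F}_q[x_0,x_1,\ldots,x_{\bar{s}^2-1}]$, a product of nonzero homogeneous polynomials is again nonzero (the ring is an integral domain) and homogeneous (total degrees add). So the task reduces to bookkeeping: collecting the per-block data supplied by Lemma \ref{lmA-4} over all subsets ${\mathcal B}\subseteq[\bar{s}]$.

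First I would settle the nonvanishing and homogeneity. By Lemma \ref{lmA-4} every factor $f'_{\mathcal B}(\mathbf{x}_{{\mathcal B}\bar{s}+[\bar{s}]})$ in $f'(\mathbf{x}_{[\bar{s}^2]})=\prod_{\emptyset\not={\mathcal B}\subseteq[\bar{s}]}f'_{\mathcal B}(\mathbf{x}_{{\mathcal B}\bar{s}+[\bar{s}]})$ is a nonzero homogeneous polynomial (of degree $\bar{s}|{\mathcal B}|(|{\mathcal B}|-1)/2$), hence so is the product, with total degree $\sum_{\emptyset\not={\mathcal B}\subseteq[\bar{s}]}\bar{s}|{\mathcal B}|(|{\mathcal B}|-1)/2$.

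Next I would compute $\text{deg}_{x_i}(f')$ for a fixed $i\in[\bar{s}^2]$, writing $i=b\bar{s}+j$ with $b=\lfloor i/\bar{s}\rfloor\in[\bar{s}]$ and $j\in[\bar{s}]$. The key observation is that the variable $x_i$ belongs to the list $\mathbf{x}_{{\mathcal B}\bar{s}+[\bar{s}]}$ exactly when $b\in{\mathcal B}$; for every block ${\mathcal B}$ with $b\notin{\mathcal B}$, the factor $f'_{\mathcal B}$ is a polynomial in a set of variables not containing $x_i$ and therefore contributes $0$ to the $x_i$-degree. Applying the last statement of Lemma \ref{lmA-4} to the remaining factors gives
\[
\text{deg}_{x_i}(f')=\sum_{\substack{\emptyset\not={\mathcal B}\subseteq[\bar{s}]\\ b\in{\mathcal B}}}\text{deg}_{x_i}(f'_{\mathcal B})=\sum_{\substack{\emptyset\not={\mathcal B}\subseteq[\bar{s}]\\ b\in{\mathcal B}}}(|{\mathcal B}|-1).
\]
Finally, grouping the subsets ${\mathcal B}$ by their cardinality $t$, there are $\binom{\bar{s}-1}{t-1}$ subsets of $[\bar{s}]$ of size $t$ containing $b$, so
\[
\text{deg}_{x_i}(f')=\sum_{t=1}^{\bar{s}}\binom{\bar{s}-1}{t-1}(t-1)=\sum_{m=0}^{\bar{s}-1}m\binom{\bar{s}-1}{m}=(\bar{s}-1)2^{\bar{s}-2},
\]
using the standard identity $\sum_{m=0}^{N}m\binom{N}{m}=N2^{N-1}$ with $N=\bar{s}-1$. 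Since this value does not depend on $b$, it is uniform over all $i\in[\bar{s}^2]$.

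I do not expect any real obstacle: the statement is a bookkeeping corollary of Lemma \ref{lmA-4}, and its proof is almost entirely formal. The only point that needs a moment's care is recognizing that a factor $f'_{\mathcal B}$ with $\lfloor i/\bar{s}\rfloor\notin{\mathcal B}$ genuinely does not involve $x_i$, so that such factors drop out of the $x_i$-degree count; once that is noted, the homogeneity, nonvanishing, and the binomial summation are all routine.
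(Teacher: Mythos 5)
Your proof is correct and follows exactly the route the paper intends: the corollary is stated as a direct consequence of Lemma \ref{lmA-4}, obtained by multiplying the per-block factors, noting that only blocks ${\mathcal B}$ containing $\lfloor i/\bar{s}\rfloor$ involve $x_i$, and summing $(t-1)$ over the $\binom{\bar{s}-1}{t-1}$ such blocks of each size $t$ (the same counting used for Corollary \ref{lmA-2}). Nothing further is needed.
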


Recall that we need to choose $\bar{n}\bar{s}$ distinct elements $\boldsymbol{\lambda}_{[\bar{n}\bar{s}]}$ from $\mathbb{F}_q$ such that $\boldsymbol{\lambda}_{[\bar{n}\bar{s}]}^u$ are also $\bar{n}\bar{s}$ distinct elements, and satisfy \eqref{eqA-7}, i,e, $f'(\boldsymbol{\lambda}_{a\bar{s}^2+[\bar{s}^2]})\not=0$, $a\in [\tilde{n}]$. According to
Corollary \ref{lmA-5}, $f(\mathbf{x}_{[\bar{s}^2]})$ is a nonzero polynomial in $\mathbb{F}_q[\mathbf{x}_{[\bar{s}^2]}]$, and
its degree of each variable $x_i$, $i\in[\bar{s}^2]$ is $(\bar{s}-1)2^{\bar{s}-2}$. Then by
Lemma \ref{lmA-3}, for each $a\in[\tilde{n}]$, we can choose $\boldsymbol{\lambda}_{a\bar{s}^2+[\bar{s}^2]}$ from a
subset ${\mathcal S}\subseteq Q\subset \mathbb{F}_q$ with size $|{\mathcal S}|\geq\bar{s}^2+(\bar{s}-1)2^{\bar{s}-2}$ 
satisfying that
$f(\boldsymbol{\lambda}_{a\bar{s}^2+[\bar{s}^2]})\not=0$. As $q/u\geq \bar{n}\bar{s}+(\bar{s}-1)2^{\bar{s}-2}$, we can repeat
this process $\bar{n}/\bar{s}$ times, and the total complexity is $O_{\bar{s}}(n)$.
Since the $\bar{n}\bar{s}$  elements $\boldsymbol{\lambda}_{[\bar{n}\bar{s}]}\subseteq Q$ are distinct, then it is easy to see that $\boldsymbol{\lambda}_{[\bar{n}\bar{s}]}^u$ are also the $\bar{n}\bar{s}$ distinct elements in $\mathbb{F}_q$. This
concludes the proof of Lemma \ref{lm10}. 

\end{appendices}

\bibliographystyle{IEEEtran}
\bibliography{reference}	
 
\end{document}